\pdfoutput=1
\documentclass[10pt, doublecolumn]{IEEEtran}

\usepackage{cite}
\usepackage{indentfirst}
\usepackage{graphicx}
\usepackage{changepage}
\usepackage{stfloats}
\usepackage{amsfonts,amssymb}
\usepackage{bm}
\usepackage{algorithm}
\usepackage{algorithmic}
\usepackage{amsmath}
\usepackage{amsmath,amssymb,amsfonts}
\usepackage{times}
\usepackage{url}
\usepackage{cite}
\usepackage{textcomp}
\usepackage{xcolor}
\usepackage{color}
\usepackage{setspace}
\usepackage{enumitem}
\usepackage{float}
\usepackage{diagbox}
\usepackage[T1]{fontenc}
\usepackage[utf8]{inputenc}
\usepackage{authblk}
\usepackage{threeparttable}
\usepackage{booktabs}
\usepackage{footnote}
\usepackage{graphicx}
\usepackage{pifont}
\usepackage{subfigure}
\usepackage{mathrsfs}
\usepackage{bbm}
\usepackage{dsfont}
\usepackage{colortbl}

\allowdisplaybreaks[4]

\addtolength{\textwidth}{.18 in}%
\addtolength{\evensidemargin}{13in}%
\bibliographystyle{IEEEtran}
\newenvironment{proof}{{\indent  \indent \it Proof:}}{\hfill $\blacksquare$}

\begin{document}
\title{{Network-level ISAC: An Analytical Study of Antenna Topologies Ranging from Massive to Cell-Free MIMO}}

\author{
	Kaitao Meng, \textit{Member, IEEE}, Kawon Han, \textit{Member, IEEE}, Christos Masouros, \textit{Fellow, IEEE}, and Lajos Hanzo, \textit{Life Fellow, IEEE} 
	\thanks{Preliminary versions of this paper were presented at the Conference IEEE WCNC 2025  \cite{Meng2025WCNC}.}
	\thanks{K. Meng, K. Han, and C. Masouros are with the Department of Electronic and Electrical Engineering, University College London, London, UK (emails: \{kaitao.meng, kawon.han, c.masouros\}@ucl.ac.uk). L. Hanzo is with School of Electronics and Computer Science, University of Southampton, SO17 1BJ Southampton, UK (email: lh@ecs.soton.ac.uk) 
}}

\maketitle

\begin{abstract}
A cooperative architecture is proposed for integrated sensing and communication (ISAC) networks, incorporating coordinated multi-point (CoMP) transmission along with multi-static sensing. We investigate how the allocation of antennas-to-base stations (BSs) affects cooperative sensing and cooperative communication performance. More explicitly, we balance the benefits of geographically concentrated antennas in the massive multiple input multiple output (MIMO) fashion, which enhance beamforming and coherent processing, against those of geographically distributed antennas towards cell-free transmission, which improve diversity and reduce service distances. Regarding sensing performance, we investigate three localization methods: angle-of-arrival (AOA)-based, time-of-flight (TOF)-based, and a hybrid approach combining both AOA and TOF measurements, for critically appraising their effects on ISAC network performance. Our analysis shows that in networks having \( N \) ISAC nodes following a Poisson point process, the localization accuracy of TOF-based methods follows a \( \ln^2 N \) scaling law (explicitly, the Cramér-Rao lower bound (CRLB) reduces with \( \ln^2 N \)). The AOA-based methods follow a \( \ln N \) scaling law, while the hybrid methods scale as \( a\ln^2 N + b\ln N \), where \( a \) and \( b \) represent parameters related to TOF and AOA measurements, respectively. The difference between these scaling laws arises from the distinct ways in which measurement results are converted into the target location. Specifically, when converting AOA measurements to the target location, the localization error introduced during this conversion is inversely proportional to the distance between the BS and the target, leading to a more significant reduction in accuracy as the number of transceivers increases. In contrast, TOF-based localization avoids such distance-dependent errors in the conversion process.
In terms of communication performance, we derive a tractable expression for the communication data rate, considering various cooperative region sizes and antenna-to-BS allocation strategy. It is proved that higher path loss exponents favor distributed antenna allocation to reduce access distances, while lower exponents favor centralized antenna allocation to maximize beamforming gain. Simulations confirm that cooperative transmission and sensing in ISAC networks can effectively improve non-cooperative sensing and communication performance The proposed cooperative scheme shows superior performance improvement compared to centralized or distributed antenna allocation strategies.
\end{abstract}   

\begin{IEEEkeywords}
	Integrated sensing and communication, multi-cell networks, network performance analysis, stochastic geometry, antenna allocation, cooperative sensing and communication. 
\end{IEEEkeywords}

\newtheorem{thm}{\bf Lemma}
\newtheorem{remark}{\bf Remark}
\newtheorem{Pro}{\bf Proposition}
\newtheorem{theorem}{\bf Theorem}
\newtheorem{Assum}{\bf Assumption}
\newtheorem{Cor}{\bf Corollary}

\section{Introduction}

Given the increasing challenges of spectrum scarcity and potential interference between separate sensing and communication (S\&C) systems, integrated sensing and communication (ISAC) technologies have garnered substantial academic and industrial interest \cite{Meng2025WCNC, Zhang2021OverviewSignal, Liu2022SurveyFundamental, Meng2023SensingAssisted}. ISAC is recognized for its ability to leverage unified infrastructure and waveforms to simultaneously transmit information and receive echoes, thereby significantly enhancing the spectrum-, cost-, and energy-efficiency \cite{Cui2021Integrating}. Recently, the international telecommunication union (ITU) identified ISAC as one of the six key usage scenarios for the forthcoming sixth-generation (6G) networks. While current research is primarily focused on link-level and system-level optimization, such as waveform design and resource management within individual base stations (BSs) \cite{Ouyang2022Performance, Liu2022Integrated, Meng2023Throughput, Liu2023DistributedUnsupervised, Hua20243DMultiTarget}, the broader opportunities of network-level ISAC, particularly multi-cell S\&C cooperation, have not been widely explored \cite{meng2024integrated}.

Network-level ISAC presents distinct advantages over conventional single-cell ISAC, including expanded coverage, enhanced service quality, more flexible performance tradeoffs, and the ability to gather richer target information \cite{Li2023Towards, Shin2017CoordinatedBeamforming, DeSaintMoulin2023Cooperative}. Specifically, with the exploitation of the target-reflected signals over both the monostatic links (BS-to-target-to-originated BS links) and the multistatic links (BS-to-target-to-other BSs links), the sensing capabilities of ISAC can be maximized through multi-static sensing configurations formed by several cooperative BSs. Additionally, advanced coordinated multi-point (CoMP) transmission and reception techniques can be employed for mitigating inter-cell interference, enhancing communication performance by connecting a single user to multiple BSs for reliable connection and improved throughput \cite{Hosseini2016Stochastic}. The strategic integration of cooperative sensing and communication techniques within ISAC networks offers substantial potential to enhance and dynamically balance the S\&C performance. Some early studies have explored network-level trade-offs between sensing and communication \cite{Meng2024CooperativeTWC, salem2022rethinking}, focusing on aspects such as waveform design and optimizing cooperative cluster sizes. For instance, in \cite{meng2023network}, coordinated beamforming was employed to mitigate interference in ISAC networks, providing valuable insights into the optimal allocation of spatial resources. However, most existing research neglects to explore the rationale behind the specific antenna configurations of ISAC networks \cite{Meng2024CooperativeTWC, salem2022rethinking, meng2023network}, often assuming fixed setups without examining how the antenna-to-BS allocation affects the overall network performance.

In ISAC networks, optimal antenna-to-BS allocation, represented by the number of antennas per site, plays a critical role in maximizing the cooperative gains for both sensing and communication, since these two functions have fundamentally different requirements for their antenna configurations.
Typically, the antenna-to-BS allocation strategies fall into two main categories, namely centralized and distributed configurations. Centralized multiple input multiple output (MIMO) systems simplify deployment and reduce costs by concentrating antennas in a single location within the service region, such as conventional cellular MIMO networks\cite{Overview2017Rappaport}. However, this approach is prone to high spatial channel correlation, particularly at millimeter-wave frequencies, leading to significant performance erosion. By contrast, distributed MIMO configurations, where the antennas are dispersed across various locations, can mitigate channel correlation and enhance system performance by providing improved spatial diversity and reducing targets/users access distances; a prominent example is the cell-free MIMO system \cite{Ngo2017CellFree}. The primary drawback of distributed MIMO systems is the challenge of achieving and maintaining precise synchronization across all antennas, which is crucial for improving the coherent processing gain of ISAC networks. These synchronization requirements can offset the performance gains and present significant obstacles to fully realizing the potential of distributed ISAC networks due to their high overhead and design complexity. Upon evolving from communication-only to ISAC networks, the traditional antenna configuration strategies tailored for communication may not directly apply to the ISAC paradigm \cite{Koyuncu2018Performance}, especially for various target information measurements, such as angle, distance, and velocity. Consequently, this shift requires innovative ISAC cooperation approaches, focusing on optimizing antenna resource allocation to accommodate the distinct demands of sensing.

Building on the previous discussions, we propose a cooperative ISAC scheme that combines the benefits of both centralized and distributed antenna allocation strategies, for carefully allocating antenna resources and for balancing the spatial diversity with the number of antennas per site. Specifically, concentrating more antennas at selected locations enhances beamforming gain and coherent processing but requires a reduction in BS density and increases the average service distance. Conversely, distributing antennas across multiple locations improves geometric diversity, allowing for enhanced sensing and communication services over shorter distances.
In the literature, some related work proposed to optimize antenna configurations based on the specific location of users/targets \cite{Cai2015Deployment, Radmard2014Antennaplacement, Gorji2014Optimal}, thereby increasing system throughput. However, in practice, user locations, target positions, and channel conditions are unpredictable, requiring antenna-to-BS allocation strategies that account for the randomness in user, target, and BS locations, as well as channel fading variability. These factors make it challenging to precisely characterize the relationship between antenna-to-BS allocation and the resulting sensing as well as communication performance.

\begin{table*}[t]
	\centering
\footnotesize
\caption{Comparison of Related Works on ISAC Design Considerations}

	\begin{tabular}{|l|c|c|c|c|c|c|}
		\hline
		{Existing works} & {[25], [26]} & {[27]} & [28] & [24]  & [23] &  {\textbf{This work}} \\ \hline
		{AOA Measurement} &  &  & $\checkmark$ &  & & $\checkmark$ \\  \hline
		{TOF Measurement} &  & $\checkmark$ &  &  & & $\checkmark$\\  \hline
		{Antenna Allocation} &  &  &  & $\checkmark$ & $\checkmark$ & $\checkmark$ \\  \hline
		{Cooperative Sensing} &  & $\checkmark$  & $\checkmark$ & $\checkmark$ & & $\checkmark$ \\  \hline
		{Cooperative Communication} & $\checkmark$  &  &  &   & $\checkmark$ & \textbf{$\checkmark$} \\
		\hline 
		{Antenna Topologies Analysis} &  &  &  &   &  & \textbf{$\checkmark$} \\
		\hline 
		{Scaling Law Derivation} &  &  &  &   &  & \textbf{$\checkmark$} \\
		\hline 
\end{tabular}
\label{tab:comparison}
\end{table*}

To handle the above issue, stochastic geometry offers a powerful mathematical framework for analyzing multi-cell wireless sensing and communication networks \cite{demir2021foundations}. For instance, the framework proposed in \cite{Andrews2011TractableApproach} provides insights into average data rate and coverage probability in multi-cell communication networks. In \cite{wiame2023joint}, the statistics of data rate and incident power density in user-centric cell-free networks are analyzed using stochastic geometry, deriving useful performance metrics and joint distribution bounds. As a further advance, \cite{Schloemann2016Toward} examines a sensing metric based on the signal-to-interference-plus-noise ratio (SINR) to establish the relationship between sensing accuracy and the number of BSs transmitting signals with sufficient power for effective localization. Also, in \cite{Moulin2024JointCoverage}, the authors introduced novel performance metrics for analyzing full-duplex ISAC systems, capturing joint statistics of both functions and optimizing network parameters while considering mutual interference cancellation. In \cite{Ram2022Frontiers}, the BS serves as a dual functional transmitter that supports both S\&C functionalities in a time division manner, where during the search interval, the radar scans the entire angular search space to find the maximum number of mobile users.
Furthermore, it is noteworthy that the assessment of the sensing performance relying on metrics like the SINR or mutual information overlooks the position geometry of the cooperating BSs \cite{meng2023network, Schloemann2016Toward, Lone2018Statistical}. It is noted that the analysis of sensing performance using other metrics from estimation theory, such as the Cramér-Rao lower bound (CRLB) that accounts for the node geometry \cite{Eldar2006Uniformly,  Lu2024Integrated}, is essential as it effectively links the time-of-flight (TOF) and angle-of-arrival (AOA) \cite{He2022PerformanceAOA} measurements with the estimated location. However, describing network performance based on the CRLB is challenging, as it involves complex operations, such as expectations over matrix inversions.

In this treatise, we propose a cooperative ISAC scheme, as shown in Fig.~\ref{figure1}, where multiple BSs within the cooperative communication region cooperatively transmit the same information data to the served user, while another set of BSs within the cooperative sensing region collaborate with the objective of offering localization services for each target. By strategically integrating CoMP-based joint transmission with multi-static sensing, this scheme aims for striking a tradeoff between sensing and communication performance at the network level.
In this work, we investigate three different target localization methods: AOA-based, TOF-based, and a hybrid of AOA and TOF based localization, for comprehensively assessing the impact of antenna-to-BS allocation on cooperative sensing and communication performance in ISAC networks. Based on this, we reveal the performance of different antenna configuration strategies and their corresponding scaling laws, as summarized in Table \ref{Table2}.
In contrast to the most relevant studies without antenna-to-BS allocation design \cite{Meng2024CooperativeTWC, salem2022rethinking}, in this work, both the number of antennas per site and the power allocation of S\&C are optimized for further improving the whole network's performance and achieving a more flexible tradeoff between the S\&C performance at the network level. 
{\textbf{The main contributions of this paper are summarized as follows:}}
\begin{itemize}[leftmargin=*]
	\item {\textit{We propose a cooperative ISAC network that integrates multi-static sensing with CoMP data transmission.}} Our approach includes a localization method that exploits AOA and TOF measurements. By leveraging our model and stochastic geometry tools, we offer analytical insights into the performance of both sensing and CoMP, highlighting key dependencies related to antenna-to-BS allocation in ISAC networks.
	\item By analyzing the scaling laws of network localization schemes, we find that, given that $N$ ISAC BSs are employed, TOF-based methods follow a scaling law of \( \ln^2 N \), where the CRLB reduces with \( \ln^2 N \). By contrast, the AOA-based methods follow a scaling law of \( \ln N \), and hybrid methods follow a scaling law of \( a\ln^2 N +  b\ln N\), where $a$ and $b$ denote the parameters related TOF measurements and AOA measurements, respectively. The primary difference stems from the fact that converting AOA measurements to the target position introduces additional distance-related variables, resulting in a smaller scaling law for AOA-based methods. However, a hybrid localization approach that combines both TOF and AOA measurements can significantly enhance localization performance, especially when the number of ISAC BSs is relatively small. This is achieved by fully leveraging the strengths of localization methods based on AOA and TOF measurements.
	\item We derive the effective channel gain and the Laplace transform of both the useful signals and inter-cell interference by utilizing the moment-generating functions. Based on this, we establish a tractable expression for the communication data rate for various antenna-to-BS allocation strategies and cooperative region sizes. Additionally, we determine the optimal antenna-to-BS allocation strategy of special cases, showing that larger path loss exponents favor distributed allocation for reducing access distances, while smaller exponents favor centralized allocation to maximize beamforming gain. 
	\item A performance boundary optimization problem is studied for ISAC networks, and
	we verify that cooperative transmission and sensing in ISAC networks can effectively improve the S\&C gain and strike a more flexible tradeoff between the S\&C performance. Moreover, it is revealed that when provided with more antenna resource blocks, the proposed cooperative scheme exhibits a more substantial performance improvement than fully centralized or fully distributed allocation schemes.
\end{itemize}

Notation: $B(a,b,c) = \int_0^a t^{b-1} (1-t)^{c-1}dt$ is the lower incomplete Beta function, and $\bar B(a,b,c) = \int_a^1 t^{b-1} (1-t)^{c-1}dt$ is the upper incomplete Beta function. Lower-case letters in bold font will denote deterministic vectors. For instance, $X$ and ${\bf{X}}$ denote a one-dimensional (scalar) random variable and a random vector (containing more than one element), respectively. Similarly, $x$ and ${\bf{x}}$ denote scalar and deterministic vectorial values, respectively. ${\rm{E}}_{x}[\cdot]$ represents statistical expectation over the distribution of $x$, and $[\cdot]$ represents a variable set. ${\cal{C}}(0,r)$ denotes the circle region with center at the origin and radius $r$. The frequently used parameters and variables are give in Table \ref{Notation}.

\section{System Model}

\begin{figure}[t]
	\centering
	\footnotesize
	\includegraphics[width=8.2cm]{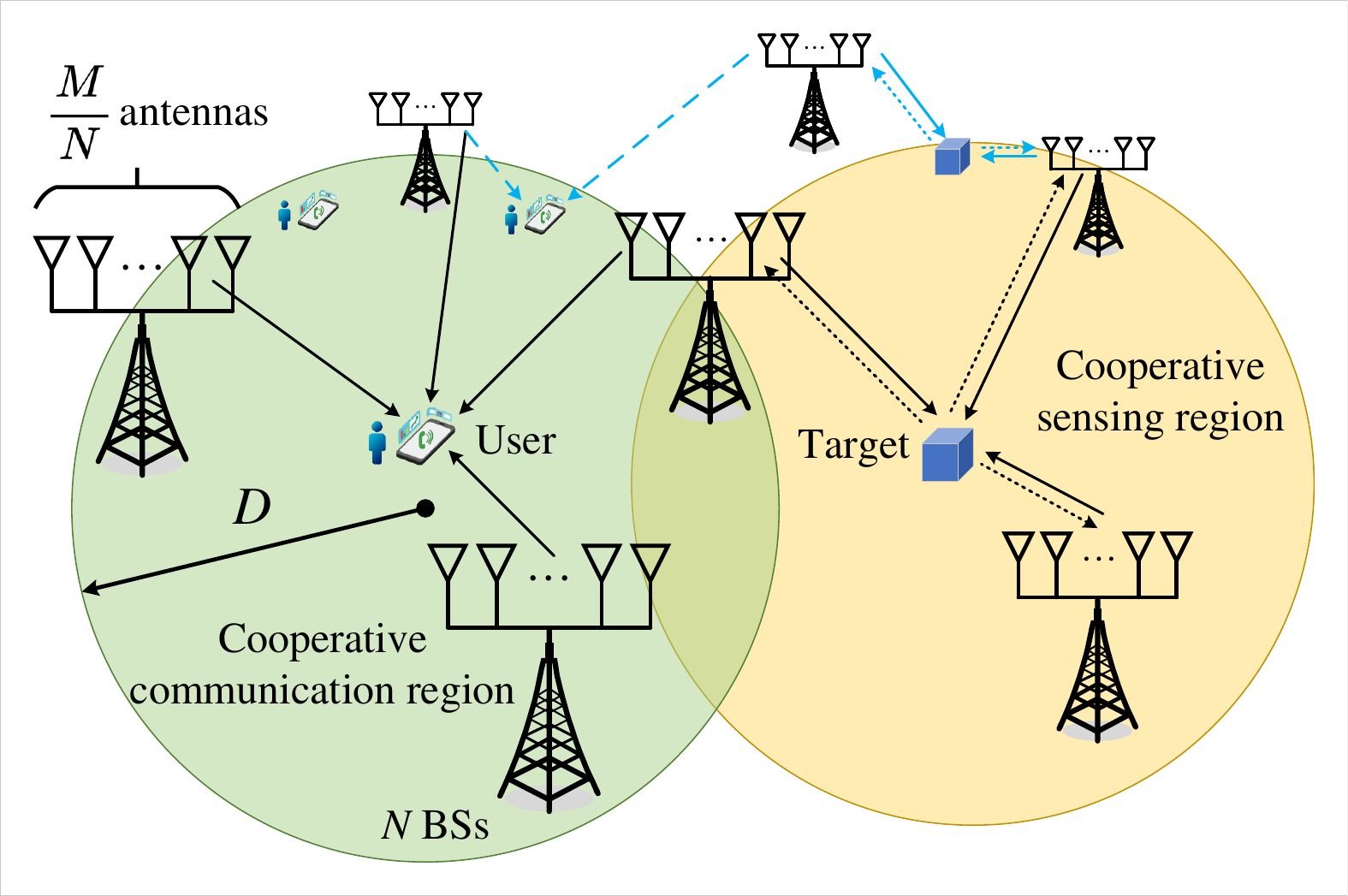}
	\vspace{0mm}
	\caption{Illustration of antenna-to-BS allocation in cooperative ISAC networks with optimized BS density (Blue line refers to other time-frequency resources).}
	\label{figure1}
\end{figure}

\begin{table}[]
	\centering
	\footnotesize
	\caption{Localization Method comparisons} 
	\label{Table2}
	\begin{tabular}{
			>{\columncolor[HTML]{CBCEFB}}l 
			>{\columncolor[HTML]{EFEFEF}}l 
			>{\columncolor[HTML]{CBCEFB}}l 
			>{\columncolor[HTML]{EFEFEF}}l }
		\hline
		{\color[HTML]{333333} \textbf{\begin{tabular}[c]{@{}l@{}} Localization\\ Method\end{tabular}}} & {\color[HTML]{333333} \textbf{\begin{tabular}[c]{@{}l@{}}Time \\ Synchronization\end{tabular}}} & {\color[HTML]{333333} \textbf{\begin{tabular}[c]{@{}l@{}}Accuracy \\ factors\end{tabular}}}                                               & {\color[HTML]{333333} \textbf{\textbf{\begin{tabular}[c]{@{}l@{}}CRLB  \\ Scaling law$^\star$\end{tabular}}}} \\  \hline
		{\color[HTML]{333333} \begin{tabular}[c]{@{}l@{}}Angle\\ measurement\end{tabular}}            & {\color[HTML]{333333} No}                                                                      & {\color[HTML]{333333} \begin{tabular}[c]{@{}l@{}}Array aperture size \\ 
				Array orientation\end{tabular}} & {\color[HTML]{333333} $\frac{1}{{\ln N}}$} \\
		\hline 
		{\color[HTML]{333333} \begin{tabular}[c]{@{}l@{}}Range\\ measurement\end{tabular}}            & {\color[HTML]{333333} Yes}                                                                     & {\color[HTML]{333333} Bandwidth}                                                    & { \color[HTML]{333333} $\frac{1}{\ln^2 N}$}        \\
		\hline
		{\color[HTML]{333333} \begin{tabular}[c]{@{}l@{}}Hybrid\\ measurement\end{tabular}}            & {\color[HTML]{333333} Yes}                                                                     & {\color[HTML]{333333} All above}                                                   & { \color[HTML]{333333} $\frac{1}{a \ln^2 N + b \ln N}$}        \\ \hline
		\noalign{\vskip 2mm}   
		\hline
	\end{tabular}
	\begin{tablenotes}
		\footnotesize
		\item $^\star$ $N$ denotes the number of BSs in the cooperative sensing cluster under PPP distribution. $a$ and $b$ denote the parameters of TOF measurements and AOA measurements, respectively.
	\end{tablenotes}
\end{table}

\subsection{Network Model}

As shown in Fig. \ref{figure1}, BSs inside the cooperative region, defined by the circle centered at the target/user with radius $D$, form a cluster for cooperative sensing and communication. Specifically, for communication cooperation, each user is served by multiple BSs within this region, where the BSs transmit identical signals to the user by forming a CoMP cluster, thereby enhancing the received signal power through constructive signal superposition. Similarly, for target localization, BSs within the target-defined cooperation region collaborate as a distributed multi-static MIMO radar system, employing a code-division multiplexing scheme to maintain orthogonality among transmitted waveforms, ensuring accurate target localization.
Given the significant cost and complexity associated with phase-level synchronization required for coherent processing, our design prioritizes a more practical non-coherent approach. In particular, BS clusters adopt non-coherent joint transmission for communication and non-coherent MIMO radar processing for sensing. This choice strikes a balance between performance and feasibility, enabling effective cooperative operations while mitigating the synchronization challenges typically encountered in large-scale distributed deployments.

\begin{table}[t]
	\small
	\caption{Important notations and symbols.} 
	\label{Notation}	
	\centering
	\begin{tabular}{ll}%
			\hline
			$\rm{\textbf{Notation}}$  & $\rm{\textbf{ Physical meaning}}$ \\
			\hline 
			${\bf{F}}_{\rm{A}}$        &  FIM for AOA-based localization \\
			\hline 
			${\bf{F}}_{\rm{R}}$        &  FIM for TOF-based localization \\	
			\hline 
			${\bf{F}}_{\rm{H}}$        &  FIM for hybrid AOA and TOF localization \\
			\hline 
			$\tilde {\bf{F}}_{\rm{A}}$ &  FIM only with geometry factors for AOA localization \\	
			\hline 
			$\tilde {\bf{F}}_{\rm{R}}$ &  FIM only with geometry factors for TOF localization \\		
			\hline 
			$\tilde {\bf{F}}_{\rm{H}}$ &  FIM only with geometry factors for hybrid localization \\
			\hline  
			${\mathbf{W}}_i$               &  Beamforming matrix of BS $i$  \\
			\hline  
			${\mathbf{s}}_i$              &  Combined sensing and communication signals of BS $i$ \\
			\hline  
			${\mathbf{x}}_i$              &  transmitted signals after precoding of BS $i$ \\
			\hline  
			${{s}}^c_i$              &  Communication signals of BS $i$ \\
			\hline  
			${{s}}^s_i$              &  Sensing signals of BS $i$ \\
			\hline 
			$p^s$                     &  Transmit power of sensing signal \\
			\hline  
			$p^c$                     &  Transmit power of communication signal \\
			\hline  
			$M_{\rm{t}}$, $M_{\rm{r}}$ &  Numbers of transmit and receive antennas per BS \\
			\hline  
			$\theta_i$ &  Angle of bearing for i-th BS to the target\\
			\hline  
			$\alpha$                   &  Pathloss factor of communication channel\\
			\hline  
			$\beta$                   &  Pathloss factor of communication channel\\
			\hline  
			$R_c$                     &  Average data rate  \\
			\hline  	
	\end{tabular}
\end{table}

In this study, we explore the optimal strategy for allocating antennas to BSs in cooperative ISAC networks, given the constraints on antenna resources. Allocating multiple antennas in a single array can enhance beamforming gain and coherent processing gain for sensing and communication. Conversely, the allocation of antennas in a distributed manner can enhance macro-multiplexing gain by enabling S\&C service delivery at a closer distance and can improve target localization accuracy leveraging the spatial diversity of distributed MIMO radar. Thus, we design a novel antenna-to-BS allocation scheme to strike a fundamental tradeoff between coherent processing gain, macro-multiplexing gain, and geometric diversity gain in cooperative ISAC networks. 

Given a fixed total number of antennas, we aim for maximizing the cooperative S\&C performance by optimizing the BS density and the number of antennas allocated at each BS, as shown in Fig. \ref{figure1}. Specifically, we define antenna density and BS density as the average number of antennas per km$^2$ and the average number of BS per km$^2$, respectively. Then, given the transmit antenna density $\lambda_t$ and receive antenna density $\lambda_r$, assuming each BS is deployed with a uniform linear array, the BS density to be optimized is denoted by $\lambda_{b} = \frac{\lambda_{t}}{M_t} = \frac{\lambda_{r}}{M_r}$, where \(M_t\) and \(M_r\) represent the number of transmit and receive antennas per BS. With the fixed antenna density, increasing the number of antennas per BS will reduce the overall BS density. For example, when $M_t = M_r = 1$, the antennas are allocated in a distributed manner. By contrast, given an area $|A|$, when all antennas in this area are deployed at a certain location, we have a centralized allocation. 
It is assumed that the locations of BSs follow a homogeneous Poisson point process (PPP) in a two-dimensional (2D) space, denoted by $\Phi_b$, where PPP offers analytical tractability and realistic modeling of spatial randomness in cellular networks. Here, $\Phi_b = \{ {\bf{d}}_i = [x_i,y_i]^T \in \mathbb{R}^2, \forall i \in \mathbb{N}^+ \}$, where $\mathbf{d}_i$ respectively represents the location of BS $i$.


Each BS designs the transmit precoding (TPC) for sending the information signal $s_i^c(t)$ to the served user, together with a dedicated radar signal $s^{s}_i(t)$ for the detected target, where  the variable $t$ represents time instant.\footnote{Service requests are sent by users and targets to nearby BSs. When a BS is connected to multiple users and targets, it is assumed that they are scheduled to different orthogonal time/frequency resource blocks. This ensures that each BS serves at most one user and one target in a given time/frequency slot. While this simplified scheme is not optimal, it facilitates the derivation of tractable expressions.} This is consistent with the assumptions in \cite{Liu2020JointTransmit, Hua2023Optimal}, $\mathrm{E}[s^s_i(t) (s^c_i)^H(t)] = 0$. In the following discussion, we omit $(t)$ from the S\&C signal notation for simplicity.  Upon letting ${\mathbf{s}_i=\left[s^c_i, s^s_i\right]^T}$, we have $\mathrm{E}\left[\mathbf{s}_i \mathbf{s}_i^H\right]=\mathbf{I}_2$. Then, the signal transmitted by the $i$th BS is given by\footnote{If the BS only receives a request for communication services, the sensing component is omitted, and the transmitted signal simplifies to
$
{\bf{x}}_i = {\bf{w}}_i^c {\bf{s}}^{c}_i.
$
Similarly, if the BS only receives a request for sensing services, the transmitted signal becomes
$
{\bf{x}}_i = {\bf{w}}_i^s {\bf{s}}^{s}_i.
$}
\vspace{0mm}
\begin{equation}\label{TrasmitSignals}
	{\mathbf{x}}_i = \mathbf{W}_i \mathbf{s}_i =  {\bf{w}}^c_i s^c_i +   {\bf{w}}^s_i s^s_i,
	\vspace{0mm}
\end{equation}
where ${\bf{w}}^c_i$ and ${\bf{w}}^s_i \in {\mathbb {C}}^{M_{\mathrm{t}} \times 1}$ are normalized transmit beamforming vectors, i.e., $\|{\bf{w}}^c_i \|^2 = p^c$ and $\|{\bf{w}}^s_i\|^2 = p^s$. Here, $p^s$ and $p^c$ respectively represent the transmit power of the S\&C signals, and $\mathbf{W}_i=\left[\mathbf{w}^c_i, \mathbf{w}^s_i\right] \in {\mathbb{C}}^{M_{\mathrm{t}} \times 2}$ is the TPC matrix of the BS at $\mathbf{d}_i$. To avoid the interference between S\&C, we adopt zero-forcing (ZF) beamforming for the sake of making the analysis tractable. Then, the beamforming matrix of the serving BS $i$ is given by 
\vspace{0mm}
\begin{equation}\label{TransmitBeamforming}
	{\bf{W}}_i = {\tilde {\mathbf{W}}}_i  \left(\sqrt{\operatorname{diag}\left({\tilde {\mathbf{W}}}_i^H {\tilde {\mathbf{W}}}_i\right)}\right)^{-1},
	\vspace{0mm}
\end{equation}
where ${\tilde {\mathbf{W}}}_i = {{\bf{H}}^H_i}{\left( {\bf{H}}_i  {\bf{H}}_i^H \right)^{-1}}$ and $\mathbf{H}_i = 
\begin{bmatrix}
	\mathbf{h}_{i,c}^H \\[0.5ex]
	\mathbf{a}^H(\theta_i)
\end{bmatrix}
\;\in\;\mathbb{C}^{2\times M_t}$. Here, ${\bf{h}}^H_{i,c} \in \mathbb{C}^{1 \times M_{\mathrm{t}}}$ denotes the communication channel 
spanning from BS $i$ to the typical user, and ${{\bf{a}}^H}(\theta_i ) \in \mathbb{C}^{1 \times M_{\mathrm{t}}}$ represents the sensing channel impinging from BS $i$ to the typical target. We have $p^s + p^c = 1$ with normalized transmit power. Upon using ZF beamforming, inter-cell communication interference is effectively mitigated because all BSs within the cooperative cluster provide service to the same user, while avoiding the use of sensing beams directed at this served communication user.

\subsection{Cooperative Sensing Model}
\label{CooperativeSensing}
We aim for exploring the optimal antenna-to-BS allocation method by examining the scaling laws of target localization techniques that rely on AOA measurements, TOF measurements, and a combination of both, respectively.
The location of a typical target is denoted as ${\bm{\psi}}_t = [x_t, y_t]^T$.\footnote{According to Slivnyak's theorem \cite{Andrews2011TractableApproach, mukherjee2014analytical}, the typical target is assumed to be located at the origin. Its performance is evaluated to determine the average performance of all targets across the network, using the probability distribution function of the distances from the BSs to this origin. Similarly, the average communication performance of a typical user located at the origin is assessed by analyzing the distance from the typical user to the BSs based on their location distribution.} Assuming unbiased estimations, the CRLB serves as a benchmark for theoretical localization accuracy in terms of the mean squared error (MSE), which can be expressed as
\vspace{0mm}
\begin{equation}	
	{\rm{var}}\{\hat{{\bm{\psi}}}_t\} = {\rm{E}} \{| \hat{{\bm{\psi}}}_t - {{\bm{\psi}}}_t |^2\} \ge \mathrm{CRLB},
	\vspace{0mm}
\end{equation}
where $\hat{{\bm{\psi}}}_t=\left[\hat{x}_t, \hat{y}_t\right]^T$ represents the estimated location of the typical target. The typical target is collaboratively sensed by $N$ BSs. Let us assume that the transmitted radar signals $\{s_i^s\}_{i=1}^N$ of the BSs in the cooperative sensing cluster are approximately orthogonal for any time delay of interest \cite{li2008mimo}. 
The base-band equivalent of the impinging signal at receiver $j$ is represented as 
\begin{equation}\label{SensingChannel}
	\begin{aligned}
		{{\bf{y}}_{j}}(t) =& \sum\nolimits_{i = 1}^N \sigma \underbrace {{{d}_j^{ - \frac{\beta}{2} }}{\bf{b}}\left( {{\theta _j}} \right){{d}_i^{ - \frac{\beta}{2} }}{{\bf{a}}^H}\left( {{\theta _i}} \right)}_{{\text{target channel}}}{{\bf{W}}_i} {{\bm{s}}_i}(t - \tau_{i,j}) \\
		&+ \underbrace {\sum\nolimits_{i \in {\Phi_b}} { {{\bf{H}}_{i,j}} \mathbf{W}_i \mathbf{s}_i(t - \tilde{\tau}_{i,j}) }}_{{\text{inter-cluster interference}}} +  {\bf{n}}_l(t),
	\end{aligned}
\end{equation}
where $d_i = {\left\| {{{\bf{d}}_i}} \right\|}$ denotes the distance from BS $i$ to the origin, $\beta \ge 2$ is the pathloss exponent between the serving BS and the typical target.\footnote{In general, sensing relies exclusively on line-of-sight (LoS) channels, i.e., the reflection signal from NLoS channel is useless for sensing \cite{Liu2022SurveyFundamental}; while communication is modeled by circularly symmetric Gaussian distributions that capture both LoS and NLoS conditions, leading to fundamentally different channel coefficients.} Furthermore, $\sigma$ denotes the radar cross section (RCS), $\tau _{i,j}$ is the propagation delay of the link spanning from BS $i$ to the typical target and then to BS $j$, and $\tilde \tau _{i,j}$ denotes the propagation delay of the link from BS $i$ to BS $j$. In (\ref{SensingChannel}), ${\bf{H}}_{i,j}$ denotes the channel from BS $i$ to BS $j$. Finally, the term ${\bf{n}}_l(t)$ is the additive complex Gaussian noise having zero mean and covariance matrix $\sigma_s^2 {\bf{I}}_{M_{\rm{r}}}$. In (\ref{SensingChannel}), we have ${{\bf{a}}^H}(\theta_i ) = [1, \cdots, e^{ {j \pi(M_{\mathrm{t}}-1)  \cos(\theta_i) }}]$, and ${\bf{b}}(\theta_j ) = [1, \cdots, e^{ {j \pi(M_{\mathrm{r}}-1)  \cos(\theta_j) }}]^T$, where $\theta_i$ denotes the angle of bearing for the $i$-th BS to the target with respect to the horizontal axis. The cooperative sensing model neglects interference from other targets’ echoes, as distinct spatial locations, angles, and propagation delays allow advanced filtering techniques, like matched and adaptive filtering, to effectively mitigate it, consistent with related works \cite{sadeghi2021target}.


\subsubsection{AOA Measurement based Localization}
For the AOA estimate at the receiver $j$, the covariance matrix can be given by
	$
	\mathbf{R}_y = \mathbb{E}\left\{\mathbf{y}_j(t)\mathbf{y}_j^H(t)\right\},
	$
	and we perform an eigen-decomposition to separate the signal and noise subspaces. The MUSIC algorithm can be applied to this decomposition to generate a pseudospectrum, where the peak corresponds to the estimated AOA \cite{Stoica1989MUSIC}. 
		By measuring the AOAs of each monostatic link and bi-static link, the target location can be estimated by maximum likelihood estimation (MLE) \cite{li1993maximum}. For the AOA measurement of the bi-static link from the $j$th BS to the target and then to the $i$th BS, we have
\begin{equation}\label{AngleEstimation}
	\hat {\theta}_{i,j} = \tan ^{-1} \frac{{y}_t-{y_i}}{{x}_t-{x_i}}+n^a_{i,j}.
\end{equation}
In (\ref{AngleEstimation}), $n^a_{i,j}$ denotes the AOA measurement error, and $n^a_{i,j} \sim \mathcal{N}\left(0, \rho_{i,j}^2\right)$, where $\rho_{i,j}^2 = \frac{6}{\pi^2  \cos ^2 \theta_i  M_r\left(M_r^2-1 \right) G_t \gamma_{i,j}}$ \cite{richards2005fundamentals} and $\gamma_{i,j} = \frac{\sigma p^s \gamma_0}{d_i^{\beta}d_j^{\beta}}$.  Here, $G_t$ is the transmit beamforming gain, and $\gamma_0$ represents the channel power at the reference distance of 1 m. Then, we transform $N^2$ AOA measurement links into the target location. The Jacobian matrices of the $N$ BS measurement errors, evaluated at the true target position ${\bm{\psi}}_t = [x_t, y_t]^T$, indicate
\begin{equation}
	\mathbf{J}_{\mathrm{A}}=\left[\begin{array}{cc}
		\frac{\partial \hat{\theta}_1}{\partial x_t} & \frac{\partial \hat{\theta}_1}{\partial y_t} \\
		\vdots & \vdots \\
		\frac{\partial \hat{\theta}_N}{\partial x_t} & \frac{\partial \hat{\theta}_N}{\partial y_t}
	\end{array}\right]=\left[\begin{array}{cc}
		\frac{-\sin \theta_1}{d_1} & \frac{\cos \theta_1}{d_1} \\
		\vdots & \vdots \\
		\frac{-\sin \theta_N}{d_N} & \frac{\cos \theta_N}{d_N}
	\end{array}\right]_{N \times 2}.
\end{equation}
Then, the Fisher information matrix (FIM) of estimating the parameter vector ${{\bm{\psi}}}_t$ for the AOA-based MIMO radar considered is equal to
\begin{equation}\label{FIMexpression_1}
	\begin{aligned}
		{\bf{F}}_{\rm{A}} \! & = \! {\tilde {\mathbf{J}}}_{\mathrm{A}}^T \mathbf{\Sigma}_{\mathrm{A}}^{-1} {\tilde {\mathbf{J}}}_{\mathrm{A}} \\
		& =\! |\zeta_a |^2 \!\sum\nolimits_{j = 1}^N \! {\sum\nolimits_{i = 1}^N \! {\frac{{{{\cos }^2}{\theta _i}}}{{d_j^2 d_i^2}} \! \left[\! {\begin{array}{*{20}{c}}
						{\!\frac{{{{\sin }^2}{\theta _i}}}{{d_i^2}}}&\!{ - \frac{{\sin {\theta _i}\cos {\theta _i}}}{{d_i^2}}}\\
						\!{ - \frac{{\sin {\theta _i}\cos {\theta _i}}}{{d_i^2}}}&\!{\frac{{{{\cos }^2}{\theta _i}}}{{d_i^2}}}
				\end{array}} \!\right]} } \!,
	\end{aligned}
\end{equation}
where ${\tilde {\mathbf{J}}}_{\mathrm{A}} = \left[ \mathbf{J}_{\mathrm{A}}^T, \cdots, \mathbf{J}^T_{\mathrm{A}} \right]^T \in {\mathbb {C}}^{N^2 \times 2}$, $\mathbf{\Sigma}_{\mathrm{A}} = {\rm{diag}}(\rho_{1,1}^2,\cdots,\rho_{i,j}^2,\cdots,\rho_{N,N}^2)  \in {\mathbb {C}}^{N^2 \times N^2}$, and $|\zeta_a |^2 = \frac{1}{6}\pi^2   M_r\left(M_r^2-1 \right) G_t \sigma p^s \gamma_0 / \sigma_s^2$ \cite{Liu2022SurveyFundamental}.
Given the random location of ISAC BSs, the expected CRLB for any unbiased estimator of the target position is given by
\vspace{0mm}
\begin{equation}\label{CRLBAgnleExpression}	
	\mathrm{CRLB}_{\rm{A}}= {\rm{E}}_{\Phi_b, G_t} \left[\operatorname{tr}\left( {\mathbf{F}}_{\mathrm{A}}^{-1}\right)\right].
	\vspace{0mm}
\end{equation}
In (\ref{CRLBAgnleExpression}), the expectation operation accounts for the randomness in the locations of sensing BSs and the variability in beam power caused by user channel fluctuations, thereby representing the average sensing performance bound across the entire network.

\subsubsection{TOF Measurement based Localization}

From transmitter $j$ to the target and then to receiver $i$, the term \( d_{i,j} \) represents the corresponding distance between the \( j \)th transmitter and the \( i \)th receiver, which is given by
\begin{equation}
	\begin{aligned}
		&\hat d_{i,j} (\tau_{i,j}) = \\
		& \sqrt{\!\left(x_i - x_t\right)^2 + \left(y_i - y_t\right)^2} +  \sqrt{\left(x_j - x_t\right)^2 + \left(y_j - y_t\right)^2} + n^r_{i,j},
	\end{aligned}
\end{equation}
where $\quad n^r_{i,j} \sim \mathcal{N}\left(0, \eta_{i,j}^2\right)$ and $\eta_{i,j}^2 = \frac{3 c^2 \sigma_s^2  }{2 \pi^2 G_t M_r B^2  \gamma_{i,j}}$. $c$ denotes the speed of light, $B^2$ represents the squared effective bandwidth, showing that the larger bandwidth offers the more accurate TOF estimation. For the TOF-based range estimation \(\hat{d}_{i,j}\), we apply matched filtering to the received signal to correlate it with a replica of the transmitted waveform. This process highlights peaks corresponding to time delays caused by targets, which are then converted into range estimates using the speed of light.
For signals emanating from the transmitter $i$, the Jacobian matrices of the $N$ receiver measurement errors evaluated at the true target position ${{\bm{\psi}}}_t$ are given by  
\begin{equation}
	\begin{aligned}
		{\bf{J}}_{\rm{R}}^T = \left[ {\begin{array}{*{20}{c}}
				{\frac{{\partial {d_{11}}}}{{\partial {x_t}}}}& \cdots &{\frac{{\partial {d_{i,j}}}}{{\partial {x_t}}}}& \cdots &{\frac{{\partial {d_{NN}}}}{{\partial {x_t}}}}\\
				{\frac{{\partial {d_{11}}}}{{\partial {y_t}}}}& \cdots &{\frac{{\partial {d_{i,j}}}}{{\partial {y_t}}}}& \cdots &{\frac{{\partial {d_{NN}}}}{{\partial {y_t}}}}
		\end{array}} \right],
	\end{aligned}
\end{equation}
where $\frac{\partial {d}_{i,j}}{\partial x_t} = \cos {\theta _i} + \cos {\theta _j}$ and $\frac{\partial {d}_{i,j}}{\partial y_t} = \sin {\theta _i} + \sin {\theta _j}$. Let ${a_{i,j}} = \cos {\theta _i} + \cos {\theta _j}$ and ${b_{i,j}} = \sin {\theta _i} + \sin {\theta _j}$.
Then, the FIM of estimating the parameter vector ${{\bm{\psi}}}_t$ for the TOF measurement radar considered is equal to \cite{sadeghi2021target}
\begin{equation}\label{FIMexpression}
	\begin{aligned}
		{\bf{F}}_{\mathrm{R}} =& \mathbf{J}_{\mathrm{R}}^T  \mathbf{\Sigma}_{\mathrm{R}}^{-1}  \mathbf{J}_{\mathrm{R}}   \\
		=& |\zeta_r |^2 \sum\nolimits_{i = 1}^N \sum\nolimits_{j = 1}^N {{d}}_i^{ - \beta }{{{d}}}_j^{ - \beta } {  {\left[ {\begin{array}{*{20}{c}}{a_{i,j}^2}&{{a_{i,j}}{b_{i,j}}}\\
						{{a_{i,j}}{b_{i,j}}}&{b_{i,j}^2}
				\end{array}} \right]} } ,
	\end{aligned}
\end{equation}
where $\mathbf{\Sigma}_{\mathrm{R}} = {\rm{diag}}(\eta_{1,1}^2,\cdots,\eta_{i,j}^2,\cdots,\eta_{N,N}^2)$.
In (\ref{FIMexpression}), we have \cite{sadeghi2021target} $|\zeta_r |^2 = \frac{8 \pi^2 p^s G_t M_r B^2 \sigma \gamma_0}{ 3c^2 \sigma_s^2}$, where $|\zeta_r |$ is the common system gain term.
Given the random location of ISAC BSs, the expected CRLB for any unbiased estimator of the target position is given by
\vspace{0mm}
\begin{equation}	
	\mathrm{CRLB}_{\rm{R}} = {\rm{E}}_{\Phi_b, G_t} \left[\operatorname{tr}\left( {\mathbf{F}}_{\rm{R}}^{-1}\right)\right].
	\vspace{0mm}
\end{equation}

\subsubsection{Joint AOA and TOF Localization}
Incorporating both AOA and TOF measurements, rather than relying solely on one type of AOA or TOF measurement, can significantly enhance the accuracy and reliability of the localization system, namely the hybrid localization method. It is assumed that the AOA estimation errors and the TOF estimation errors are uncorrelated.\footnote{If there exists a correlation between these estimation errors, e.g., wideband and large array aperture for near-field target sensing, the joint FIM should include off-diagonal terms capturing their interdependence, which would alter the CRLB expression. In such cases, equation (\ref{LowerCRB}) serves as a lower-bound approximation.} Using both AOA and TOF measurements, the expected CRLB for any unbiased estimator of the target position is given by
\begin{equation}\label{LowerCRB}
	\mathrm{CRLB}_{\rm{H}} = {\rm{E}}_{\Phi_b, G_t} \left[\operatorname{tr}\left( \left({\bf{F}}_{\mathrm{A}} +  {\bf{F}}_{\mathrm{R}}\right)^{-1} \right)\right].
\end{equation}

\subsection{Cooperative Communication Model}

We assume that the transmitters use non-coherent joint transmission, where the useful signals are combined by accumulating their powers. In this work, we employ a user-centric clustering approach, where the BS closest to the typical user sends collaboration requests to other BSs within a range \(D\) of the user. The signal received at the typical user is then given by
\vspace{0mm}
	\begin{align}\label{TransmitSignal}
		y_{c}=& \underbrace{\sum\nolimits_{i \in \Phi_c }d_i^{-\frac{\alpha}{2}} \mathbf{h}_{i}^H \mathbf{W}_i \mathbf{s}_i}_{\text{collaborative intended signal}} +\underbrace{\sum\nolimits_{{j \in \{\Phi_b \backslash \Phi_c\}}}d_j^{-\frac{\alpha}{2}} \mathbf{h}_{j}^H \mathbf{W}_j \mathbf{s}_j}_{\text{inter-cluster interference}}  \nonumber \\
		& + {n_{c}},
		\vspace{0mm}
	\end{align}
where $\alpha \ge 2$ is the pathloss exponent, $\mathbf{h}^H_{i} \sim \mathcal{C N}\left(0, \mathbf{I}_{M_{\mathrm{t}}}\right)$ is the channel vector of the link between the BS at $\mathbf{d}_i$ to the typical user, $\Phi_c$ is the cooperative BS set, and $n_{c}$ denotes the noise. In \eqref{TransmitSignal}, $\mathbf{s}_i$ denotes the transmitted signal vector from the cooperating BSs within the cluster (with the sensing signals nulled at the user due to ZF beamforming), while the second term accounts for interference from BSs outside the cooperation cluster. We focus on evaluating the performance of an interference-limited network within dense cellular scenarios. The impact of noise is disregarded in this analysis as the interference arriving from outside the cooperation region far exceeds the negligible noise power in our scenarios. The evaluation is based on the signal-to-interference ratio (SIR) \cite{Park2016OptimalFeedback}.
The SIR of the received signal at the typical user can be expressed as
\vspace{0mm}
\begin{equation}\label{SIRexpression}
	{\rm{SIR}}_c = \frac{ {\sum\limits_{i \in \Phi_c} {{g_i}{{d}_i^{ - \alpha }}} }}{{\sum\limits_{j \in \{\Phi_b \backslash \Phi_c\}}  {{g_j}} {{d}_j^{ - \alpha }}}},
	\vspace{0mm}
\end{equation}
where ${\sum\limits_{i \in \Phi_c} {{g_i}{{d}_i^{ - \alpha }}} }$ is the cooperative desired signal, ${\sum\limits_{j \in \{\Phi_b \backslash \Phi_c\}}  {{g_j}} {{d}_j^{ - \alpha }}}$ represents the inter-cluster interference. Finally, the interference channel's gain is $g_{j} = p^c\left|\mathbf{h}_{j}^H \mathbf{w}_j^c\right|^2 + p^s\left|\mathbf{h}_{j}^H \mathbf{w}_j^s\right|^2$, accounting for interference from both sensing and communication signals transmitted outside the cooperation region.
The average data rate of users is given by 
\vspace{0mm}
\begin{equation}
	R_c=\mathrm{E}_{\Phi_b,g_i}[\log (1+\mathrm{SIR}_c)].
	\vspace{0mm}
\end{equation}

\section{Sensing Performance Analysis}
\label{SensingSection}

In the following, we first analyze both AOA-based and TOF-based localization. Then, we further examine the performance of the hybrid localization method. 
To facilitate the analysis, we assume that the number of BSs within a range \(D\) from each target equals the average number of BSs in the area obeying the density of the PPP. This approximation is widely used and matches the law of large numbers in dense deployments. Specifically, the circle of radius \(D\) centered at a target contains exactly \(|\Phi_c| = \lambda_b \pi D^2\) of BSs. In simulations, we verify that this assumption greatly simplifies the derivation, while providing tight approximations.

\subsection{Angle Measurement Based Localization}

\subsubsection{Geometry Gain of Cooperative Sensing}

In practice, it is non-trivial to derive a tractable expression for the expected CRLB due to the complex operations involved, including matrix inversions. 
To this end, we first study the sensing gain brought by geometric diversity,\footnote{Geometric diversity refers to the improvement in localization accuracy achieved by positioning multiple BSs at different angles around a target, allowing for more reliable and complementary directional measurements.} where all ISAC BSs are assumed to be placed at the same distance from the target, i.e. $d_i = d_j$, $\forall i,j, \in {\cal{N}}$, and ${\cal{N}} = \{1,\cdots,N\}$. Then, by ignoring the effect of each link's signal strength, i.e., $|\zeta_a |$ and $d_i$, we analyze the cooperative localization performance improvement purely gleaned from the direction diversity of BSs. To this end, we define a new matrix based on the FIM by removing $|\zeta_a |$ and $d_i$ in (\ref{FIMexpression}), yielding
\vspace{-2mm}
\begin{equation}\label{FIMEquation}
	\tilde {\bf{F}}_{\mathrm{A}} \! =  \! \sum\nolimits_{j = 1}^N  \! {\sum\nolimits_{i = 1}^N  \! {{{\cos }^2}{\theta _i} \!\left[ {\begin{array}{*{20}{c}}
					 \!{{{\sin }^2}{\theta _i}}& \!{ - \sin {\theta _i}\cos {\theta _i}}\\
					 \!{ - \sin {\theta _i}\cos {\theta _i}}& \!{{{\cos }^2}{\theta _i}}
			\end{array}}  \!\right]} }  \!.
\end{equation}
Following the definition in \cite{guvenc2009survey}, let ${\rm{tr}}({ \tilde {\bf{F}}^{-1}_{\mathrm{A}}})$ be termed the geometric dilution of precision (GDoP) in AOA-based radar systems, which can be formulated as follows:
\vspace{0mm}
\begin{equation}\label{GDoPExpression}
	\begin{aligned}
		&{\rm{tr}}({\tilde {\bf{F}}}_{\mathrm{A}}^{-1}) = \\
		&\frac{\sum\nolimits_{i = 1}^N {\sum\nolimits_{j = 1}^N {e_{i}^2} }  + \sum\nolimits_{i = 1}^N {\sum\nolimits_{j = 1}^N {f_{i}^2} } }{\left(\! {\sum\nolimits_{i = 1}^N {\sum\nolimits_{j = 1}^N {e_{i}^2} } } \!\right)\!\left(\! {\sum\nolimits_{i = 1}^N {\sum\nolimits_{j = 1}^N {f_{i}^2} } } \right) \! - \! {\left( {\sum\nolimits_{i = 1}^N {\sum\nolimits_{j = 1}^N {{e_{i}}{f_{i}}} } } \!\right)^2}}.
		\vspace{0mm}
	\end{aligned}
\end{equation}
where $e_{i} = {{\sin }}{\theta _i} {{\cos }}{\theta _i}$ and $f_{i} = {{\cos }^2}{\theta _i}$. Then, we employ tight approximations to derive a simplified expression, aiming to provide an intuitive illustration of the benefits of  geometric diversity. 

\begin{Pro}\label{GDoPDerivation}
	The expected GDoP can be approximated as
	\vspace{0mm}
	\begin{equation}\label{GDoPExpression1}
		\begin{aligned}
			{\rm{E}}_{\theta}\left[ {{\rm{tr}}\left( {\tilde {\bf{F}}}_{\mathrm{A}}^{-1} \right)} \right] \approx  \frac{{32}}{{3N\left( {N - 1} \right)}}.
			\vspace{0mm}
		\end{aligned}
	\end{equation}
\end{Pro}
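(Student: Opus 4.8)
The plan is to first exploit the structure of $\tilde{\bf{F}}_{\mathrm{A}}$: every summand in (\ref{FIMEquation}) depends only on the index $i$, so the inner summation over $j$ contributes nothing but a factor of $N$. Writing $S_e = \sum_{i=1}^N e_i^2$, $S_f = \sum_{i=1}^N f_i^2$, and $S_{ef} = \sum_{i=1}^N e_i f_i$, the double sums in (\ref{GDoPExpression}) collapse as $\sum_i\sum_j e_i^2 = N S_e$ and so on, reducing the GDoP to
\[
{\rm{tr}}(\tilde{\bf{F}}_{\mathrm{A}}^{-1}) = \frac{1}{N}\,\frac{S_e + S_f}{S_e S_f - S_{ef}^2}.
\]
Taking the expectation over $\theta$ then amounts to evaluating the mean of a ratio of correlated random sums, which is the crux of the problem.

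Next I would invoke the PPP model (via Slivnyak's theorem) to treat the bearing angles $\{\theta_i\}$ as i.i.d. and uniform on $[0,2\pi)$, so that the pairs $(e_i,f_i)$ are i.i.d. across $i$. The central approximation is then to replace the expectation of the ratio by the ratio of expectations,
\[
{\rm{E}}_\theta\!\left[\frac{S_e + S_f}{S_e S_f - S_{ef}^2}\right] \approx \frac{{\rm{E}}[S_e + S_f]}{{\rm{E}}[S_e S_f - S_{ef}^2]},
\]
a delta-method step justified by the concentration of $S_e$ and $S_f$ about their $O(N)$ means for moderately large $N$.

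The remaining moment evaluations are elementary trigonometric averages. For the numerator, independence gives ${\rm{E}}[S_e+S_f] = N({\rm{E}}[e^2]+{\rm{E}}[f^2])$ with ${\rm{E}}[\sin^2\theta\cos^2\theta]=\tfrac18$ and ${\rm{E}}[\cos^4\theta]=\tfrac38$, hence ${\rm{E}}[S_e+S_f]=N/2$. For the denominator I would separate diagonal ($i=j$) and off-diagonal ($i\neq j$) contributions, obtaining ${\rm{E}}[S_e S_f] = N\,{\rm{E}}[e^2f^2] + N(N-1){\rm{E}}[e^2]{\rm{E}}[f^2]$ and ${\rm{E}}[S_{ef}^2] = N\,{\rm{E}}[(ef)^2] + N(N-1)({\rm{E}}[ef])^2$. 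Because $(ef)^2 = e^2 f^2$ and ${\rm{E}}[ef]={\rm{E}}[\sin\theta\cos^3\theta]=0$ by symmetry, the $O(N)$ diagonal terms cancel exactly, leaving ${\rm{E}}[S_e S_f - S_{ef}^2] = N(N-1){\rm{E}}[e^2]{\rm{E}}[f^2] = \tfrac{3}{64}N(N-1)$.

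Combining the pieces yields ${\rm{E}}_\theta[{\rm{tr}}(\tilde{\bf{F}}_{\mathrm{A}}^{-1})] \approx \frac{1}{N}\cdot\frac{N/2}{3N(N-1)/64} = \frac{32}{3N(N-1)}$, matching (\ref{GDoPExpression1}). I expect the main obstacle to be rigorously bounding the error of the ratio-of-expectations step rather than the algebra; the redeeming feature is the exact cancellation of the diagonal terms, which pins the denominator's leading order at $O(N^2)$ and thereby renders the delta-method correction lower-order in $1/N$.
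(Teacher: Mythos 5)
Your proposal is correct and follows essentially the same route as the paper: reduce the double sums to single sums, replace the expectation of the ratio by the ratio of expectations, and evaluate the trigonometric moments ($\mathrm{E}[e^2]=\tfrac18$, $\mathrm{E}[f^2]=\tfrac38$, $\mathrm{E}[ef]=0$). The only cosmetic difference is that the paper obtains $\mathrm{E}[S_eS_f-S_{ef}^2]=\tfrac{3}{64}N(N-1)$ via the Lagrange identity, writing the Gram determinant as $\sum_{i<j}(e_if_j-e_jf_i)^2$ with each squared minor having mean $\tfrac{3}{32}$, whereas you expand the second moments directly and observe the exact cancellation of the diagonal terms — the same computation in different clothing.
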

\begin{proof}
	Please refer to Appendix A.
\end{proof}

Building upon the conclusion in Proposition \ref{GDoPDerivation}, the scaling law associated with an infinite number of ISAC BSs can be derived as follows.
\begin{Cor}{\label{ScalingLawWithoutDis}}
	For an infinite number of BSs involved in cooperative sensing, the expected GDoP can be further reformulated from (\ref{GDoPExpression1}) as
	\vspace{0mm}
	\begin{equation}\label{GeometryGain}
		\mathop {\lim }\limits_{N \to \infty }  \!{\rm{E}}_{\theta}\left[ {{\rm{tr}} \!\left( \! {\tilde {\bf{F}}}_{\mathrm{A}}^{-1} \right)} \right] N^2  \!\approx  \! \mathop {\lim }\limits_{N \to \infty } \frac{{32}}{{3N\left( {N - 1} \right)}} N^2  \!= \! \frac{32}{{{3}}}.
		\vspace{0mm}
	\end{equation}
\end{Cor}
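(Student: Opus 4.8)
The plan is to derive this corollary as an immediate asymptotic consequence of Proposition~\ref{GDoPDerivation}, so that the entire argument reduces to evaluating one elementary limit. Proposition~\ref{GDoPDerivation} already supplies the closed-form approximation ${\rm{E}}_{\theta}[{\rm{tr}}(\tilde{\bf{F}}_{\mathrm{A}}^{-1})] \approx \tfrac{32}{3N(N-1)}$ in (\ref{GDoPExpression1}); the only remaining task is to multiply this expression by $N^2$ and pass $N \to \infty$.

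First I would substitute the approximation from (\ref{GDoPExpression1}) into the left-hand side of (\ref{GeometryGain}), which turns the normalized GDoP into $\lim_{N\to\infty} \tfrac{32 N^2}{3N(N-1)}$. Next I would cancel one factor of $N$ between numerator and denominator to obtain $\tfrac{32}{3}\cdot\tfrac{N}{N-1}$. Finally, since $\tfrac{N}{N-1}\to 1$ as $N\to\infty$, the limit equals $\tfrac{32}{3}$, which is exactly the right-hand side of (\ref{GeometryGain}).

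The main point of difficulty is not located in this corollary at all but in the approximation that underpins Proposition~\ref{GDoPDerivation}; once that approximation is granted, the present step is purely routine, since the asymptotic equivalence $N(N-1)\sim N^2$ holds exactly in the limit and no further probabilistic or geometric reasoning is required. The value of the result is interpretive rather than technical: it shows that the pure geometric-diversity gain of AOA-based cooperative sensing decays as $\tfrac{32}{3N^2}$ for large $N$, isolating a clean constant $\tfrac{32}{3}$ that characterizes the benefit of spreading the cooperating BSs over distinct bearing angles, independently of their ranges to the target. This constant is the geometry factor that will subsequently be combined with the distance-dependent signal-strength terms to yield the full CRLB scaling law.
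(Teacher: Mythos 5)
Your proposal is correct and matches the paper's (implicit) argument exactly: the paper offers no separate proof for this corollary, since the displayed equation already carries out the same substitution of (\ref{GDoPExpression1}), the cancellation $N^2/(N(N-1)) \to 1$, and the resulting limit $32/3$. The only caveat, which you correctly flag, is that the rigor of the statement rests entirely on the approximation in Proposition~\ref{GDoPDerivation}, not on this elementary limit.
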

Corollary \ref{ScalingLawWithoutDis} states that the expected GDoP is inversely proportional to the square of the number of BSs involved. 
If each antenna array can orient itself perpendicular to the observation direction, then the overall performance of a cooperative sensing system is likely to improve. To this end, we define a new matrix based on the FIM by removing $\cos^2 \theta_i = 1$ in (\ref{FIMEquation}), yielding
\vspace{-2mm}
\begin{equation}
	\tilde {\bf{F}}_{\mathrm{AO}} = \sum\nolimits_{j = 1}^N {\sum\nolimits_{i = 1}^N {\left[ {\begin{array}{*{20}{c}}
					{{{\sin }^2}{\theta _i}}&{ - \sin {\theta _i}\cos {\theta _i}}\\
					{ - \sin {\theta _i}\cos {\theta _i}}&{{{\cos }^2}{\theta _i}}
			\end{array}} \right]} } .
\end{equation}
\begin{Pro}\label{GDoPDerivation2}
	The GDoP gain attained from dynamically controlling the orientation of the antenna array is $\frac{8}{3}$. 
\end{Pro}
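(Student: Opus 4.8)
The plan is to compute the expected GDoP of the orientation-controlled system $\tilde{\bf{F}}_{\mathrm{AO}}$ in closed form and then form its ratio with the baseline result of Proposition~\ref{GDoPDerivation}, since the claimed ``gain'' is precisely the factor by which dynamic orientation control reduces the expected GDoP. First I would collapse the double sum: because the summand in $\tilde{\bf{F}}_{\mathrm{AO}}$ does not depend on $j$, the inner sum merely contributes a factor $N$, so that $\tilde{\bf{F}}_{\mathrm{AO}} = N\,\mathbf{M}$ with $\mathbf{M}=\bigl[\begin{smallmatrix} A & -C\\ -C & B\end{smallmatrix}\bigr]$, where $A=\sum_{i}\sin^2\theta_i$, $B=\sum_{i}\cos^2\theta_i$, and $C=\sum_i\sin\theta_i\cos\theta_i$. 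The key structural simplification is the Pythagorean identity $A+B=N$, which collapses the numerator of the $2\times2$ inverse-trace formula and yields ${\rm{tr}}(\tilde{\bf{F}}_{\mathrm{AO}}^{-1})=\tfrac{1}{N}\tfrac{A+B}{AB-C^2}=\tfrac{1}{AB-C^2}$. Note this is cleaner than the baseline case, where the extra $\cos^2\theta_i$ weight breaks the identity $A+B=N$.

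Next I would evaluate ${\rm{E}}_{\theta}[AB-C^2]$ under i.i.d.\ uniform bearings by separating the diagonal ($i=k$) from the off-diagonal ($i\neq k$) terms in the products $AB$ and $C^2$. Using ${\rm{E}}[\sin^2\theta]={\rm{E}}[\cos^2\theta]=\tfrac12$, ${\rm{E}}[\sin^2\theta\cos^2\theta]=\tfrac18$, and ${\rm{E}}[\sin\theta\cos\theta]=0$, the diagonal contributions of $AB$ and of $C^2$ are each $N/8$ and therefore cancel, while the off-diagonal part of $AB$ contributes $N(N-1)/4$ and that of $C^2$ vanishes by independence. This gives ${\rm{E}}_{\theta}[AB-C^2]=N(N-1)/4$.

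Applying the same moment-ratio (delta-method) approximation already used in Appendix~A, namely ${\rm{E}}[1/X]\approx 1/{\rm{E}}[X]$ for the denominator $X=AB-C^2$, I would obtain ${\rm{E}}_{\theta}[{\rm{tr}}(\tilde{\bf{F}}_{\mathrm{AO}}^{-1})]\approx \tfrac{4}{N(N-1)}$. Forming the ratio with the expression $\tfrac{32}{3N(N-1)}$ established in Proposition~\ref{GDoPDerivation} then yields the orientation gain $\bigl(\tfrac{32}{3N(N-1)}\bigr)\big/\bigl(\tfrac{4}{N(N-1)}\bigr)=\tfrac{8}{3}$, which is conveniently independent of $N$, so the full double-sum structure need not be re-derived here.

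The main obstacle I anticipate is rigorously justifying the moment-ratio approximation ${\rm{E}}[1/X]\approx 1/{\rm{E}}[X]$, which requires controlling the fluctuations of $AB-C^2$ and, in particular, verifying that the probability mass near $X=0$ (where $\mathbf{M}$ is nearly singular and the inverse trace diverges) is negligible. I expect to handle this in the same spirit as Appendix~A via a large-$N$ concentration argument showing $\mathrm{Var}(AB-C^2)=o\bigl({\rm{E}}[AB-C^2]^2\bigr)$, so that the leading-order constant $\tfrac{8}{3}$ is robust; the cleanest exposition simply inherits the approximation regime already validated for Proposition~\ref{GDoPDerivation} and reports the ratio of the two closed-form GDoP expressions.
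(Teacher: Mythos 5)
Your proposal is correct and matches the paper's (very terse) proof, which likewise invokes the Appendix-A methodology to obtain ${\rm E}_{\theta}[{\rm tr}(\tilde{\bf F}_{\mathrm{AO}}^{-1})]\approx \frac{4}{N^2-N}$ and then takes the ratio against $\frac{32}{3N(N-1)}$ from Proposition~\ref{GDoPDerivation} to get $\frac{8}{3}$. Your explicit steps --- collapsing the $j$-sum, using $A+B=N$, and evaluating ${\rm E}[AB-C^2]=N(N-1)/4$ by the diagonal/off-diagonal split (equivalently $\sum_{i<j}\sin^2(\theta_i-\theta_j)$) under the same moment-ratio approximation --- are exactly the computation the paper leaves implicit.
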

\begin{proof}
	Similar to the proof details adopted in Appendix A, the expected GDoP of $\tilde {\bf{F}}_{\mathrm{AO}}$ can be approximated as
	\vspace{0mm}
	\begin{equation}\label{GDoPExpression2}
		\begin{aligned}
			{\rm{E}}_{\theta}\left[ {{\rm{tr}}\left( \tilde {\bf{F}}_{\mathrm{AO}}^{-1} \right)} \right] \approx  \frac{{4}}{{N^2 { - N} }}.
			\vspace{0mm}
		\end{aligned}
	\end{equation}
	It can be readily verified that the GDoP gain attained from orientation control is $\frac{8}{3}$ as compared to Proposition \ref{GDoPDerivation}.
\end{proof}

Proposition \ref{GDoPDerivation2} demonstrates that by appropriately controlling the orientation of each antenna array, a consistent gain in the scaling law can be achieved.

\subsubsection{Performance Gain of Cooperative Sensing}
In this subsection, we derive the closed-form CRLB expression under the assumption of random locations of both the BSs and targets. The CRLB expression can be equivalently transformed into (\ref{CRLBexpression}), as shown at the top of the page.
\begin{figure*}
	\begin{equation}\label{CRLBexpression}
		\begin{aligned}
			{\mathrm{CRLB}}_{\rm{A}} = {{\rm{E}}_{\Phi_b,G_t}}\left[ \frac{|\zeta_a |^{-2} {\sum\nolimits_{j = 1}^N {{{d_j^{-\beta}}}\sum\nolimits_{i = 1}^N {{{d_i^{-\beta-2}}}\left( e_i^2 + f_i^2 \right)} } }}{{\left( {\sum\nolimits_{j = 1}^N  \!{\sum\nolimits_{i = 1}^N {{{e_i^2}}{{d_j^{-\beta}d_i^{-\beta-2}}}} } } \right)  \! \left( {\sum\nolimits_{j = 1}^N {\sum\nolimits_{i = 1}^N {{{f_i^2}}{{d_j^{-\beta}d_i^{-\beta-2}}}} } } \right) \! - \! {{\left( {\sum\nolimits_{j = 1}^N  \! {\sum\nolimits_{i = 1}^N {{{{e_i}{f_i}}}{{d_j^{-\beta}d_i^{-\beta-2}}}} } } \right) \!}^2}}}\right].
		\end{aligned}
	\end{equation}
\end{figure*}
To obtain a more tractable CRLB expression, we resort to a simple yet tight approximation. Then the following conclusion is proved.
\begin{Pro}\label{SimplifiedWithDis1}
	For an infinite cooperative cluster size $N$ and fixed $|\zeta_a |$, the expected CRLB can be approximated as 
	\vspace{0mm}
	\begin{equation}\label{SimplifiedExpressionCRLB}
		{\rm{CRLB}}_{\rm{A}} \approx	\frac{16|\zeta_a |^{-2}{\sum\nolimits_{i = 1}^N {{{{{\rm{E}}[d_i]}^{-\beta-2}}}} }}{{3 {\sum\nolimits_{k = 1}^N {{{{{\rm{E}}[d_k]}^{-\beta}}}} }  {\sum\nolimits_{i = 1}^N {\sum\nolimits_{i > j}^N {{{{{\rm{E}}[d_i]}^{-\beta-2}{{\rm{E}}[d_j]}^{-\beta-2}}}} } } }}.
		\vspace{0mm}
	\end{equation}
\end{Pro}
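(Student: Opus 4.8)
The plan is to begin from the exact expression (\ref{CRLBexpression}) and recognise its bracketed argument as the trace of the inverse of the underlying $2\times2$ FIM. Writing $\operatorname{tr}(\mathbf{M}^{-1}) = (A+C)/(AC-B^2)$ for $\mathbf{M}=\left[\begin{smallmatrix}A&B\\B&C\end{smallmatrix}\right]$, the three distinct entries are $A=\sum_{j}\sum_i e_i^2 d_j^{-\beta}d_i^{-\beta-2}$, $C=\sum_j\sum_i f_i^2 d_j^{-\beta}d_i^{-\beta-2}$ and $B=\sum_j\sum_i e_i f_i d_j^{-\beta}d_i^{-\beta-2}$; the numerator is exactly $A+C$ since $e_i^2+f_i^2=\cos^2\theta_i$. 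The first structural move is to factor the common radial sum $S_\beta\triangleq\sum_j d_j^{-\beta}$ out of $A$, $B$, $C$, so that with the weights $w_i\triangleq d_i^{-\beta-2}$ we have $A=S_\beta\sum_i e_i^2 w_i$, and likewise for $B$ and $C$.

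The key algebraic step is to collapse the determinant in the denominator via the Lagrange (Gram-determinant) identity,
\[
AC-B^2 = S_\beta^2\!\left[\left(\sum_i e_i^2 w_i\right)\!\left(\sum_i f_i^2 w_i\right)-\left(\sum_i e_i f_i w_i\right)^{\!2}\right] = \tfrac12\,S_\beta^2\sum_{i,j} w_i w_j\,(e_i f_j-e_j f_i)^2.
\]
A one-line trigonometric reduction gives $e_i f_j-e_j f_i=\cos\theta_i\cos\theta_j\sin(\theta_i-\theta_j)$, so the summand equals $\cos^2\theta_i\cos^2\theta_j\sin^2(\theta_i-\theta_j)$ and vanishes on the diagonal $i=j$. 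This expresses both numerator and denominator as sums whose angular and radial parts separate term-by-term, which is legitimate because under the PPP the bearing $\theta_i$ is uniform on $[0,2\pi)$ and independent of the range $d_i$ and of all other nodes.

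Next I would perform the angular averaging exactly as in the proof of Proposition~\ref{GDoPDerivation} (Appendix~A). Using $\mathrm{E}[\cos^2\theta]=\tfrac12$ for the numerator factor and computing $\mathrm{E}[(e_i f_j-e_j f_i)^2]=\tfrac{3}{32}$ for $i\neq j$ (the odd cross term $\mathrm{E}[\sin\theta\cos^3\theta]$ vanishing by symmetry), the numerator concentrates around $\tfrac12\,S_\beta\sum_i w_i$ while the denominator concentrates around $\tfrac{3}{32}\,S_\beta^2\sum_{i>j}w_i w_j$. Collecting constants produces the prefactor $\tfrac12/\tfrac{3}{32}=\tfrac{16}{3}$, and restoring $w_i=d_i^{-\beta-2}$, $S_\beta=\sum_k d_k^{-\beta}$, then collapsing each range to its mean $\mathrm{E}[d_i]$, reproduces the claimed ratio (with $|\zeta_a|$ held fixed as in the hypothesis).

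The hardest part is not the algebra but justifying the ``tight approximation'' itself, namely pushing the expectation through the nonlinear ratio so that $\mathrm{E}[(A+C)/(AC-B^2)]\approx\mathrm{E}[A+C]/\mathrm{E}[AC-B^2]$, with the angular and radial randomness decoupled and each $d_i$ replaced by $\mathrm{E}[d_i]$. The supporting argument is a law-of-large-numbers one: for large $N$ the double sums aggregate $\mathcal{O}(N^2)$ weakly dependent terms and therefore concentrate about their means, so the ratio of sums converges to the ratio of expectations; the exact per-node independence of $\theta_i$ and $d_i$ under the PPP makes the angular/radial split rigorous, leaving the mean-distance substitution as the sole genuine approximation, whose tightness is afterwards corroborated by simulation.
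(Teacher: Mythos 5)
Your proposal matches the paper's own proof (Appendix~B) essentially step for step: the same Gram-determinant collapse of $AC-B^2$ into pairwise terms $(e_i f_j - e_j f_i)^2 = X_{i,j}^2$ with $\mathrm{E}[X_{i,j}^2]\approx \tfrac{3}{32}$ carried over from Appendix~A, the same large-$N$ decorrelation/concentration argument that pushes the expectation through the ratio, and the same final substitution $d_i\to\mathrm{E}[d_i]$. Your bookkeeping of the numerator as $e_i^2+f_i^2=\cos^2\theta_i$ (expectation $\tfrac12$) is the version consistent with the stated $\tfrac{16}{3}$ prefactor, whereas the paper's intermediate display in Appendix~B writes only $f_i^2$ there --- evidently a typo that your derivation implicitly corrects.
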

\begin{proof}
Please refer to Appendix B.
\end{proof}

Interestingly, we found that the expected CRLB in Proposition \ref{SimplifiedWithDis1} is only determined by the expected distance from the BS to the typical target. 
Furthermore, the expected distance from the $n$th closest BS to the typical target can be expressed as
\vspace{0mm}
\begin{equation}\label{ExpectedDistance}
	{\rm{E}} \left[ {{d_n}} \right] =  { {\frac{{\Gamma \left( n + \frac{1}{2} \right)}}{\sqrt{\lambda_b \pi } \Gamma (n)}} } \approx \sqrt{\frac{{n}}{\lambda_b \pi}}.
	\vspace{0mm}
\end{equation}
By substituting (\ref{ExpectedDistance}) into (\ref{SimplifiedExpressionCRLB}), the CRLB expression can be further approximated as
\vspace{0mm}
\begin{equation}\label{CRLB_expression}
	{\rm{CRLB}}_{\rm{A}} \! \approx \! \frac{32|\zeta_a |^{-2}{\sum\nolimits_{i = 1}^N {{i^{ - \frac{\beta}{2} - 1}}} }}{{{3}{{}}{\lambda_b ^3}{\pi ^3} {\sum\nolimits_{k = 1}^N {{k^{ - \frac{\beta}{2} }}} } \left( {{{\left( {\sum\nolimits_{i = 1}^N {{i^{ -  \frac{\beta}{2} -1}}} } \right)}^2}  \!- \! \sum\nolimits_{i = 1}^N {{i^{ - {\beta}- 2}}} }  \!\right)}} \!.
	\vspace{0mm}
\end{equation}
For $\beta = 2$, we further derive the scaling law of the localization accuracy as follows.

\begin{theorem}\label{SimplifiedWithDis3}
	For an infinite cooperative cluster size $N$ and fixed $|\zeta_a |$, the expected CRLB of AOA-based localization is given by
	\vspace{0mm}
	\begin{equation}\label{EquationForInfityN}
		\mathop {\lim }\limits_{N \to \infty }{\rm{CRLB}}_{\rm{A}} \times {{\ln }}N  \approx \frac{{320}}{{3|\zeta_a |^{2}{\lambda_b ^3}{\pi ^5} }}.
		\vspace{0mm}
	\end{equation}
\end{theorem}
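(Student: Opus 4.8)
The plan is to argue directly from the approximate closed form in (\ref{CRLB_expression}), evaluating its $N\to\infty$ asymptotics after specialising to $\beta=2$. Setting $\beta=2$ turns the exponents $-\beta/2-1$, $-\beta/2$, and $-\beta-2$ into $-2$, $-1$, and $-4$, so that (\ref{CRLB_expression}) reads
\[
{\rm{CRLB}}_{\rm{A}} \approx \frac{32|\zeta_a|^{-2}\sum_{i=1}^N i^{-2}}{3\lambda_b^3\pi^3 \left(\sum_{k=1}^N k^{-1}\right)\left(\left(\sum_{i=1}^N i^{-2}\right)^2 - \sum_{i=1}^N i^{-4}\right)}.
\]
The whole theorem then reduces to reading off the growth rates of these elementary sums.

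First I would separate the sums by their limiting behaviour. The sums $\sum_{i=1}^N i^{-2}$ and $\sum_{i=1}^N i^{-4}$ are convergent $p$-series, tending to the finite limits $\pi^2/6$ and $\pi^4/90$ respectively, whereas the harmonic sum $\sum_{k=1}^N k^{-1}$ is the \emph{only} divergent term, growing as $\ln N + \gamma + o(1)$ with $\gamma$ the Euler--Mascheroni constant. This is the crux of the argument: it isolates a single $\ln N$ factor sitting in the denominator, which is exactly what produces the $1/\ln N$ scaling asserted in Theorem \ref{SimplifiedWithDis3}. (It also explains why the statement is restricted to $\beta=2$: for $\beta>2$ the sum $\sum_k k^{-\beta/2}$ would itself converge and the CRLB would tend to a finite constant with no logarithmic scaling.)

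Next I would evaluate the bracketed difference in the limit. Using the two $p$-series values,
\[
\left(\sum_{i=1}^N i^{-2}\right)^2 - \sum_{i=1}^N i^{-4} \;\longrightarrow\; \left(\frac{\pi^2}{6}\right)^2 - \frac{\pi^4}{90} = \frac{\pi^4}{36} - \frac{\pi^4}{90} = \frac{\pi^4}{60}.
\]
Multiplying ${\rm{CRLB}}_{\rm{A}}$ by $\ln N$ cancels the divergent harmonic factor, since $\big(\sum_{k=1}^N k^{-1}\big)/\ln N \to 1$, while the numerator sum and the bracket tend to $\pi^2/6$ and $\pi^4/60$. Collecting the constants gives $\frac{32\cdot(\pi^2/6)}{3\pi^3\cdot(\pi^4/60)}|\zeta_a|^{-2}\lambda_b^{-3} = \frac{320}{3|\zeta_a|^2\lambda_b^3\pi^5}$, which is the claimed value.

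The main obstacle, such as it is, lies not in the algebra but in justifying that the product ${\rm{CRLB}}_{\rm{A}}\times\ln N$ genuinely \emph{converges} rather than merely staying bounded. I would verify that the sub-leading corrections --- the constant $\gamma$ and the $o(1)$ tail of the harmonic sum, together with the finite-$N$ remainders of the two convergent $p$-series --- all vanish after multiplication by $\ln N$ and division by the diverging denominator; since the convergent sums approach nonzero finite limits and the harmonic correction is $O(1)$ relative to its $\ln N$ growth, each such correction is suppressed by a factor $O(1/\ln N)$, so the limit is clean. A secondary point worth confirming is that the bracket stays bounded away from zero for all large $N$, so that no spurious near-cancellation inflates the CRLB; this follows because its limit $\pi^4/60$ is strictly positive and the partial sums approach it monotonically.
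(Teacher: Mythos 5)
Your proposal is correct and follows essentially the same route as the paper's Appendix C: specialise (\ref{CRLB_expression}) to $\beta=2$, use $\sum_{k\le N}k^{-1}\sim\ln N+\gamma$ and the convergent $p$-series limits $\pi^2/6$ and $\pi^4/90$ (so the bracket tends to $\pi^4/60$), and collect constants to obtain $320/(3|\zeta_a|^2\lambda_b^3\pi^5)$. Your added remarks on why the limit genuinely converges and why the bracket stays bounded away from zero are sound and slightly more careful than the paper's own presentation, but do not constitute a different argument.
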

\begin{proof}
	Please refer to Appendix C.
\end{proof}

\begin{remark}
The CRLB scaling law for random BS locations, as presented in Theorem \ref{SimplifiedWithDis3}, is critical for cooperative sensing design. Unlike the scenario described in Proposition \ref{GDoPDerivation}, the performance gain decreases as more BSs participate. This gain erosion occurs because, although distant BSs do contribute to sensing diversity, their measurements have a limited impact on localization accuracy due to the inevitable propagation loss.
\end{remark}

For optimal sensing performance, the transmit beamforming gain can be approximated as ${\left\lfloor\frac{{{\lambda _t} D^2 \pi}}{N}\right\rfloor}$, and the BS density can be denoted by $\frac{N}{\pi D^2}$. Then, we have
\begin{equation}\label{BSPowerConstraints}
	\begin{aligned}
		{\rm{CRLB}}_{\rm{A}} \approx & \frac{1}{|\tilde \zeta_a |^2{\underbrace {{{\left\lfloor {\frac{{{\lambda _r}D^2 \pi}}{N}} \right\rfloor}^3}}_{{\text{Receive gain}}} \times \underbrace {\left\lfloor\frac{{{\lambda _t}D^2 \pi}}{N}\right\rfloor}_{{\text{Transmit gain}}} \times \underbrace{\frac{{{N^3}}}{{{({D^2 \pi})^3}}}{\pi ^5} \times \ln N}_{\text{Geometry gain}}}} \\
		\approx & \frac{320N}{3|\tilde \zeta_a |^2 D^2 \pi^6 {\lambda _t} {\lambda _r} \ln N},
	\end{aligned}
\end{equation}
where $|\tilde \zeta_a |^2 = \frac{1}{6}\pi^2   \sigma p^s \gamma_0 / \sigma_s^2$. According to (\ref{BSPowerConstraints}), as the number of BSs increases, the value of ${\rm{CRLB}}_{\rm{A}}$ also increases monotonically with $N$. Therefore, under total transmit power constraints, a fully distributed antenna allocation is unlikely to be optimal. This is primarily because accurate AOA measurement relies on multiple antennas to enhance estimation accuracy.

\subsection{Range Measurement Based Localization}

In this subsection, we derive the closed-form CRLB expression of the TOF-based localization method. Similarly, we define the corresponding GDoP expression as follows:
\begin{equation}
	\tilde {\bf{F}}_{\mathrm{R}} = \sum\nolimits_{i = 1}^N {\sum\nolimits_{j = 1}^N {\left[ {\begin{array}{*{20}{c}}
					{a_{i,j}^2}&{{a_{i,j}}{b_{i,j}}}\\
					{{a_{i,j}}{b_{i,j}}}&{b_{i,j}^2}
			\end{array}} \right]} } .
\end{equation}
\begin{Pro}\label{GDoPDerivationRangeBased}
	The expected GDoP for TOF-based localization can be approximated as
	\vspace{0mm}
	\begin{equation}\label{GDoPExpression1Range}
		\begin{aligned}
			{\rm{E}}_{\theta}\left[ {{\rm{tr}}\left( {\tilde {\bf{F}}}_{\mathrm{R}}^{-1} \right)} \right] \approx  \frac{{2}}{{N\left( {N - 1} \right)}}.
			\vspace{0mm}
		\end{aligned}
	\end{equation}
\end{Pro}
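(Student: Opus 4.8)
The plan is to mirror the strategy of Appendix~A (the proof of Proposition~\ref{GDoPDerivation}), exploiting the fact that $\tilde{\bf F}_{\mathrm R}$ is a symmetric $2\times 2$ matrix, so its inverse trace has the closed form $\operatorname{tr}(\tilde{\bf F}_{\mathrm R}^{-1}) = (A+C)/(AC-B^2)$, where $A=\sum_{i}\sum_{j}a_{i,j}^2$, $C=\sum_{i}\sum_{j}b_{i,j}^2$, and $B=\sum_{i}\sum_{j}a_{i,j}b_{i,j}$ are the three distinct entries of $\tilde{\bf F}_{\mathrm R}$. Treating the bearing angles $\{\theta_i\}$ as i.i.d.\ uniform on $[0,2\pi)$, I would adopt the same tight approximation as in Appendix~A, namely $\mathrm{E}_{\theta}[\operatorname{tr}(\tilde{\bf F}_{\mathrm R}^{-1})] \approx \mathrm{E}_{\theta}[A+C]/\mathrm{E}_{\theta}[AC-B^2]$, and evaluate numerator and denominator separately.

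For the numerator, the key simplification is the identity $a_{i,j}^2+b_{i,j}^2=(\cos\theta_i+\cos\theta_j)^2+(\sin\theta_i+\sin\theta_j)^2 = 2+2\cos(\theta_i-\theta_j)$. Summing over the $N^2$ index pairs, the $N$ diagonal terms ($i=j$) each equal $4$, whereas every off-diagonal term has zero mean because $\mathrm{E}[\cos(\theta_i-\theta_j)]=0$ for independent angles. Hence $\mathrm{E}_{\theta}[A+C] = 4N+2N(N-1) = 2N^2+2N$, whose leading-order behaviour is $2N^2$.

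For the denominator I would collapse the double sums into the aggregate statistics $P=\sum_i\cos\theta_i$, $Q=\sum_i\sin\theta_i$, $S_c=\sum_i\cos^2\theta_i$, $S_s=\sum_i\sin^2\theta_i$, and $S_{cs}=\sum_i\cos\theta_i\sin\theta_i$, giving $A=2NS_c+2P^2$, $C=2NS_s+2Q^2$, and $B=2NS_{cs}+2PQ$. Expanding the product yields $AC-B^2 = 4\,[\,N^2(S_cS_s-S_{cs}^2) + N(S_cQ^2+S_sP^2-2S_{cs}PQ)\,]$, in which the dominant $O(N^4)$ contribution is $4N^2(S_cS_s-S_{cs}^2)$. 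Here the Lagrange identity is convenient: $S_cS_s-S_{cs}^2=\sum_{i<j}\sin^2(\theta_j-\theta_i)$, whose expectation is $\binom{N}{2}\cdot\tfrac12 = N(N-1)/4$ since $\mathrm{E}[\sin^2(\theta_j-\theta_i)]=\tfrac12$. The residual $O(N^3)$ term is subdominant, so $\mathrm{E}_{\theta}[AC-B^2]\approx 4N^2\cdot N(N-1)/4 = N^3(N-1)$.

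Combining the two estimates gives $\mathrm{E}_{\theta}[\operatorname{tr}(\tilde{\bf F}_{\mathrm R}^{-1})] \approx \frac{2N^2}{N^3(N-1)} = \frac{2}{N(N-1)}$, which is the claimed result; retaining the subleading terms yields the exact ratio $2(N+1)/[N(N-1)(N+2)]$, and the approximation sign in the statement absorbs this via $\tfrac{N+1}{N+2}\to 1$. I expect the denominator to be the main obstacle, as it requires expanding the determinant of a matrix whose entries are doubly-indexed trigonometric sums and carefully bookkeeping which index coincidences survive the uniform-angle averaging. Two facts keep this tractable: any monomial containing an odd power of a single $\theta_k$ has zero mean, which annihilates nearly all the mixed cross terms in $S_cQ^2$, $S_sP^2$, and $S_{cs}PQ$; and only the $N^2(S_cS_s-S_{cs}^2)$ piece contributes at leading order, so the $O(N^3)$ remainder, together with the $O(N)$ correction in the numerator, can be safely discarded in the large-$N$ regime.
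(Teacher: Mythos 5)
Your proof is correct, and it follows the route the paper itself prescribes (the paper omits the details, stating only that the argument mirrors Appendix~A): write $\operatorname{tr}(\tilde{\bf F}_{\mathrm R}^{-1})=(A+C)/(AC-B^2)$ and replace the ratio by the ratio of expectations over i.i.d.\ uniform bearings. Your numerator computation via $a_{i,j}^2+b_{i,j}^2=2+2\cos(\theta_i-\theta_j)$ and your denominator reduction check out: with $E[S_cS_s-S_{cs}^2]=N(N-1)/4$ and $E[S_cQ^2+S_sP^2-2S_{cs}PQ]=N(N-1)/2$ one gets $E[AC-B^2]=N^2(N-1)(N+2)$, hence the exact ratio-of-means $2(N+1)/[N(N-1)(N+2)]$, which indeed collapses to $2/[N(N-1)]$ at leading order. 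The only stylistic difference from Appendix~A is that the paper would approximate the expectation of each individual cross term $(a_{k,l}b_{i,j}-a_{i,j}b_{k,l})^2$ directly (analogous to its $3/32$ estimate in the AOA case), whereas you collapse the double sums into the aggregate statistics $P,Q,S_c,S_s,S_{cs}$ and invoke the Lagrange identity; your bookkeeping is cleaner and, unlike the paper's sketch, makes the subleading corrections explicit so the quality of the approximation is visible.
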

\begin{proof}
    It can be proved in a similar way as that in Proposition \ref{GDoPDerivation}. The details are omitted due to page limitation.
\end{proof}

To facilitate the performance analysis, the CRLB expression can be equivalently transformed into (\ref{CRLBExpression_simplify}),
\begin{figure*}
	\begin{equation}\label{CRLBExpression_simplify}
		{\rm{CRLB}}_{\rm{R}} = {{\rm{E}}_{\Phi_b}}\bigg[|\zeta_r |^{-2} \times 
		\frac{{2\sum\nolimits_{i = 1}^N \!{\sum\nolimits_{j = 1}^N d_i^{-\beta}d_j^{ - \beta}\left(1+\cos \left( {{\theta _i} - {\theta _j}} \right)\right) } } } {{\!\sum\nolimits_{l = 1}^N {\! \sum\nolimits_{k = 1}^N {\!\! \sum\nolimits_{i \ge k}^N {\! \sum\nolimits_{j > \!{\lceil (k - i)N + l \rceil \!}^+}^N \!{(d_i d_j d_l d_k)^{-\beta}} } } } {{\! \left( {{a_{k,l}}{b_{i,j}} \!-\! {a_{i,j}}{b_{k,l}}} \right)}^2}}}\! \bigg] \!,
	\end{equation}
\end{figure*}
where ${\lceil x \rceil}^+ = \max(x,1)$.
For $\beta = 2$, we further derive the scaling law of the localization accuracy as follows.

\begin{theorem}\label{SimplifiedWithDisRange}
	For an infinite cooperative cluster size $N$ and fixed $|\zeta_r |$, the expected CRLB is given by
	\vspace{0mm}
	\begin{equation}
		\mathop {\lim }\limits_{N \to \infty }{\rm{CRLB}}_{\rm{R}} \times {{\ln }^2}N \approx \frac{2}{|\zeta_r |^{2}{{\lambda_b ^2}{\pi ^2}}}.
		\vspace{0mm}
	\end{equation}
\end{theorem}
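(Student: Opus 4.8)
The plan is to mirror the AOA derivation behind Theorem~\ref{SimplifiedWithDis3}, exploiting the $2\times2$ structure of $\mathbf{F}_{\rm R}$. Writing $\mathbf{F}_{\rm R}=\begin{bmatrix}A&C\\C&B\end{bmatrix}$ with $A=|\zeta_r|^2\sum_{i,j}d_i^{-\beta}d_j^{-\beta}a_{i,j}^2$, $B$ the analogue carrying $b_{i,j}^2$, and $C$ the one carrying $a_{i,j}b_{i,j}$, I have $\operatorname{tr}(\mathbf{F}_{\rm R}^{-1})=(A+B)/(AB-C^2)$. The numerator collapses through $a_{i,j}^2+b_{i,j}^2=2\left(1+\cos(\theta_i-\theta_j)\right)$, recovering the top of (\ref{CRLBExpression_simplify}). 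The denominator follows from the weighted Lagrange (Binet--Cauchy) identity $\left(\sum_m w_m u_m^2\right)\left(\sum_n w_n v_n^2\right)-\left(\sum_m w_m u_m v_m\right)^2=\sum_{m<n}w_m w_n\left(u_m v_n-u_n v_m\right)^2$, applied with the pair index $m=(i,j)$, weight $w_{(i,j)}=d_i^{-\beta}d_j^{-\beta}$, $u=a_{i,j}$, and $v=b_{i,j}$; this produces exactly the determinant terms $\left(a_{k,l}b_{i,j}-a_{i,j}b_{k,l}\right)^2$ and the ordered four-fold summation of (\ref{CRLBExpression_simplify}).

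Next I would apply the same decoupling surrogate used for the AOA case, replacing the expectation of the ratio by the ratio of expectations and treating the bearings $\{\theta_i\}$ as i.i.d.\ uniform on $[0,2\pi)$ and independent of $\{d_i\}$. Averaging the numerator over angles gives $\mathrm{E}_\theta[1+\cos(\theta_i-\theta_j)]=1$ for $i\neq j$ and $2$ for $i=j$, so $\mathrm{E}[A+B]\approx 2|\zeta_r|^2\left(\sum_i \mathrm{E}[d_i^{-\beta}]\right)^2$ to leading order, the diagonal remainder $\sum_i\mathrm{E}[d_i^{-2\beta}]$ being of lower order. For the denominator I would use the key identity $a_{k,l}b_{i,j}-a_{i,j}b_{k,l}=\sin(\theta_i-\theta_k)+\sin(\theta_j-\theta_k)+\sin(\theta_i-\theta_l)+\sin(\theta_j-\theta_l)$, whose squared expectation equals $4\times\tfrac12=2$ for fully distinct indices, since cross products of independent zero-mean sines vanish. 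Together with the factor $\tfrac12$ from the ordered pair-of-pairs summation, this yields $\mathrm{E}[AB-C^2]\approx|\zeta_r|^4\left(\sum_i\mathrm{E}[d_i^{-\beta}]\right)^4$, which is consistent with the pure geometric scaling $\tfrac{2}{N(N-1)}$ of Proposition~\ref{GDoPDerivationRangeBased} in the equal-distance limit.

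Finally I would insert the distance statistics $\mathrm{E}[d_n]\approx\sqrt{n/(\lambda_b\pi)}$ from (\ref{ExpectedDistance}). For $\beta=2$ each factor becomes $\sum_{i=1}^N\mathrm{E}[d_i]^{-2}\approx\lambda_b\pi\sum_{i=1}^N 1/i\approx\lambda_b\pi\ln N$, while the neglected remainders (e.g.\ $\sum_i i^{-2}$) converge and hence do not affect the leading order. Therefore $\mathrm{E}[A+B]\approx 2|\zeta_r|^2\lambda_b^2\pi^2\ln^2 N$ and $\mathrm{E}[AB-C^2]\approx|\zeta_r|^4\lambda_b^4\pi^4\ln^4 N$, giving $\mathrm{CRLB}_{\rm R}\approx 2/\left(|\zeta_r|^2\lambda_b^2\pi^2\ln^2 N\right)$; multiplying by $\ln^2 N$ and letting $N\to\infty$ delivers the claimed limit $2/(|\zeta_r|^2\lambda_b^2\pi^2)$. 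The numerator scales as $\ln^2 N$ while the denominator scales as $\ln^4 N$, which is precisely the origin of the $\ln^2 N$ law that separates TOF from the $\ln N$ law of AOA.

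The main obstacle will be the denominator. The four-index sum in (\ref{CRLBExpression_simplify}) carries the intricate ordering constraints encoded by the $\lceil(k-i)N+l\rceil^+$ limits and many index-coincidence families (such as $i=k$ or $j=l$), for which the angle-averaged determinant term departs from the clean value $2$. The argument must show that the fully-distinct-index contributions, of which there are $\Theta(N^4)$, dominate every coincidence family (at most $\Theta(N^3)$ terms, and further suppressed by the convergent higher-power distance sums such as $\sum_i d_i^{-4}$), so that the leading $\ln^4 N$ behaviour is unaltered; bounding these lower-order terms, and justifying the ratio-of-expectations surrogate exactly as in the proof of Theorem~\ref{SimplifiedWithDis3}, is where the technical care concentrates.
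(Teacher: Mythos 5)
Your proposal is correct and follows essentially the same route the paper intends: the paper's "proof" of Theorem~\ref{SimplifiedWithDisRange} merely defers to the method of Theorem~\ref{SimplifiedWithDis3} (ratio-of-expectations decoupling, angle averaging via the Lagrange/Binet--Cauchy identity, and the harmonic-sum distance statistics $\mathrm{E}[d_n]\approx\sqrt{n/(\lambda_b\pi)}$ yielding $\sum_i i^{-1}\approx\ln N$), which is exactly what you carry out. Your computations of the numerator $2(1+\cos(\theta_i-\theta_j))$, the determinant term $\mathrm{E}[(a_{k,l}b_{i,j}-a_{i,j}b_{k,l})^2]=2$ for distinct indices, and the resulting $\ln^2 N/\ln^4 N$ balance all check out and in fact supply the details the paper omits.
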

\begin{proof}
	The proof follows a similar approach to that in Theorem \ref{SimplifiedWithDis3}. Details are omitted due to space constraints.  
\end{proof}

\begin{remark}
	In comparison to the scaling law for TOF-based localization, AOA-based localization exhibits a less favorable scaling law, as demonstrated in Theorems \ref{SimplifiedWithDis3} and \ref{SimplifiedWithDisRange}. This is due to the fact that the Jacobian matrix in AOA-based localization includes a higher-order distance attenuation coefficient. Specifically, when converting AOA measurements to target location, the localization error is inversely proportional to the distance between the BS and the target, leading to a more significant reduction in accuracy as both the number of transceivers and the measurement distance increase. Consequently, with greater measurement distances, the accuracy of AOA-based localization diminishes, resulting in a less favorable scaling law compared to TOF-based localization.
\end{remark}

The transmit beamforming gain can be approximated as ${\left\lfloor\frac{{{\lambda _t} D^2 \pi}}{N}\right\rfloor}$. For total transmit power constraints, we have
\begin{equation}\label{RangeBSPowerConstraints}
	\begin{aligned}
		{\rm{CRLB}}_{\rm{R}} 
	\approx & \frac{1}{|\tilde \zeta_r |^2{\underbrace {{{\left\lfloor {\frac{{{\lambda _r}D^2 \pi}}{N}} \right\rfloor}}}_{{\text{receive gain}}} \times \underbrace {\left\lfloor\frac{{{\lambda _t}D^2 \pi}}{N}\right\rfloor}_{{\text{transmit gain}}} \times \underbrace{\frac{{{N^2}}}{{{{D^2 \pi}^2}}}{\pi ^2} \times \ln^2 N}_{\text{Geometry gain}}}} \\
	\approx  & \frac{2}{ |\tilde \zeta_r |^{2} D^2 \pi^2 {\lambda _t} {\lambda _r}  \ln^2 N },
	\end{aligned}
\end{equation}
where $|\tilde \zeta_r |^2 = \frac{8 \pi^2 p^s B^2 \sigma \gamma_0}{ c^2 \sigma_s^2}, $
According to (\ref{RangeBSPowerConstraints}), when the number of BSs is sufficiently large, the ${\rm{CRLB}}_{\rm{R}}$ value decreases monotonically as the number of BSs \(N\) increases. Therefore, given the total BS power constraint, the TOF-based localization method tends to favor a distributed antenna allocation to achieve better sensing results at closer distances.

\subsection{Joint Angle and Range Localization}

In localization systems, fusing information from different measurement modalities can significantly improve estimation accuracy. In particular, combining AOA and TOF estimates consistently enhances performance by exploiting their complementary information. The following theorem formally establishes that the CRLB of the hybrid localization, which results from this combination, is always lower than or equal to the CRLB from either individual measurement.

\begin{thm}\label{IncreasingInformation}
	$\mathrm{CRLB}_{\rm{H}} \le \min \left(\mathrm{CRLB}_{\rm{A}},\mathrm{CRLB}_{\rm{R}}\right)$ .
\end{thm}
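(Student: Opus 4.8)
The plan is to establish the inequality pathwise — that is, for every fixed realization of the node geometry $\Phi_b$ and beam gain $G_t$ — and then transfer the inequality through the expectation operator by monotonicity. The central structural observation is that both $\mathbf{F}_{\rm{A}}$ in (\ref{FIMexpression_1}) and $\mathbf{F}_{\rm{R}}$ in (\ref{FIMexpression}) are Fisher information matrices, hence each is a sum of PSD outer products and is therefore positive semidefinite; under the standard regularity assumption guaranteeing a finite CRLB, each is in fact positive definite, so that $\mathbf{F}_{\rm{A}}^{-1}$, $\mathbf{F}_{\rm{R}}^{-1}$, and $(\mathbf{F}_{\rm{A}} + \mathbf{F}_{\rm{R}})^{-1}$ all exist.

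First I would invoke the Loewner (positive semidefinite) partial order. Since $\mathbf{F}_{\rm{R}} \succeq 0$, we have $\mathbf{F}_{\rm{A}} + \mathbf{F}_{\rm{R}} \succeq \mathbf{F}_{\rm{A}} \succ 0$, and symmetrically $\mathbf{F}_{\rm{A}} + \mathbf{F}_{\rm{R}} \succeq \mathbf{F}_{\rm{R}} \succ 0$. Next I would apply the operator antimonotonicity of matrix inversion on the cone of positive definite matrices: if $\mathbf{P} \succeq \mathbf{Q} \succ 0$, then $\mathbf{Q}^{-1} \succeq \mathbf{P}^{-1}$. Applying this with $\mathbf{P} = \mathbf{F}_{\rm{A}} + \mathbf{F}_{\rm{R}}$ and $\mathbf{Q} = \mathbf{F}_{\rm{A}}$ (respectively $\mathbf{Q} = \mathbf{F}_{\rm{R}}$) yields $(\mathbf{F}_{\rm{A}} + \mathbf{F}_{\rm{R}})^{-1} \preceq \mathbf{F}_{\rm{A}}^{-1}$ and $(\mathbf{F}_{\rm{A}} + \mathbf{F}_{\rm{R}})^{-1} \preceq \mathbf{F}_{\rm{R}}^{-1}$.

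Since the trace is order-preserving on Hermitian matrices ($\mathbf{M} \preceq \mathbf{N}$ implies $\operatorname{tr}(\mathbf{M}) \le \operatorname{tr}(\mathbf{N})$, because $\operatorname{tr}(\mathbf{N} - \mathbf{M}) \ge 0$ for a PSD difference), it follows for every realization that $\operatorname{tr}((\mathbf{F}_{\rm{A}} + \mathbf{F}_{\rm{R}})^{-1}) \le \operatorname{tr}(\mathbf{F}_{\rm{A}}^{-1})$ and likewise with $\mathbf{F}_{\rm{R}}$. Taking $\mathrm{E}_{\Phi_b, G_t}[\cdot]$ of both sides, and using monotonicity of expectation together with the definitions (\ref{CRLBAgnleExpression}) and (\ref{LowerCRB}), gives $\mathrm{CRLB}_{\rm{H}} \le \mathrm{CRLB}_{\rm{A}}$ and $\mathrm{CRLB}_{\rm{H}} \le \mathrm{CRLB}_{\rm{R}}$, whence $\mathrm{CRLB}_{\rm{H}} \le \min(\mathrm{CRLB}_{\rm{A}}, \mathrm{CRLB}_{\rm{R}})$.

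The main obstacle is one of rigor rather than depth: the antimonotonicity step requires genuine positive definiteness, not merely semidefiniteness, of $\mathbf{F}_{\rm{A}}$ and $\mathbf{F}_{\rm{R}}$. This is precisely where the assumption that the AOA and TOF estimation errors are uncorrelated becomes essential, since it guarantees that the hybrid FIM equals exactly $\mathbf{F}_{\rm{A}} + \mathbf{F}_{\rm{R}}$ with no off-diagonal coupling; were cross terms present, the additive decomposition would fail and the clean Loewner comparison would hold only as the lower-bound approximation flagged in the footnote to (\ref{LowerCRB}). I would also note that degenerate geometries in which a FIM loses rank (for instance, all cooperating BSs collinear with the target) must be excluded as a measure-zero event under the PPP, so that the pathwise inequalities hold almost surely and therefore survive the expectation.
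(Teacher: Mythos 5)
Your proof is correct, but it takes a genuinely different route from the paper's. The paper argues via the Sherman--Morrison identity: it first treats the case where $\mathbf{F}_{\rm{R}} = \lambda \mathbf{u}\mathbf{u}^{T}$ is rank one, writes $(\mathbf{F}_{\rm{A}}+\lambda\mathbf{u}\mathbf{u}^{T})^{-1} = \mathbf{F}_{\rm{A}}^{-1} - \frac{\lambda \mathbf{F}_{\rm{A}}^{-1}\mathbf{u}\mathbf{u}^{T}\mathbf{F}_{\rm{A}}^{-1}}{1+\lambda\mathbf{u}^{T}\mathbf{F}_{\rm{A}}^{-1}\mathbf{u}}$, observes that the trace of the subtracted term is $\|\mathbf{F}_{\rm{A}}^{-1}\mathbf{u}\|^{2}$ times a positive scalar and hence nonnegative, and then handles general $\mathbf{F}_{\rm{R}}$ by decomposing it into a sum of rank-one matrices and iterating. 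You instead go through the Loewner order: $\mathbf{F}_{\rm{A}}+\mathbf{F}_{\rm{R}} \succeq \mathbf{F}_{\rm{A}} \succ 0$, antimonotonicity of inversion on the positive definite cone, monotonicity of the trace, and monotonicity of the expectation. The two arguments buy slightly different things. The paper's computation is explicit and even quantifies the per-update decrease in trace, but it leaves the iteration over rank-one pieces and the propagation through the expectation somewhat implicit; your version dispatches the general case in one step, makes the expectation transfer explicit, and correctly isolates the two points the paper glosses over --- that strict positive definiteness (not mere semidefiniteness) is what licenses the inversion step, and that rank-deficient geometries must be excluded as almost-surely-null events under the PPP. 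Note also that the antimonotonicity lemma you invoke is itself usually proved by exactly the kind of rank-one or congruence argument the paper spells out, so the two proofs are close cousins; yours is the cleaner packaging, the paper's the more self-contained calculation.
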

\begin{proof}
	First, we will prove the inequality  when \( \mathbf{F}_{\rm{R}} \) is a rank-one matrix, i.e.,
	\begin{equation}
	\operatorname{Tr}\left((\mathbf{F}_{\rm{A}}+  \lambda \mathbf{u} \mathbf{u}^{T}  )^{-1}\right) \leq \operatorname{Tr}\left(\mathbf{F}_{\rm{A}}^{-1}\right).
	\end{equation}  
	where \( \mathbf{u} \in \mathbb{R}^{n \times 1} \) is a nonzero vector, and \( \lambda > 0 \) is a scalar.
	Then, the inverse of the sum \( \mathbf{F}_{\rm{A}} + \mathbf{F}_{\rm{R}} \) follows the Sherman-Morrison formula:
	\begin{equation}
	(\mathbf{F}_{\rm{A}}+\mathbf{F}_{\rm{R}})^{-1} = \mathbf{F}_{\rm{A}}^{-1} - \frac{\lambda \mathbf{F}_{\rm{A}}^{-1} \mathbf{u} \mathbf{u}^{T} \mathbf{F}_{\rm{A}}^{-1}}{1 + \lambda \mathbf{u}^{T} \mathbf{F}_{\rm{A}}^{-1} \mathbf{u}}.
	\end{equation}
	Since  
	$
	\operatorname{Tr} \left( \mathbf{F}_{\rm{A}}^{-1} \mathbf{u} \mathbf{u}^{T} \mathbf{F}_{\rm{A}}^{-1} \right) = \|\mathbf{F}_{\rm{A}}^{-1} \mathbf{u}\|^2 \geq 0,
	$
	it follows that:
	\begin{equation}
	\operatorname{Tr} \left( (\mathbf{F}_{\rm{A}}+\mathbf{F}_{\rm{R}})^{-1} \right) \leq \operatorname{Tr} (\mathbf{F}_{\rm{A}}^{-1}).
\end{equation}
	Thus,  we have
	\begin{equation}
	\mathrm{CRLB}_{\rm{H}} \leq \mathrm{CRLB}_{\rm{A}}.
	\end{equation}
	For a general $\mathbf{F}_{\rm{R}}$ that is not rank-one, it can be decomposed as a sum of multiple rank-one matrices. By applying the same reasoning iteratively, the result remains valid. Similarly, we can prove $\mathrm{CRLB}_{\rm{H}} \leq \mathrm{CRLB}_{\rm{R}}$.
\end{proof}

Lemma \ref{IncreasingInformation} demonstrates that combining AOA and TOF measurements improves the target localization performance by enhancing the information available.

In a manner analogous to AOA-only and TOF-only localization methods, we analyze the GDoP for the hybrid localization method in the following. We define the GDoP for this hybrid approach as
\begin{equation}
	{{\tilde {\bf{F}}}_{{\rm{H}}}} \!=\! \sum\nolimits_{j = 1}^N \!{\sum\nolimits_{i = 1}^N \!{\left[ {\begin{array}{*{20}{c}}
					{a_{i,j}^2 + c_{i,j}^2}&{{a_{i,j}}{b_{i,j}} - {c_{i,j}}{e_{i,j}}}\\
					{{a_{i,j}}{b_{i,j}} - {c_{i,j}}{e_{i,j}}}&{b_{i,j}^2 + e_{i,j}^2}
			\end{array}}\! \right]} } .
\end{equation}
\begin{Pro}\label{GDoPDerivationHybrid}
	The expected GDoP can be approximated as
	\vspace{0mm}
	\begin{equation}\label{GDoPExpressionHybrid}
		\begin{aligned}
			{\rm{E}}_{\theta}\left[ {{\rm{tr}}\left( {\tilde {\bf{F}}}_{\mathrm{A}}^{-1} \right)} \right] \approx  \frac{{160}}{{99{N^2} - 67}}.
			\vspace{0mm}
		\end{aligned}
	\end{equation}
\end{Pro}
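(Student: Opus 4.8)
The plan is to reuse the ratio-of-expectations strategy of Appendix A, exploiting that the hybrid FIM is simply the sum $\tilde{\mathbf{F}}_{\mathrm{H}} = \tilde{\mathbf{F}}_{\mathrm{R}} + \tilde{\mathbf{F}}_{\mathrm{A}}$ of the TOF and AOA geometry matrices. Since $\tilde{\mathbf{F}}_{\mathrm{H}}$ is a $2\times 2$ symmetric matrix, I would first write its trace-of-inverse in closed form as
\[
\mathrm{tr}(\tilde{\mathbf{F}}_{\mathrm{H}}^{-1}) = \frac{A+B}{AB - C^2},
\]
where $A = \sum_{j}\sum_i (a_{i,j}^2 + c_{i,j}^2)$, $B = \sum_j\sum_i(b_{i,j}^2 + e_{i,j}^2)$, and $C = \sum_j\sum_i(a_{i,j}b_{i,j} - c_{i,j}e_{i,j})$ are the three distinct entries of $\tilde{\mathbf{F}}_{\mathrm{H}}$, with $a_{i,j}=\cos\theta_i+\cos\theta_j$, $b_{i,j}=\sin\theta_i+\sin\theta_j$, $c_{i,j}=\sin\theta_i\cos\theta_i$, and $e_{i,j}=\cos^2\theta_i$. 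As in Proposition~\ref{GDoPDerivation}, I would then approximate the expectation of this ratio by the ratio of expectations, $\mathrm{E}_\theta[\mathrm{tr}(\tilde{\mathbf{F}}_{\mathrm{H}}^{-1})] \approx \mathrm{E}[A+B]/\mathrm{E}[AB-C^2]$, treating the bearing angles $\{\theta_i\}$ as i.i.d. uniform on $[0,2\pi)$ so that expectations factorize across distinct indices.

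For the numerator I would use the identities $a_{i,j}^2 + b_{i,j}^2 = 2 + 2\cos(\theta_i-\theta_j)$ and $c_{i,j}^2 + e_{i,j}^2 = \cos^2\theta_i$, which give $A + B = \sum_j\sum_i[2 + 2\cos(\theta_i-\theta_j) + \cos^2\theta_i]$. Taking expectations and noting that the off-diagonal $\cos(\theta_i-\theta_j)$ terms vanish while the $N$ diagonal terms equal one, the numerator reduces to an explicit low-degree polynomial in $N$ using only $\mathrm{E}[\cos^2\theta]=\tfrac12$.

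For the denominator I would expand $AB - C^2$ into a four-fold sum over indices $i,j,k,l$ and classify the terms by their coincidence pattern, evaluating each class with the standard uniform moments $\mathrm{E}[\cos^4\theta]=\tfrac38$, $\mathrm{E}[\cos^2\theta\sin^2\theta]=\tfrac18$, and the relevant higher even moments; independence annihilates every term containing an isolated angle of odd order. A convenient organizing device is the $2\times2$ determinant identity $\det(\tilde{\mathbf{F}}_{\mathrm{R}}+\tilde{\mathbf{F}}_{\mathrm{A}}) = \det\tilde{\mathbf{F}}_{\mathrm{R}} + \det\tilde{\mathbf{F}}_{\mathrm{A}} + (M_{11}N_{22}+M_{22}N_{11}-2M_{12}N_{12})$, whose first two expected determinants are already available from Propositions~\ref{GDoPDerivationRangeBased} and~\ref{GDoPDerivation}, leaving only the mixed cross term to be computed afresh. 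Summing the multiplicities produces a quartic polynomial in $N$; dividing it into the numerator and cancelling the common factor yields the claimed $\tfrac{160}{99N^2-67}$.

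The main obstacle is the denominator's four-index bookkeeping: unlike the AOA-only case, the TOF entries $a_{i,j},b_{i,j}$ each depend on two angles while the AOA entries $c_{i,j},e_{i,j}$ depend on a single angle, so the coincidence analysis must track mixed one- and two-index dependence and the many partial-overlap cases (e.g. $i=k$, $j\neq l$) with the correct combinatorial weights. Keeping these multiplicities consistent, and ensuring that the resulting numerator and denominator polynomials share the factor that collapses the ratio to a constant numerator, is the error-prone core of the argument.
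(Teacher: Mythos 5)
Your overall strategy is the right one, and it is as close to the paper's proof as one can get, since the paper's entire argument for this proposition is ``similar to Appendix A'': write $\mathrm{tr}(\tilde{\bf F}_{\mathrm H}^{-1})$ as trace over determinant of a $2\times2$ matrix, replace the expectation of the ratio by the ratio of expectations, and evaluate both with i.i.d.\ uniform bearing angles using even trigonometric moments. Your identification of $a_{i,j},b_{i,j},c_{i,j},e_{i,j}$, the numerator identity $a_{i,j}^2+b_{i,j}^2=2+2\cos(\theta_i-\theta_j)$ and $c_{i,j}^2+e_{i,j}^2=\cos^2\theta_i$, and the determinant decomposition $\det(\tilde{\bf F}_{\mathrm R}+\tilde{\bf F}_{\mathrm A})=\det\tilde{\bf F}_{\mathrm R}+\det\tilde{\bf F}_{\mathrm A}+(\text{cross terms})$ are all correct and are exactly the right organizing devices.

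The gap is in your final sentence: the claimed ``common factor'' cancellation does not occur, so your plan as stated does not terminate in the formula $\tfrac{160}{99N^2-67}$. Carrying out your own program exactly gives $\mathrm{E}[A+B]=\tfrac{5N^2}{2}+2N=\tfrac{N(5N+4)}{2}$, $\mathrm{E}[\det\tilde{\bf F}_{\mathrm R}]=N^2(N-1)(N+2)$, $\mathrm{E}[\det\tilde{\bf F}_{\mathrm A}]=\tfrac{3N^3(N-1)}{64}$, and expected cross terms $\tfrac{N^2(N+1)^2}{2}$, so that $\mathrm{E}[AB-C^2]=\tfrac{N^2(99N^2+125N-96)}{64}$ and the ratio of expectations is $\tfrac{32(5N+4)}{N(99N^2+125N-96)}$. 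The quartic is not divisible by $N(5N+4)$ (check: $N(5N+4)(99N^2-67)=495N^4+396N^3-335N^2-268N\neq 495N^4+625N^3-480N^2$), so no cancellation collapses the numerator to the constant $160$. What your computation does recover is the leading-order behavior $\tfrac{160}{99N^2}$, which agrees with the proposition asymptotically; the specific low-order constant $-67$ must come from an additional, unstated approximation (or fitting) that neither you nor the paper supplies. To make the proof honest you should either present $\tfrac{32(5N+4)}{N(99N^2+125N-96)}$ as the result of the ratio-of-expectations step and then state explicitly which further large-$N$ simplification is being invoked, or acknowledge that only the $\tfrac{160}{99N^2}$ scaling is actually derived. (Minor: the proposition statement itself writes $\tilde{\bf F}_{\mathrm A}^{-1}$ where it means $\tilde{\bf F}_{\mathrm H}^{-1}$.)
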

\begin{proof}
	It can be readily proved in a similar way to Appendix A.
\end{proof}

It is straightforward to see that the GDoP for the hybrid localization method that combines both TOF and AOA measurements is lower than that for methods relying solely on either AOA or TOF measurements. However, because both types of measurements are obtained from the same nodes, the improvement in geometric gain compared to the GDoP presented in Proposition \ref{GDoPDerivation} is relatively modest.

It is worth noting that localization methods solely based on AOA or TOF measurements require at least two nodes to obtain valid localization results. However, the hybrid localization method utilizing both AOA and TOF measurements can obtain location information even for a single ISAC BS. 
It is worth noting that when there is only one BS, it is impossible to obtain the target's location information by relying solely on TOF or AOA measurements, i.e., we have $\mathrm{CRLB}_{\rm{A}} = \mathrm{CRLB}_{\rm{R}} = \infty$. The localization method that combines TOF and AOA measurements can successfully locate the target by relying on only a single BS.
When $N=1$, we have
\begin{equation}
	{{\bf{F}}_{{\rm{H}}}} = \frac{1}{{d_i^2 d_j^2}}\left[ {\begin{array}{*{20}{c}}
			{a_{i,j}^2 + c_{i,j}^2}&{{a_{i,j}}{b_{i,j}} - {c_{i,j}}{e_{i,j}}}\\
			{{a_{i,j}}{b_{i,j}} - {c_{i,j}}{e_{i,j}}}&{b_{i,j}^2 + e_{i,j}^2}
	\end{array}} \right].
\end{equation}
Then, the CRLB can be formulated as:
\begin{equation}
	{\rm{tr}} \left({\bf{F}}_{{\rm{H}}}^{-1}\right) = \frac{1}{{\left| {{\zeta _R}} \right|{{\cos }^2}{\theta _i}}} + \frac{1}{4 \left|  {{\zeta _A}} \right| {d_i^2}}.
\end{equation}

It is evident that, compared to TOF-based localization, the accuracy of AOA-based localization is more sensitive to the distance of the BS from the target. Specifically, as the BS moves farther from the target, a larger antenna array is needed to compensate for the increased path loss. However, AOA-based localization methods have the advantage of requiring less stringent time synchronization between BSs, allowing them to achieve accurate localization through collaboration even without precise synchronization.
When the number of cooperating BSs increases greatly, the sensing performance will be determined by the TOF measurement method. 

\begin{figure*}
	\begin{equation}\label{CRLBexpressionHybrid}
		\small
		\begin{aligned}
			&{\rm{tr}} \left(\left({\bf{F}}_{\mathrm{A}} +  {\bf{F}}_{\mathrm{R}}\right)^{-1}\right)  \\
			&= \! \frac{{\sum\nolimits_{i = 1}^N \! {\sum\nolimits_{j = 1}^N {\! \left( {\tilde a_{i,j}^2 + \tilde c_{i,j}^2} \right)} }  + \sum\nolimits_{i = 1}^N \! {\sum\nolimits_{j = 1}^N \! {\left( {\tilde b_{i,j}^2 + \tilde e_{i,j}^2} \right)} } }}{{\sum\nolimits_{l = 1}^N {\sum\nolimits_{k = 1}^N {\sum\nolimits_{i \ge k}^N {\sum\nolimits_{j > {{\left\lceil {(k - i)N + l} \right\rceil }^ + }}^N {\left( {{{\left( {{{\tilde a}_{i,j}}{{\tilde b}_{k,l}} - {{\tilde a}_{k,l}}{{\tilde b}_{i,j}}} \right)}^2}{\rm{ + }}{{\left( {{{\tilde c}_{i,j}}{{\tilde e}_{k,l}} - {{\tilde e}_{k,l}}{{\tilde c}_{i,j}}} \right)}^2}} \right)} } } }  + \!  \sum\nolimits_{l = 1}^N \!  {\sum\nolimits_{k = 1}^N {\sum\nolimits_{i = 1}^N \!  {\sum\nolimits_{j = 1}^N {{{\left( {{{\tilde a}_{i,j}}{{\tilde e}_{k,l}} + {{\tilde b}_{k,l}}{{\tilde c}_{i,j}}} \right)}^2}} } } } }}.
		\end{aligned}
	\end{equation}
\end{figure*}
Under general setup, the CRLB expression can be transformed into (\ref{CRLBexpressionHybrid}), as shown at the top of the next page, where ${\rho _{i,j}} = \frac{1}{{d_j^2 d_i^2}}$, $\tilde a_{i,j} = \sqrt {{\rho _{i,j}}\left| {{\zeta _R}} \right|} \left( {\cos {\theta _i} + \cos {\theta _j}} \right)$, $\tilde b_{i,j} = \sqrt {{\rho _{i,j}}\left| {{\zeta _R}} \right|} \left( {\sin {\theta _i} + \sin {\theta _j}} \right)$, $\tilde c_{i,j} = \sqrt {{\rho _{i,j}}\left| {{\zeta _R}} \right|} \frac{{\sin {\theta _i}\cos {\theta _i}}}{{ {d_i} }}$, and $ \tilde e_{i,j} = \sqrt {{\rho _{i,j}}\left| {{\zeta _R}} \right|} \frac{{{{\cos }^2}{\theta _i}}}{{d_i}}$. To facilitate the analysis, we adopt the following approximation:
\begin{equation}
	\begin{aligned}
		&{\rm{E}}\left[ {{{\left( {{\tilde a_{k,l}}{\tilde b_{i,j}} - {\tilde a_{i,j}}{\tilde b_{k,l}}} \right)}^2}} \right] 
		\approx 2{\rho _{i,j}}{\rho _{k,l}}{\left| {{\zeta _R}} \right|^2}.
	\end{aligned}
\end{equation}
Similarly, we have ${\rm{E}}\left[ {{{\left( {{\tilde c_{i,j}}{\tilde e_{k,l}} - {\tilde e_{k,l}}{\tilde c_{i,j}}} \right)}^2}} \right] = \frac{3}{{{\rm{32}}}}{\rho _{i,j}}{\rho _{k,l}}{\left| {{\zeta _R}} \right|^2}\frac{1}{{d_i^2d_k^2}}$, and ${\rm{E}}\left[ {{{\left( {{\tilde a_{i,j}}{\tilde e_{k,l}} + {\tilde b_{k,l}}{\tilde c_{i,j}}} \right)}^2}} \right] = {{\rho _{i,j}}{\rho _{k,l}}\left| {{\zeta _R}} \right|\left| {{\zeta _A}} \right|\frac{1}{{2d_k^2}}}$.

\begin{Pro}\label{ScalingLawHybrid}
	The CRLB of our hybrid localization method can be expressed as:
	\begin{equation}
		{\rm{CRLB}}_{\rm{H}}  \approx \frac{{24}}{{12\left| {{\xi _R}} \right|\lambda _b^2{\pi ^2}{{\ln }^2}N + \lambda _b^3{\pi ^5}\left| {{\xi _A}} \right|\ln N}}.
	\end{equation}
\end{Pro}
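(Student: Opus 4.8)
The plan is to treat (\ref{CRLBexpressionHybrid}) as the working expression and to exploit the fact that, for the $2\times 2$ hybrid FIM, $\operatorname{tr}\!\big((\mathbf{F}_{\rm A}+\mathbf{F}_{\rm R})^{-1}\big)=\operatorname{tr}(\mathbf{F}_{\rm A}+\mathbf{F}_{\rm R})/\det(\mathbf{F}_{\rm A}+\mathbf{F}_{\rm R})$. The numerator of (\ref{CRLBexpressionHybrid}) is exactly the combined trace $\sum_{i,j}(\tilde a_{i,j}^2+\tilde b_{i,j}^2+\tilde c_{i,j}^2+\tilde e_{i,j}^2)=\operatorname{tr}(\mathbf{F}_{\rm R})+\operatorname{tr}(\mathbf{F}_{\rm A})$, while the denominator is $\det(\mathbf{F}_{\rm A}+\mathbf{F}_{\rm R})$ written out by the Cauchy--Binet/Lagrange expansion of a $2\times 2$ determinant into three groups: a TOF-only discriminant $(\tilde a_{i,j}\tilde b_{k,l}-\tilde a_{k,l}\tilde b_{i,j})^2$, an AOA-only discriminant $(\tilde c_{i,j}\tilde e_{k,l}-\tilde e_{k,l}\tilde c_{i,j})^2$, and a cross discriminant $(\tilde a_{i,j}\tilde e_{k,l}+\tilde b_{k,l}\tilde c_{i,j})^2$. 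Mirroring the route of Theorems \ref{SimplifiedWithDis3}--\ref{SimplifiedWithDisRange}, I would first decouple the two levels through the ratio-of-expectations approximation $\mathrm{E}[X/Y]\approx\mathrm{E}[X]/\mathrm{E}[Y]$.

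Next I would average in two stages. The angular expectation is taken first, using the three supplied identities $\mathrm{E}[(\tilde a_{k,l}\tilde b_{i,j}-\tilde a_{i,j}\tilde b_{k,l})^2]\approx 2\rho_{i,j}\rho_{k,l}|\zeta_R|^2$, $\mathrm{E}[(\tilde c_{i,j}\tilde e_{k,l}-\tilde e_{k,l}\tilde c_{i,j})^2]=\tfrac{3}{32}\rho_{i,j}\rho_{k,l}|\zeta_R|^2 d_i^{-2}d_k^{-2}$, and $\mathrm{E}[(\tilde a_{i,j}\tilde e_{k,l}+\tilde b_{k,l}\tilde c_{i,j})^2]=\tfrac12\rho_{i,j}\rho_{k,l}|\zeta_R||\zeta_A|d_k^{-2}$, together with the elementary trace averages $\mathrm{E}[\tilde a_{i,j}^2+\tilde b_{i,j}^2]=2\rho_{i,j}|\zeta_R|$ and $\mathrm{E}[\tilde c_{i,j}^2+\tilde e_{i,j}^2]=\tfrac12\rho_{i,j}|\zeta_R|d_i^{-2}$ that follow directly from the definitions of $\tilde a,\tilde b,\tilde c,\tilde e$ and $\mathrm{E}[\cos^2\theta_i]=\tfrac12$. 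This eliminates all angular dependence and leaves pure distance sums in $\rho_{i,j}=d_i^{-2}d_j^{-2}$. I would then insert the ordered-distance law $\mathrm{E}[d_n]\approx\sqrt{n/(\lambda_b\pi)}$ from (\ref{ExpectedDistance}) with $\beta=2$, turning every distance sum into a power sum of the rank index and evaluating it via $\sum_{n=1}^{N}n^{-1}\sim\ln N$ and $\sum_{n=1}^{N}n^{-2}\to\pi^2/6$, with the double sums factorizing into products of such single sums.

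The decisive step is the scaling bookkeeping. The combined trace is dominated by $\operatorname{tr}(\mathbf{F}_{\rm R})\sim 2|\zeta_R|\lambda_b^2\pi^2\ln^2 N$, so the numerator scales as $\ln^2 N$. Among the three denominator groups, the TOF discriminant scales as $\ln^4 N$, the cross discriminant as $\ln^3 N$, and the AOA-only discriminant only as $\ln^2 N$. Dividing by the numerator, the TOF group yields the $12|\xi_R|\lambda_b^2\pi^2\ln^2 N$ term, while the cross group—whose mixed gain $|\zeta_R||\zeta_A|$ loses one factor of $|\zeta_R|$ against the TOF-dominated numerator—yields the $\lambda_b^3\pi^5|\xi_A|\ln N$ term; the AOA-only discriminant is asymptotically subdominant and is discarded. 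Collecting these two surviving contributions, together with the overall numerical prefactor $24$ assembled from the $2$, $\tfrac12$, $3/32$ angular constants and the $\sum_n n^{-2}=\pi^2/6$ value, reproduces the claimed expression.

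The main obstacle I anticipate is precisely the identification of which discriminant supplies the $\ln N$ correction: it is the \emph{cross} term $(\tilde a_{i,j}\tilde e_{k,l}+\tilde b_{k,l}\tilde c_{i,j})^2$ capturing the genuine AOA--TOF synergy, and not the pure AOA discriminant, which is negligible in this regime. Establishing this cleanly requires ranking the three $\ln$-powers after distance averaging and verifying that the cancellation of one $|\zeta_R|$ factor between numerator and cross term produces the stated $|\xi_A|$ dependence, so that the result interpolates correctly between the $\ln^2 N$ law of Theorem \ref{SimplifiedWithDisRange} and the $\ln N$ law of Theorem \ref{SimplifiedWithDis3}. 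A secondary difficulty is justifying the ratio-of-expectations decoupling and tracking the numerical constants consistently through the factorized power sums.
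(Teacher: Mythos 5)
Your proposal follows essentially the same route as the paper: it starts from the trace-over-determinant form in (\ref{CRLBexpressionHybrid}), applies the same angular expectation identities and the ordered-distance/harmonic-sum substitutions to arrive at the intermediate expression (\ref{HybridExpression}), and then performs the same asymptotic bookkeeping in which the TOF discriminant and the AOA--TOF cross discriminant supply the surviving $\ln^2 N$ and $\ln N$ denominator terms while the pure-AOA discriminant is dropped as subdominant. Your explicit ranking of the three $\ln$-powers and identification of the cross term as the source of the $|\xi_A|\ln N$ contribution is exactly the cancellation the paper performs implicitly when passing from (\ref{HybridExpression}) to the stated limit, so the argument is correct and matches the paper's proof.
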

\begin{proof}
	Similar to the proof in Appendix C, the CRLB of the hybrid localization method can be expressed as in (\ref{HybridExpression}). Then, by substituting $\sum\nolimits_{i = 1}^N {{i^{ - 1}}}  \approx \ln N + \gamma  + \frac{1}{{2N}}$ and $\sum\nolimits_{i = 1}^N {{i^{ - 2}}}  \approx \frac{{{\pi ^2}}}{6}$ into (\ref{HybridExpression}), along with $\gamma = 0.577$, when $N\to \infty$, we have ${\rm{CRLB}}_{\rm{H}}  \approx \frac{{24}}{{12\left| {{\xi _R}} \right|\lambda _b^2{\pi ^2}{{\ln }^2}N + \lambda _b^3{\pi ^5}\left| {{\xi _A}} \right|\ln N}}$. This thus completes the proof.
	\begin{figure*}
		\begin{equation}\label{HybridExpression}
			\begin{aligned}
				{\rm{tr}}\! \left(\!\left({\!\bf{F}}_{\mathrm{A}} +  {\bf{F}}_{\mathrm{R}}\right)^{-1}\right) &\!\approx\! \frac{{2\left| {{\zeta _R}} \right|\left( {\ln N + \gamma  + \frac{1}{{2N}}} \right) +  \frac{{{\pi ^3{\lambda _b}}}}{{12}} \left| {{\zeta _A}} \right|}}{{{{\left| {{\zeta _R}} \right|}^2}{{ {\pi^2 {\lambda^2 _b}} }}{{\left( {\ln N + \gamma  + \frac{1}{{2n}}} \right)}^3} + {{}}{\frac{{{ {\pi^8 {\lambda^4 _b}} }}}{{1280}}}{{\left| {{\zeta _A}} \right|}^2}\left( {\ln N + \gamma  + \frac{1}{{2N}}} \right) + {{ {} }}\frac{{{\pi^5 {\lambda^3 _b}}}}{12}\left| {{\zeta _R}} \right|\left| {{\zeta _A}} \right|{{\left( {\ln N + \gamma  + \frac{1}{{2N}}} \right)}^2}}}. 
			\end{aligned}
		\end{equation}
	\end{figure*}
\end{proof}

\section{Communication Performance}
\label{CommunicationSection}
In this section, we derive a tractable expression of communication performance, and present an approximate expression to acquire the optimal antenna-to-BS allocation for cooperative transmission.

\subsection{Expression of Communication Rate}
\label{CommunicationPerformance}

According to \cite{hamdi2010useful}, for the uncorrelated variables $X$ and $Y$, it follows that:
\vspace{0mm}
\begin{equation}\label{CommunicationBasicEquation}
	{\rm{E}}\left[ {\log \left( {1 + \frac{X}{Y}} \right)} \right] \! = \! \int_0^\infty  {\frac{1}{z}} \left( {1 - {\rm{E}}\left[{e^{ - z \left[ X\right] }}\right]} \right){\rm{E}}\left[{e^{ - z\left[ Y \right]}}\right]{\rm{d}}z.
	\vspace{0mm}
\end{equation}
In (\ref{CommunicationBasicEquation}), ${\rm{E}}\left[{e^{ - z \left[ X\right] }}\right]$ and ${\rm{E}}\left[{e^{ - z \left[ Y\right] }}\right]$ are the Laplace transforms of $X$ and $Y$.
Then, exploiting the BSs for cooperative joint transmission within the range $D$, the expectation of data rate can be expressed as follows:
\vspace{0mm}
\begin{equation}\label{LapalaceTransform}
	\begin{aligned}
		&{\rm{E}}\left[ {\log \left( {1 + \mathrm{SIR}_c} \right)}  \right] 
		= {\rm{E}} \! \left[ {\log \left( \! {1 + \frac{\sum\nolimits_{{{i}} \in {\Phi_c}} {g_{i}} \left\| {\bf{d}}_i \right\|^{-\alpha} }{ \sum\nolimits_{{{j}} \in \{\Phi_b \backslash \Phi_c \}} g_j \left\| {\bf{d}}_j \right\|^{-\alpha}  }} \right)} \right] \\
		=& \int_0^\infty  {\frac{{1 - {\rm{E}}\left[ {{e^{ - z U}}} \right]}}{z}}  {\rm{E}}\left[ {{e^{ - z I}}} \right]{\rm{d}}z,
		\vspace{0mm}
	\end{aligned}
\end{equation}
where \( U = \sum\nolimits_{{{i}} \in {\Phi_c}} {g_{i}} \left\| {\bf{d}}_i \right\|^{-\alpha} \) and \( I =  \sum\nolimits_{{{j}} \in \{\Phi_b \backslash \Phi_c \}} g_j \left\| {\bf{d}}_j \right\|^{-\alpha} \). In (\ref{LapalaceTransform}), \( I \) represents the interference arising from the BSs located outside the cooperative region. The terms \( g_{i} \) denote the effective channel gains of the desired signal, where \( g_{i} \sim \Gamma \left( M_{\mathrm{t}} - 1, p^c \right) \), as described in \cite{Hosseini2016Stochastic}. According to the definition provided below equation (\ref{SIRexpression}), the distribution of \( g_{j} \) can be derived using the moment matching technique \cite{Hosseini2016Stochastic}.
Given that \( \mathrm{E}[p^c\left|\mathbf{h}_{j}^H \mathbf{w}_j^c\right|^2] = p^c \) and \( \mathrm{E}[p^s\left|\mathbf{h}_{j}^H \mathbf{w}_j^s\right|^2] = p^s \), we obtain \( \mathrm{E}[g_{j}] = p^s + p^c = 1 \). Moreover, since
$
\mathrm{E}[g_{j}^2] = \mathrm{E}\left[\left|\mathbf{h}_{j}^H \mathbf{w}_j^s\right|^4\right] + \mathrm{E}\left[\left|\mathbf{h}_{j}^H \mathbf{w}_j^c\right|^4\right] + 2\mathrm{E}\left[\left|\mathbf{h}_{j}^H \mathbf{w}_j^s\right|^2 \left|\mathbf{h}_{j}^H \mathbf{w}_j^c\right|^2\right] = (p^s + p^c)^2 = 1,
$, the interference channel gain \( g_{j} \) can be approximated by a gamma-distributed random variable. Consequently, \( g_{j} \sim \Gamma(1, 1) \).

Based on the above discussions, the useful signal power can be expressed by 
\vspace{0mm}
\begin{equation}\label{UsefulSignalPower}
	\begin{aligned}
		{\rm{E}}\left[ {{e^{ - zg_{1}}}} \right] \simeq \int_0^\infty  {\frac{{{e^{ - zx}}{x^{{M_{\mathrm{t}} - 2 }}}{e^{-\frac{x}{p^c}}}}}{{(p^c)^{M_{\mathrm{t}} - 1}\Gamma \left( {M_{\mathrm{t}} - 1} \right)}}} {\rm{d}}x = {\left( {1 + p^cz} \right)^{1 - {M_{\mathrm{t}} }}}.
		\vspace{0mm}
	\end{aligned}
\end{equation}
Then, we derive tight bounds on the Laplace transform of the cooperative transmission power and on the communication interference as follows.

\begin{thm}\label{LaplaceTransform}
	The Laplace transforms of $U$ and $I$ are given by
	\vspace{0mm}
	\begin{equation}\label{LaplaceTransformU}
		{\rm{E}}\left[ {{e^{ - z U}}} \right] = \exp \left[ - \pi \lambda_b {\rm{H}}_1\left( { zp^c,M_{\mathrm{t}}-1,\alpha, D } \right) \right],
	\end{equation}
	\begin{equation}\label{LaplaceTransformI}
		{\rm{E}}\left[ {{e^{ - zI}}} \right] =   \exp \left[  - \pi  \lambda_b {\rm{H}}_2\left( {z, \alpha , D }  \right) \right],
		\vspace{0mm}
	\end{equation}
	where ${\rm{H}}_1\left( {x,K,\alpha , D } \right)  = K{x^{\frac{2}{\alpha }}}\!\left(\! {{\bar B}\left( {\frac{x}{{x + D ^{\alpha}}},1  - \frac{2}{\alpha },K + \frac{2}{\alpha }} \right) } \right) + D^2\left( {1 - {{{{\left( {1 + x{D ^{-\alpha} }} \right)}^{-K}}}}} \right) $ and ${\rm{H}}_2\left( {x, \alpha , D} \right) =  D^2\left( {{{{{\left( {1 + x{D ^{-\alpha} }} \right)}^{-1}}}}} -1 \right) + {x^{\frac{2}{\alpha }}}\!\left(\! {{ B}\left( {\frac{x}{{x + D ^{\alpha}}},1  - \frac{2}{\alpha },1 + \frac{2}{\alpha }} \right) } \right)$.
\end{thm}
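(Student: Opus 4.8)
The plan is to recognize both $U$ and $I$ as shot-noise functionals of the PPP $\Phi_b$ carrying independent marks (the effective channel gains $g_i$), and to evaluate their Laplace transforms through the probability generating functional (PGFL) of a Poisson process. Because restricting a homogeneous PPP to the disk $\mathcal{C}(0,D)$ and to its complement produces two independent PPPs of the same intensity $\lambda_b$, the sum $U$ is supported on $\mathcal{C}(0,D)$ while $I$ is supported on $\mathbb{R}^2\setminus\mathcal{C}(0,D)$. For a marked PPP the PGFL yields ${\rm{E}}[e^{-z\sum f}]=\exp\big(-\lambda_b\int(1-{\rm{E}}_g[e^{-z g\|{\bf{d}}\|^{-\alpha}}])\,{\rm{d}}{\bf{d}}\big)$, the integral being taken over the relevant region, so the entire task reduces to evaluating two radial integrals.

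First I would carry out the mark-averaging inside the exponent. Using the Gamma Laplace transform already recorded in \eqref{UsefulSignalPower}, I have ${\rm{E}}_g[e^{-z g r^{-\alpha}}]=(1+p^c z r^{-\alpha})^{-(M_{\mathrm{t}}-1)}$ for the desired signal and, since $g_j\sim\Gamma(1,1)$, the value $(1+z r^{-\alpha})^{-1}$ for the interference. Passing to polar coordinates replaces ${\rm{d}}{\bf{d}}$ by $2\pi r\,{\rm{d}}r$, so the exponents become one-dimensional: ${\rm{E}}[e^{-zU}]=\exp\big(-2\pi\lambda_b\int_0^D(1-(1+p^c z r^{-\alpha})^{-(M_{\mathrm{t}}-1)})r\,{\rm{d}}r\big)$ and ${\rm{E}}[e^{-zI}]=\exp\big(-2\pi\lambda_b\int_D^\infty(1-(1+z r^{-\alpha})^{-1})r\,{\rm{d}}r\big)$. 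Comparing with the claimed form then identifies $\pi\lambda_b {\rm{H}}_1$ and $\pi\lambda_b {\rm{H}}_2$ with twice these integrals.

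Next I would evaluate each radial integral by integration by parts with $v=r^2$. This isolates the boundary contribution $D^2(1-(1+x D^{-\alpha})^{-K})$ (with its $K=1$, sign-reversed analogue for $I$) and leaves a term of the form $K\alpha x\int r^{1-\alpha}(1+x r^{-\alpha})^{-K-1}\,{\rm{d}}r$. The substitution $t=\tfrac{x}{x+r^\alpha}$, under which $1+x r^{-\alpha}=(1-t)^{-1}$ and $r^\alpha=x(1-t)/t$, turns the integrand into the Beta-type kernel $t^{-2/\alpha}(1-t)^{K+2/\alpha-1}$. For $U$ the radial range $r\in[0,D]$ maps to $t\in[\tfrac{x}{x+D^\alpha},1]$, producing the upper incomplete Beta $\bar B(\tfrac{x}{x+D^\alpha},1-\tfrac{2}{\alpha},K+\tfrac{2}{\alpha})$; for $I$ the range $r\in[D,\infty)$ maps to $t\in[0,\tfrac{z}{z+D^\alpha}]$, producing the lower incomplete Beta $B(\tfrac{z}{z+D^\alpha},1-\tfrac{2}{\alpha},1+\tfrac{2}{\alpha})$. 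Matching the prefactor $K\alpha x\cdot\tfrac{1}{\alpha}x^{2/\alpha-1}=K x^{2/\alpha}$ (and $z^{2/\alpha}$ when $K=1$) then completes the identification with ${\rm{H}}_1$ and ${\rm{H}}_2$.

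The main obstacle I anticipate lies in the interference term rather than the desired-signal term: the integral $\int_D^\infty(\cdots)r\,{\rm{d}}r$ converges only for $\alpha>2$, and this is precisely the condition under which the boundary term $[(1-(1+z r^{-\alpha})^{-1})r^2]$ vanishes as $r\to\infty$ during the integration by parts, so I would state this condition explicitly. A secondary point requiring care is the legitimacy of the PGFL step itself, namely verifying that the marks $g_i$ are i.i.d.\ and independent of the point locations, and that the user-centric cluster may be treated as the deterministic disk $\mathcal{C}(0,D)$ so that the inside/outside split is an independent thinning of $\Phi_b$; these are guaranteed by the Gamma moment-matching established above and by the complete independence property of the PPP.
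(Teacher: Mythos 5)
Your proposal is correct and follows essentially the same route as the paper's Appendix E: the PGFL of the PPP restricted to the disk and its complement, Gamma mark-averaging yielding $(1+p^czr^{-\alpha})^{-(M_{\mathrm{t}}-1)}$ and $(1+zr^{-\alpha})^{-1}$, the change of variable $r^2$ with integration by parts to isolate the boundary term, and the substitution mapping the remaining integral to an incomplete Beta function (your $t=\tfrac{x}{x+r^\alpha}$ is the paper's $u=\tfrac{x}{1-x}$ in disguise). Your explicit remarks on the $\alpha>2$ convergence condition and the independent-thinning justification are sound additions but do not change the argument.
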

\begin{proof}
	Please refer to Appendix D.
\end{proof}

Based on the Laplace transforms of $U$ and $I$ in (\ref{LaplaceTransformU}) and (\ref{LaplaceTransformI}), the expected data rate is formulated in Theorem \ref{CommunicationTightExpression}.

\begin{theorem}\label{CommunicationTightExpression}
	The communication performance is characterized by 
	\vspace{0mm}
	\begin{equation}\label{TightCommunicationExpression}
		\begin{aligned}
			R_c =&\int_0^\infty  {\frac{{1 - \exp \left[ - \pi \lambda_b {\rm{H}}_1\left( { zp^c,M_{\mathrm{t}}-1,\alpha, D } \right) \right]}}{z}}  \\
			&\times \exp \left[  - \pi  \lambda_b {\rm{H}}_2\left( {z, \alpha , D }  \right) \right]{\rm{d}}z,
			\vspace{0mm}
		\end{aligned}
	\end{equation}
where $\lambda_b = \lambda_t / M_{\mathrm{t}}$.
\end{theorem}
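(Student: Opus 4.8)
The plan is to obtain (\ref{TightCommunicationExpression}) by combining the integral representation of the average rate in (\ref{LapalaceTransform}) with the closed-form Laplace transforms established in Theorem \ref{LaplaceTransform}, so that the bulk of the argument reduces to a justified substitution. First I would recall from (\ref{LapalaceTransform}) that the achievable rate admits the form $R_c = \int_0^\infty \frac{1 - {\rm{E}}[e^{-zU}]}{z}{\rm{E}}[e^{-zI}]\,{\rm{d}}z$, where $U = \sum_{i\in\Phi_c} g_i \|{\bf{d}}_i\|^{-\alpha}$ collects the non-coherently combined desired powers from the cooperating cluster and $I = \sum_{j\in\{\Phi_b \backslash \Phi_c\}} g_j\|{\bf{d}}_j\|^{-\alpha}$ is the out-of-cluster interference.

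The structural fact that makes this representation valid is the independence of $U$ and $I$, which I would establish as follows. Under the user-centric clustering rule, $\Phi_c$ consists of the BSs inside the disk ${\cal{C}}(0,D)$, whereas the interfering BSs lie in the complementary region $\Phi_b \backslash \Phi_c$; because these two regions are disjoint, the independent-scattering property of the homogeneous PPP $\Phi_b$ guarantees that the restricted point processes are independent, and the fading gains $g_i$ and $g_j$ are independent by construction. Consequently, the expectation of $\log(1+U/I)$ factorizes in exactly the manner required by the Hamdi identity (\ref{CommunicationBasicEquation}) \cite{hamdi2010useful}, yielding the product form above.

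Next I would substitute the Laplace transforms from Theorem \ref{LaplaceTransform}, namely ${\rm{E}}[e^{-zU}] = \exp[-\pi\lambda_b {\rm{H}}_1(zp^c, M_{\mathrm{t}}-1, \alpha, D)]$ and ${\rm{E}}[e^{-zI}] = \exp[-\pi\lambda_b {\rm{H}}_2(z,\alpha,D)]$. Here the appearance of the argument $zp^c$ rather than $z$ inside ${\rm{H}}_1$ is inherited from the desired-signal gain distribution $g_i\sim\Gamma(M_{\mathrm{t}}-1,p^c)$ used in (\ref{UsefulSignalPower}), so I would carefully track that this scale parameter is carried consistently through the substitution; this bookkeeping of the $p^c$ scaling is essentially the only place where an error could creep in. Inserting both expressions into the integral reproduces (\ref{TightCommunicationExpression}) directly.

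Finally, I would invoke the antenna-density relation $\lambda_b = \lambda_t / M_{\mathrm{t}}$ introduced in the system model to express the BS density in terms of the fixed transmit-antenna density, completing the statement. Since the heavy analytical work, i.e. evaluating the PPP functionals that define ${\rm{H}}_1$ and ${\rm{H}}_2$, is already discharged in Theorem \ref{LaplaceTransform} (Appendix D), the present proof is little more than a verification that the independence hypothesis of the Hamdi identity holds and that the Laplace transforms are substituted with the correct arguments; I expect the independence justification to be the only genuinely non-mechanical step.
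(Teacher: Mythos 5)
Your proposal is correct and follows essentially the same route as the paper: both apply the Hamdi integral identity (\ref{CommunicationBasicEquation}) to $\mathrm{E}[\log(1+U/I)]$ and then substitute the Laplace transforms of $U$ and $I$ from Lemma \ref{LaplaceTransform}, with the stated $\lambda_b=\lambda_t/M_{\mathrm{t}}$ relation. The only cosmetic differences are that you add an explicit (and welcome) justification of the independence of $U$ and $I$ via the restriction of the PPP to the disjoint regions inside and outside ${\cal C}(0,D)$, whereas the paper instead writes a conditional expectation over the nearest-BS distance $r$ with density $f_r(r)=2\pi\lambda_b r e^{-\pi\lambda_b r^2}$ and integrates it out trivially, since the integrand does not depend on $r$.
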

\begin{proof}
	According to (\ref{CommunicationBasicEquation}), by substituting the Laplace transforms of useful signal and interference in Lemma \ref{LaplaceTransform} into (\ref{LapalaceTransform}), the conditional expected spectrum efficiency is given by
	\begin{equation}\label{longEquation2}
		\begin{aligned}
			&\int_0^\infty \int_0^\infty  {\frac{{1 - \exp \left[ - \pi \lambda_b {\rm{H}}_1\left( { zp^c,M_{\mathrm{t}}-1,\alpha, D } \right) \right]}}{z}}  \\
			&\times \exp \left[  - \pi  \lambda_b {\rm{H}}_2\left( {z, \alpha , D }  \right) \right] {f_r}\left( r \right) {\rm{d}}r {\rm{d}}z ,
			\vspace{0mm}
		\end{aligned}
	\end{equation}
	where we have ${f_r}\left( r \right) = 2\pi {\lambda _b}r{e^{ - \pi {\lambda _b}{r^2}}}$. Here, we note that since the collaboration region is fixed with a constant radius \(D\), the effective signal and interference contributions are determined solely by the distance \(r\) to the nearest BS, eliminating the need to consider the joint distance distribution of BSs within the cooperation region.
	Then, by solving the integral with respect to $r$, (\ref{TightCommunicationExpression}) can be obtained. 
	This completes the proof.
\end{proof}

According to (\ref{TightCommunicationExpression}), the communication rate increases monotonically with the increase of $D$, which is due to having more BSs on average participating in cooperative transmission, while users receive less interference. We will show in Section \ref{SimulationsSection} that the tractable expression given in (\ref{TightCommunicationExpression}) is closely approximated by Monte Carlo simulations.  

\subsection{Optimal Antenna-to-BS Allocation Analysis}
\label{ApproximatedCommunication}

To analyze the optimal antenna-to-BS allocation, we adopt simplifications for maximizing the expected SIR. First, we simplify the expected data rate as
\vspace{0mm}
\begin{equation}\label{ApproximationSIR}
		{{\rm{E}}_{r,\Phi _b^S,g_i}}\left[ {\ln \left( {1 + \frac{S}{I}} \right)} \right] 
		\approx {{\rm{E}}_r}\left[ {\ln \left( {1 + \frac{{{{\bar S}_r}}}{{{{\bar I}_r}}}} \right)} \right],
		\vspace{0mm}
\end{equation}
where ${\bar S}_r = {\rm{E}}_{\Phi _b, g_i} [ {g_{1}} + \sum\nolimits_{{{i}} \in {\Phi_a}} {g_{i}} \left\| {\bf{d}}_i \right\|^{-\alpha} r^{\alpha} ] = p^c {r^{ - \alpha }} + \frac{{\pi {\lambda _b}}}{{\alpha  - 2}}\left( {{r^{ - \alpha  + 2}} - D^{ - \alpha  + 2}} \right)$ and ${\bar I}_r = {\rm{E}}_{\Phi _b, g_i} \left[ \sum\nolimits_{{{j}} \in \{\Phi_b \backslash \Phi_a\}} g_j \left\| {\bf{d}}_j \right\|^{-\alpha} r^\alpha  \right] = \frac{{\pi {\lambda _b}}}{{\alpha  - 2}}{D^{2 - \alpha }}$. Based on (\ref{ApproximationSIR}), we simplify the antenna allocation optimization by reformulating the objective from maximizing the expected spectral efficiency to maximizing the expected SIR, thereby streamlining the analysis. Then, it follows that
\vspace{0mm}
\begin{equation}\label{SimplifiedCommunicationExpression}
\begin{aligned}
	{{\rm{E}}_{r,\Phi _b^S,g_i}}\left[ {\ln \left( {1 + \frac{S}{I}} \right)} \right] 
	\approx  {{\rm{E}}_r}\left[ {\ln \left( {1 + {\overline {\rm{SIR}}} } \right)} \right],
\end{aligned}
	\vspace{0mm}
\end{equation}
where we have $\overline {\rm{SIR}} = M_{\mathrm{t}}\left( {\left({{\frac{{\alpha  - 2}}{{\pi {\lambda _b}}}{r^{ - \alpha }} + {r^{ - \alpha  + 2}}}}\right)D^{ \alpha  - 2} - 1} \right).$
In (\ref{SimplifiedCommunicationExpression}), we consider the optimal communication design, where the beamforming gain is \( M_{\mathrm{t}} \) rather than \( p^c(M_{\mathrm{t}} - 1) \).
The $\overline {\rm{SIR}}$ value increases monotonically with the size of cooperative regions $D$. Due to the complicated integral operation of the distance distribution $r$ in (\ref{SimplifiedCommunicationExpression}), it is challenging to directly obtain the optimal antenna-to-BS allocation strategy based on (\ref{SimplifiedCommunicationExpression}). Thus, we analyze the relationship between the optimal number of antennas $M_{\rm{t}}$ and the expected SIR of the typical communication user. In the following, we use an approximate method for analysis, where a sufficiently small value of $\epsilon$ is chosen as the lower limit of integration to ensure the feasibility of the integral and the validity of the analysis.
\vspace{0mm}
	\begin{equation}\label{ExpectedSIR}
	\begin{aligned}
		&{\rm{E}}_r\left[ {\frac{{{{\bar S}_r}}}{{{{\bar I}_r}}}} \right] = \int_\epsilon^D  {M_{\mathrm{t}}\left( {\frac{{\frac{{\alpha  - 2}}{{\pi {\lambda _b}}}{r^{ - \alpha }} + {r^{ - \alpha  + 2}}}}{{{D^{ - \alpha  + 2}}}} - 1} \right)} f_r(r) dr \\
		=& M_{\mathrm{t}} \bigg( \frac{{\alpha  - 2}}{{{D^{ - \alpha  + 2}}}}{{\left( {\pi \lambda_b } \right)}^{\frac{{\alpha  - 2}}{2}}}\int_\epsilon^{\pi \lambda_b {D^2}} {{u^{ - \frac{\alpha }{2}}}} {e^{ - u}}du  \\
		&	\!+ \!\frac{{{{\left( {\pi \lambda_b } \right)}^{\frac{{\alpha  - 2}}{2}}}}}{{{D^{ - \alpha  + 2}}}}\int_\epsilon^{\pi \lambda_b {D^2}} \!\! {{u^{\frac{{ - \alpha  + 2}}{2}}}{e^{ - u}}du}  + {e^{ - \pi \lambda_b {D^2}}} \!- \!1 \!\bigg) 
		\triangleq  G(M_{\mathrm{t}}).
		\vspace{0mm}
	\end{aligned}
\end{equation}
Building on the above analysis, in the following, we investigate the optimal antenna-to-BS allocation in two specific cases.

\begin{Pro}\label{PropositionAlpha2}
	When $\alpha \to 2$, the optimal antenna-to-BS allocation strategy for communication is centralized, i.e., $\lambda_b^* = \frac{1}{\pi D^2}$ and $M_{\rm{t}}^* = \lambda_t \pi D^2$.
\end{Pro}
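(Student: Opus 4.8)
The plan is to analyze the expected SIR $G(M_{\rm t})$ in (\ref{ExpectedSIR}) directly in the regime $\alpha\to 2$ and to show that the beamforming‑gain factor $M_{\rm t}$ comes to dominate while every distance/geometry‑dependent contribution degenerates, so that $G$ becomes monotonically increasing in $M_{\rm t}$. It is convenient to reparametrize the allocation through the mean cluster occupancy $K=\pi\lambda_b D^2$, so that $M_{\rm t}=\lambda_t/\lambda_b=\lambda_t\pi D^2/K$ and the feasibility requirement ``at least one serving BS on average'' reads $K\ge 1$. Maximizing $G$ over the allocation is then equivalent to maximizing $G(K)$ over $K\ge 1$, and the target claim $\lambda_b^\star=1/(\pi D^2)$, $M_{\rm t}^\star=\lambda_t\pi D^2$ is exactly the boundary point $K=1$.

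First I would treat the three summands of (\ref{ExpectedSIR}) separately. Writing $\delta=\alpha-2\to 0^+$, the combined factor $D^{\alpha-2}(\pi\lambda_b)^{(\alpha-2)/2}$ collapses to $K^{\delta/2}$. At $\delta=0$ the second term reduces to $\int_0^K e^{-u}\,{\rm d}u=1-e^{-K}$ and the third term is $e^{-K}-1$; these cancel exactly, so the SIR vanishes at zeroth order in $\delta$ and its leading nonzero value is $O(\alpha-2)$, with the part that diverges as the cutoff $\epsilon\to 0$ originating solely from the first term. The crux, which I expect to be the main obstacle, is that this first term is a $0\cdot\infty$ indeterminate form: its prefactor $\delta$ vanishes while the integral $\int_{u_\epsilon}^{K}u^{-1-\delta/2}e^{-u}\,{\rm d}u$ diverges as $u_\epsilon=\pi\lambda_b\epsilon^2\to 0$. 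I would resolve the indeterminacy by isolating the singular part, splitting off $\int_{u_\epsilon}^{K}u^{-1-\delta/2}\,{\rm d}u=\tfrac{2}{\delta}(u_\epsilon^{-\delta/2}-K^{-\delta/2})$ from an integral that stays finite as $u_\epsilon\to 0$. Using $K/u_\epsilon=(D/\epsilon)^2$, multiplying back the prefactor gives a leading contribution $2(D/\epsilon)^{\delta}-2$, whose first‑order expansion in $\delta$ is $2\delta\ln(D/\epsilon)$.

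Collecting terms, I would obtain $G(M_{\rm t})\approx 2(\alpha-2)\ln(D/\epsilon)\,M_{\rm t}+(\alpha-2)\,M_{\rm t}\,g(K)$, where $g(K)$ gathers the bounded, $O(1)$ geometry‑dependent corrections arising from the first‑order expansion of all three summands (incomplete‑gamma and $\int_0^K\ln u\,e^{-u}\,{\rm d}u$ type integrals). Because the cutoff term $\ln(D/\epsilon)$ diverges as $\epsilon\to 0$ and is independent of the allocation, while $g(K)$ remains bounded, the leading behaviour $G\approx 2(\alpha-2)\ln(D/\epsilon)\,M_{\rm t}$ is strictly increasing in $M_{\rm t}$. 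Equivalently, writing $G(K)=\tfrac{C_0(\alpha-2)}{K}\bigl[2\ln(D/\epsilon)+g(K)\bigr]$ with $C_0=\lambda_t\pi D^2$, I would verify ${\rm d}G/{\rm d}K<0$ for all $K\ge 1$ once $\epsilon$ is small enough that $2\ln(D/\epsilon)+g(K)>K\,g'(K)$, so that $G$ is maximized at the smallest admissible $K$.

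Finally I would invoke the feasibility constraint: serving a user requires on average at least one BS inside the cooperation disc, i.e. $\lambda_b\pi D^2\ge 1$, equivalently $K\ge 1$. Hence the maximizer is $K^\star=1$, which through $\lambda_b=\lambda_t/M_{\rm t}$ yields $\lambda_b^\star=1/(\pi D^2)$ and $M_{\rm t}^\star=\lambda_t\pi D^2$, i.e. the fully centralized allocation. The interpretation is that as $\alpha\to 2$ the reduced‑access‑distance benefit that would reward distributing antennas vanishes, leaving only the coherent beamforming gain $M_{\rm t}$, which is maximized by concentrating all antennas at a single site.
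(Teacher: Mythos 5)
Your proposal is correct and starts from the same place as the paper's own proof: both work with $G(M_{\rm t})$ in (\ref{ExpectedSIR}) and both hinge on the exact cancellation $\gamma(1,\pi\lambda_b D^2)+e^{-\pi\lambda_b D^2}-1=0$ of the second and third summands as $\alpha\to 2$, leaving only the first summand, which is proportional to $M_{\rm t}$. Where you differ is in how that surviving term is handled. The paper simply sets $\vartheta=\int_\epsilon^{\pi\lambda_b D^2}u^{-\alpha/2}e^{-u}\,{\rm d}u$, asserts $G=M_{\rm t}\vartheta$ with $\vartheta>0$, and claims $G'(M_{\rm t})=\vartheta/2\ge 0$; this glosses over two points, namely that the term actually carries a vanishing prefactor $(\alpha-2)(\pi\lambda_b D^2)^{(\alpha-2)/2}$ (so the product is a $0\cdot\infty$ form, not a fixed positive constant times $M_{\rm t}$), and that $\vartheta$ itself depends on $M_{\rm t}$ through the upper integration limit, which the stated derivative ignores. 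Your expansion resolves both: by splitting off the singular part you show the product tends to $2(\alpha-2)\ln(D/\epsilon)$, which is strictly positive, independent of the allocation $K=\pi\lambda_b D^2$, and dominates the bounded $O(\alpha-2)$ geometry corrections $g(K)$ once $\epsilon$ is small, so that $G$ is genuinely increasing in $M_{\rm t}$ uniformly over the feasible range. You also make explicit the constraint $K\ge 1$ that pins the maximizer to $\lambda_b^\star=1/(\pi D^2)$, $M_{\rm t}^\star=\lambda_t\pi D^2$, which the paper invokes only implicitly. In short, same route, but your version supplies the limiting argument the paper's proof leaves informal.
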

\begin{proof}
	Let ${\int_\epsilon^{\pi \lambda {D^2}} {{u^{ - \frac{\alpha }{2}}}} {e^{ - u}}du} = \vartheta$. Upon substituting it into (\ref{ExpectedSIR}), due to $\lim\limits_{\epsilon \to 0}\int_\epsilon^{\pi \lambda_b {D^2}} {{u^{\frac{{ - \alpha  + 2}}{2}}}{e^{ - u}}du} = \gamma \left( {1,\pi \lambda_b {D^2}} \right)$, it follows that
	\vspace{0mm}
	\begin{equation}
			{\rm{E}}\left[ \frac{{\bar{S}_r}}{{\bar{I}_r}} \right] = M_{\mathrm{t}}\left( \vartheta +  {\gamma \left( {1,\pi \lambda_b {D^2}} \right) + {e^{ - \pi \lambda_b {D^2}}} - 1} \right).
			\vspace{0mm}
	\end{equation}
	Due to $\gamma(1, x) = 1 - e^{-x}$, ${\rm{E}}\left[ \frac{{\bar{S}_r}}{{\bar{I}_r}} \right] = M_{\mathrm{t}} \vartheta$, we have $\vartheta  > 0$ since ${\rm{E}}\left[ \frac{{\bar{S}_r}}{{\bar{I}_r}} \right] \ge 0$. Therefore, the derivative of $G(M_{\mathrm{t}})$ follows 
	\vspace{0mm}
	\begin{equation}
		G'(M_{\mathrm{t}}) =  \frac{{\vartheta}}{2} \ge 0.
		\vspace{0mm}
	\end{equation}
	Thus, given a region $|{\cal{A}}|$, the number of antennas is $|{\cal{A}}| \times \lambda_t$, and the expected SIR of the communication user increases upon increasing the number of antennas allocated at each location. Therefore, the optimal antenna-to-BS allocation strategy is centralized, i.e., $\lambda_b^* = \frac{1}{\pi D^2}$ and $M_{\rm{t}}^* = \lambda_t \pi D^2$.
\end{proof}

\begin{Pro}\label{PropositionAlphaLarge}
	When $\alpha \gg 4$, the optimal antenna-to-BS allocation strategy for communication is distributed, i.e., $\lambda^*_b = \lambda_t$, $M_{\mathrm{t}}^* = 1$.
\end{Pro}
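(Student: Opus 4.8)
The plan is to follow the derivative-based route of Proposition~\ref{PropositionAlpha2}, but to establish the \emph{opposite} monotonicity: once $\alpha>4$, the expected SIR $G(M_{\rm t})$ in~(\ref{ExpectedSIR}) is strictly decreasing in $M_{\rm t}$, so its maximizer is forced to the smallest admissible value $M_{\rm t}^*=1$, i.e. $\lambda_b^*=\lambda_t$. First I would make the $M_{\rm t}$-dependence explicit through a single scaling variable. Substituting $\lambda_b=\lambda_t/M_{\rm t}$ and writing $c\triangleq\pi\lambda_b D^2=\pi\lambda_t D^2/M_{\rm t}$ (which is strictly decreasing in $M_{\rm t}$ and maximal, $c=\pi\lambda_t D^2$, exactly at $M_{\rm t}=1$), the two integrals in~(\ref{ExpectedSIR}) become $\Phi_1(c)\triangleq\int_\epsilon^{c}u^{-\alpha/2}e^{-u}\,{\rm d}u$ and $\Phi_2(c)\triangleq\int_\epsilon^{c}u^{1-\alpha/2}e^{-u}\,{\rm d}u$, and the bracketed factor collapses to $S(c)=(\alpha-2)c^{(\alpha-2)/2}\Phi_1(c)+c^{(\alpha-2)/2}\Phi_2(c)+e^{-c}-1$, so that $G(M_{\rm t})=M_{\rm t}\,S(c)=(\pi\lambda_t D^2/c)\,S(c)$.

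Next I would differentiate. Since $c$ is the only conduit through which $M_{\rm t}$ enters $S$, the chain rule collapses to the compact form $\frac{{\rm d}G}{{\rm d}M_{\rm t}}=S(c)-c\,S'(c)$. Computing $S'(c)$ term by term using $\frac{{\rm d}}{{\rm d}c}c^{(\alpha-2)/2}=\tfrac{\alpha-2}{2}c^{(\alpha-4)/2}$, $\Phi_1'(c)=c^{-\alpha/2}e^{-c}$, and $\Phi_2'(c)=c^{1-\alpha/2}e^{-c}$, the $e^{-c}$ contributions arising from differentiating the second integral term and from $e^{-c}-1$ cancel in $S'(c)$, leaving
\begin{equation}\label{SignDerivativeLarge}
\frac{{\rm d}G}{{\rm d}M_{\rm t}}=-\Big[(\alpha-2)\tfrac{\alpha-4}{2}c^{\frac{\alpha-2}{2}}\Phi_1(c)+\tfrac{\alpha-4}{2}c^{\frac{\alpha-2}{2}}\Phi_2(c)+(\alpha-3)e^{-c}+1\Big].
\end{equation}

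The decisive step is the sign inspection of~(\ref{SignDerivativeLarge}). On the feasible range $c\ge\epsilon$ both $\Phi_1(c)$ and $\Phi_2(c)$ are nonnegative, and for $\alpha\gg4$ each coefficient inside the bracket---namely $(\alpha-2)(\alpha-4)/2$, $(\alpha-4)/2$, $\alpha-3$, and the constant $1$---is strictly positive. The bracket is therefore positive and $\frac{{\rm d}G}{{\rm d}M_{\rm t}}<0$, so $G$ is strictly decreasing on $M_{\rm t}\ge1$ and attains its maximum at the boundary $M_{\rm t}^*=1$, i.e. $\lambda_b^*=\lambda_t$. As an independent sanity check, the near-user term $A=(\alpha-2)c^{(\alpha-2)/2}\Phi_1(c)$ dominates for large $\alpha$, and its lower-limit Laplace estimate $A\sim2(c/\epsilon)^{(\alpha-2)/2}e^{-\epsilon}$ gives $G\sim M_{\rm t}\,(c/\epsilon)^{(\alpha-2)/2}\propto M_{\rm t}^{(4-\alpha)/2}$, which is decreasing precisely when $\alpha>4$; this pinpoints $\alpha=4$ as the threshold separating the centralized regime of Proposition~\ref{PropositionAlpha2} from the present distributed one.

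The main obstacle I anticipate lies in this sign argument rather than in the algebra. Two points need care: first, the $\epsilon$-regularization must be respected so that $\Phi_1,\Phi_2\ge0$ over the optimization domain $c\in[\epsilon,\pi\lambda_t D^2]$ (a BS arbitrarily close to the user would otherwise make the $r^{-\alpha}$ term diverge); second, the coefficient $(\alpha-2)(\alpha-4)/2$ of the dominant $\Phi_1$-term vanishes at $\alpha=4$ and turns negative for $2<\alpha<4$, so only the subleading terms survive there and the monotonicity can flip. This is exactly why the clean distributed-optimal conclusion is stated for $\alpha\gg4$, comfortably away from the $\alpha=4$ crossover, and why the $\alpha\to2$ analysis of Proposition~\ref{PropositionAlpha2} yields the centralized optimum instead.
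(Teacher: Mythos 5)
Your proof is correct and reaches the paper's conclusion, but by a noticeably different route. The paper first replaces the two incomplete-gamma integrals in (\ref{ExpectedSIR}) by their lower-limit-dominated estimates $\int_\epsilon^{\pi\lambda_b D^2}u^{-\alpha/2}e^{-u}\,{\rm d}u\approx \frac{2\epsilon^{1-\alpha/2}}{\alpha-2}$ and $\int_\epsilon^{\pi\lambda_b D^2}u^{1-\alpha/2}e^{-u}\,{\rm d}u\approx\frac{2\epsilon^{2-\alpha/2}}{\alpha-4}$ (valid only when $\alpha$ is large), collapses $G$ to the closed form $G(M_{\rm t})=M_{\rm t}\bigl(C_0M_{\rm t}^{(2-\alpha)/2}+e^{-\pi\lambda_tD^2/M_{\rm t}}-1\bigr)$, and then argues the derivative is negative because the term proportional to $\tfrac{4-\alpha}{2}$ dwarfs the remainder when $\alpha\gg4$. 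You instead keep the integrals exact, funnel all $M_{\rm t}$-dependence through the single variable $c=\pi\lambda_bD^2$, and obtain ${\rm d}G/{\rm d}M_{\rm t}=S(c)-cS'(c)$ in the explicitly negative form of your bracketed expression; I verified the cancellation of the $e^{-c}$ terms and the coefficients $(\alpha-2)(\alpha-4)/2$, $(\alpha-4)/2$, $\alpha-3$, $1$, and they are right. What your route buys is a strictly stronger and cleaner statement: monotone decrease holds for every $\alpha>4$ (given the paper's $\epsilon$-regularized model with a fixed lower limit), not just asymptotically, and it sidesteps the paper's somewhat loosely executed differentiation of the approximate $G$. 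What the paper's route buys is an explicit power-law $G\propto M_{\rm t}^{(4-\alpha)/2}$ that makes the $\alpha=4$ threshold visible at a glance --- which your closing Laplace-estimate sanity check recovers anyway. The only caveat, inherited from the paper rather than introduced by you, is that the lower integration limit after the substitution $u=\pi\lambda_b r^2$ should strictly be $\pi\lambda_b\epsilon^2$ (itself $M_{\rm t}$-dependent); both you and the paper treat it as a fixed $\epsilon$, so your "single conduit $c$" step is consistent with the model actually being analyzed.
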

\begin{proof}
	In (\ref{ExpectedSIR}), we have $\int_\epsilon ^{\pi \lambda {D^2}} {{u^{ - \frac{\alpha }{2}}}} {e^{-u}}du \approx \frac{{2{\epsilon ^{ - \frac{\alpha }{2} + 1}}}}{{\alpha  - 2}}$ and $\int_\epsilon ^{\pi \lambda {D^2}} {{u^{\frac{{ - \alpha  + 2}}{2}}}{e^{ - u}}du}  \approx \frac{{2{\epsilon ^{\frac{{ - \alpha  + 4}}{2}}}}}{{\alpha  - 4}}$. Then, it follows that
	\vspace{0mm}
	\begin{equation}\label{G_CMexpression}
		G(M_{\mathrm{t}}) = M_{\mathrm{t}}\left( {C_0{{M_{\mathrm{t}}}^{\frac{{2 - \alpha }}{2}}} + {e^{ - \frac{\pi \lambda_t {D^2}}{M_{\rm{t}}}}} - 1} \right),
		\vspace{0mm}
	\end{equation}
where $C_0 = \frac{2}{{{D^{ - \alpha  + 2}}}}{\left( {\pi {\lambda _t}} \right)^{\frac{{\alpha  - 2}}{2}}}{\epsilon ^{ - \frac{\alpha }{2} + 1}}\left( {1 + \frac{\epsilon }{{\alpha  - 4}}} \right)$. Due to $\int_\epsilon ^{\frac{\pi \lambda_t {D^2}}{M_{\rm{t}}}} {{u^{ - \frac{\alpha }{2}}}} {e^{ - u}}du \ge \frac{{2{\epsilon ^{ - \frac{\alpha }{2} + 1}}}}{{\alpha  - 2}}$, the derivative of $G(M_{\mathrm{t}})$ becomes as follows 
\vspace{0mm}
	\begin{equation}\label{InfityAlpha}
		\begin{aligned}
			G'(M_{\mathrm{t}}) =& \left( {2{{\left( {\frac{{\pi {\lambda _t}{D^2}}}{M_{\mathrm{t}}}} \right)}^{\frac{{\alpha  - 4}}{2}}}{\epsilon ^{ - \frac{\alpha }{2} + 1}}\left( {1 + \frac{\epsilon }{{\alpha  - 4}}} \right)\frac{{4 - \alpha }}{2} + 1} \right) \\
			& \times \frac{{\pi {\lambda _t}{D^2}}}{M_{\mathrm{t}}} \le 0.
			\vspace{0mm}
		\end{aligned}
	\end{equation}
The inequality in (\ref{InfityAlpha}) holds due to ${2{{\left( {\frac{{\pi {\lambda _t}{D^2}}}{M_{\mathrm{t}}}} \right)}^{\frac{{\alpha  - 4}}{2}}}{\epsilon ^{ - \frac{\alpha }{2} + 1}}\left( {1 + \frac{\epsilon }{{\alpha  - 4}}} \right)\frac{{4 - \alpha }}{2}} \ll -1$ when $\alpha \gg 4$.
Therefore, $G(M_{\mathrm{t}})$ increases, as $M_{\mathrm{t}}$ decreases.
\end{proof}

The above conclusions provide practical guidance for antenna-to-BS allocation. In environments suffering from strong fading, a distributed antenna allocation strategy should be adopted for positioning service antennas closer to the users. In such scenarios, the distributed antenna allocation enhances the useful signal, because although the interference may be increased, the resulting mitigation of fading is typically more substantial. Conversely, in environments having mild fading, such as line-of-sight-dominant channels, distributed allocation may significantly amplify the interference and this is not outweighed by the fading mitigation. In these cases, a centralized allocation can be more effective, as beamforming techniques can be used to strengthen the desired signal, while reducing interference.

\section{System Optimization}
In this section, we analyze the optimization of cooperative ISAC networks to demonstrate that antenna-to-BS allocation introduces a new degree of freedom, enabling a flexible balance between sensing and communication performance. Based on the derivations in Sections \ref{SensingSection} and \ref{CommunicationSection}, the derived tractable performance expressions of both S\&C are functions of the number of antennas and BS density. Then, we present a performance metric, namely the rate-CRLB performance region, to  verify the effectiveness of the proposed cooperative ISAC schemes. Without loss of generality, the S-C network performance region is defined as
\vspace{0mm}
\begin{equation}\label{PerformanceBoundry}
	\begin{aligned}
		{\cal{C}}_{c-s}(L,N, p^c, p^s)  = &\bigg\{ ( \hat r_c, {\hat {\rm{crlb}}} ): \hat r_c \le R_c, {\hat {\rm{crlb}}} \ge {\rm{CRLB}}, \\
		& p^s + p^c \le 1, M_{\rm{t}} \lambda_b \le \lambda_t, M_{\rm{r}} \lambda_b \le \lambda_r \bigg\},
				 \vspace{0mm}
	\end{aligned}
\end{equation}
where $(\hat r_c, \hat {\rm{crlb}})$ represents an achievable rate-CRLB performance pair. By examining this region, we gain a clear view of how improvements in one domain (e.g., increasing communication rate) may impact the other (e.g., localization accuracy) and vice versa. In this case, the rate-CRLB performance region can be utilized to characterize all the achievable communication rate and achievable CRLB pairs under the constraint of the antenna resources.

It is not difficult to verify that $R_c$ is monotonically increasing with the communication transmit power $p^c$, while the $\rm{CRLB}$ is monotonically decreasing with the sensing transmit power $p^s$. Therefore, under a certain power allocation, if the sensing and communication performance $(\hat r_c^*, \hat {\rm{crlb}}^*)$ at the current BS density is better than the performance under all other BS density configurations $(\hat r_c', \hat {\rm{crlb}}')$, then the power allocations at the latter BS densities are definitely not on the performance boundary. 
Indeed, it is sufficient to explore the two dimensions of $\lambda_b$ and $p^c$ individually, instead of using a two-dimensional exhaustive search.
Then, according to the optimal cooperative BS density of communication-only and sensing-only networks, denoted by $\lambda_b^*(c)$ and $\lambda_b^*(s)$, the search range can be drastically reduced within $[\min(\lambda_b^*(c), \lambda_b^*(s)), \max(\lambda_b^*(c), \lambda_b^*(s))]$.

\section{Simulations}
\label{SimulationsSection}

Using numerical results, we have studied the fundamental characteristics of ISAC networks and verified the tightness of the derived tractable expression by comparing it to Monte Carlo simulation results in this section. Our numerical simulations are averaged over network topologies and small-scale channel fading realizations. The system parameters are given as follows: the number of transmit antennas $M_{\mathrm{t}} = 4$; the number of receive antennas $M_{\mathrm{r}} = 10$; the transmit power $P_{\mathrm{t}} = 1$W at each BS; $|\xi|^2 = 1$; the transmit and receive antenna density $\lambda_t = \lambda_r = 50/km^2$; the frequency $f^c = 5$ GHz; the bandwidth $B \in [10, 100]$ MHz; the noise power $-100$dB; pathloss coefficients $\alpha = 4$ and $\beta = 2$.

\begin{figure}[t]
	\centering
	\includegraphics[width=7.7cm]{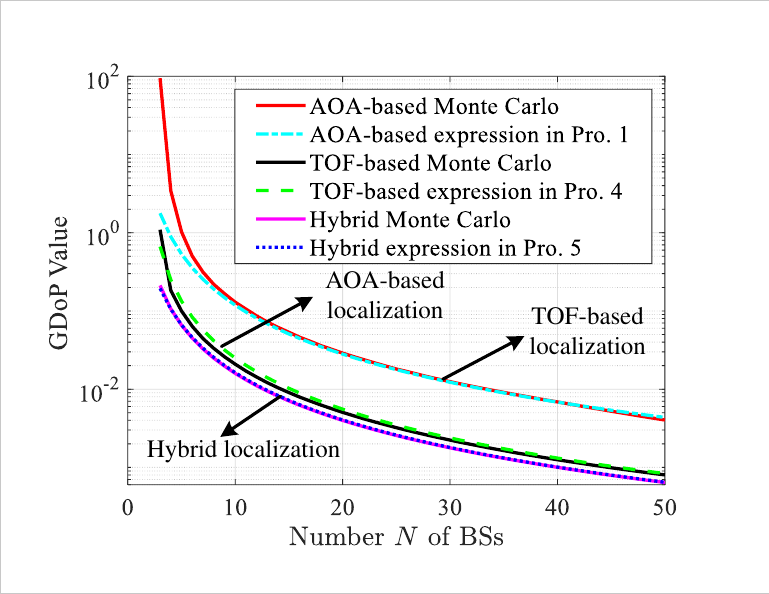}
	\vspace{0mm}
	\caption{GDoP versus BS number with three different localization methods: AOA-based, TOF-based, and Hybrid localization.}
	\label{figure4}
\end{figure}

To verify the accuracy of our sensing performance analysis, our Monte Carlo simulations are compared to the closed-form expression derived in Section \ref{SensingSection}, as shown in Fig.~\ref{figure4}. Specifically, the disparity between the simulation results and the results outlined in Propositions \ref{GDoPDerivation}, \ref{GDoPDerivationRangeBased}, and \ref{GDoPDerivationHybrid} is remarkably small.
This demonstrates the effectiveness of the GDoP expression presented in Propositions \ref{GDoPDerivation}, \ref{GDoPDerivationRangeBased}, and \ref{GDoPDerivationHybrid}.
It can be seen from Fig. \ref{figure4} that the TOF-based localization mechanism offers greater geometric gains, as it leverages directional diversity in both the transmission and reception stages, whereas AOA-based localization only utilizes geometric diversity at the receiving end. 
Typically, when the size of the cooperative sensing cluster increases from $N = 3$ to $N = 6$, the geometry-based gain of these three localization methods increases tenfold.
When the number of BSs is $N=2$, the hybrid localization method significantly reduces the GDoP values compared to TOF-based and AOA-based localization methods, namely by factors of 5 and 50 respectively, because it can fuse two types of measurement information for significantly increasing the amount of geometric information, when the number of BSs is small.

\begin{figure}[t]
	\centering
	\includegraphics[width=7.7cm]{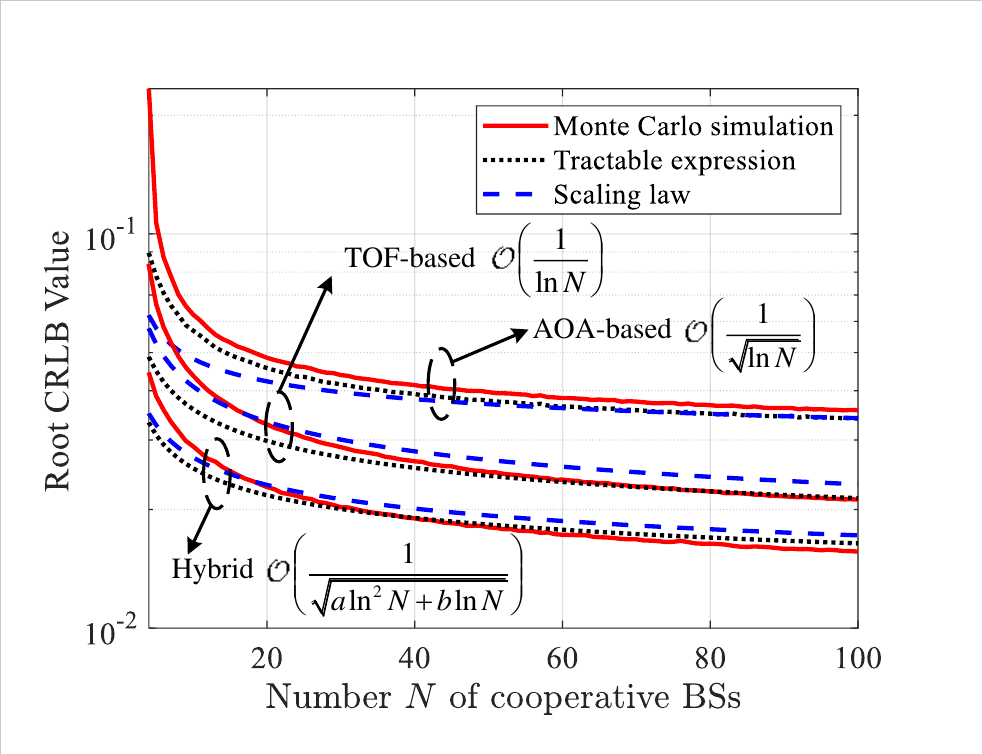}
	\vspace{0mm}
	\caption{Localization performance scaling law with respect to the cooperative BS number $N$ under the fixed number of antennas per BS.}
	\label{figure5}
\end{figure}
In Fig.~\ref{figure5}, given $M_t = 4$ and $M_r = 10$, and a bandwidth $B = 10$ MHz, the scaling law of the CRLB expressions derived in Theorems \ref{SimplifiedWithDis3}, \ref{SimplifiedWithDisRange}, and Proposition \ref{ScalingLawHybrid} is also consistent with the simulation results. 
It is important to note that when the number of cooperating BSs is relatively small, specifically when $N \le 4$, the closed-form expressions show a slight deviation from the results obtained through Monte Carlo simulations. This discrepancy arises primarily from the reduced accuracy in calculating the expectation of trigonometric functions, such as ${\rm{E}}[\sin(\cdot)]$ and ${\rm{E}}[\cos(\cdot)]$, when the number of ISAC BSs is limited.
When the number of BSs is small, the hybrid localization method can significantly improve performance. This is primarily because the geometric arrangement of the BSs relative to the target may be suboptimal, leading to poor performance in localization methods that rely solely on ranging or AOA measurements.
Fig. \ref{figure5} shows that increasing the number of cooperative BSs significantly improves accuracy, when the total number of BSs is limited. However, the performance gains become marginal once $N \ge 10$. This is expected, as adding more distant and randomly located BSs leads to increased signal attenuation, offering diminishing returns compared to nearby BSs. As depicted in Fig. \ref{figure5}, the expected CRLB decreases more slowly upon increasing $N$ than the GDoP value. Additionally, hybrid localization, which combines TOF and AOA estimation results, can greatly enhance accuracy, when the number of BSs is small.

\begin{figure}[t]
	\centering
	\includegraphics[width=7.7cm]{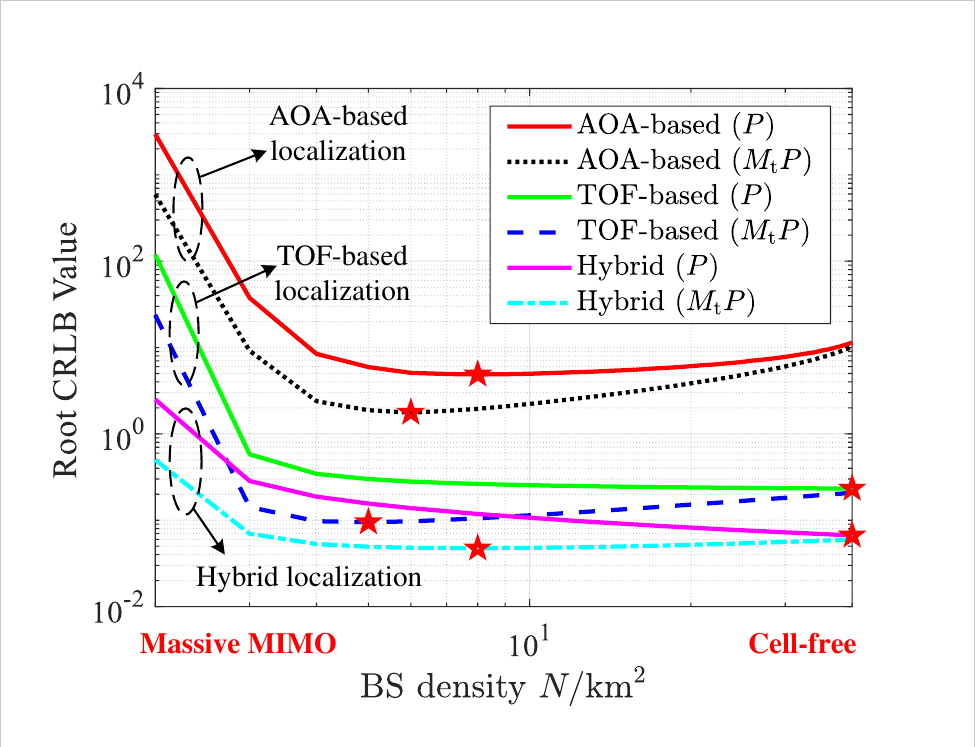}
	\vspace{0mm}
	\caption{Localization performance comparisons with respect to the cooperative BS density under the fixed antenna density ($P$ and $M_t \cdot P$ refer to the total power constraint on each BS).}
	\label{figure6}
\end{figure}

In Fig. \ref{figure6}, both the transmit and receive antenna densities are set to $\lambda_t = \lambda_r = 50/\text{km}^2$. The noise power is $\sigma_s^2 = -100$ dB, and the bandwidth is $B = 10$ MHz. In the legend, \(P\) denotes the fixed energy allocated under a station-level fixed power constraint, ensuring that each BS maintains constant total power regardless of the number of antennas. Conversely, \(M_{\rm{t}} \cdot P\) indicates that the BS’s total power scales proportionally with the number of antennas \(M_{\rm{t}}\), thereby keeping the overall network power constant within the cooperation region. Consistent with our analysis, Fig. \ref{figure6} shows that under $P$ power constraint, the optimal allocation strategy for the TOF-based and hybrid localization methods is a fully distributed configuration. By contrast, for AOA-based localization, the optimal allocation requires concentrating a certain number of antennas to improve the AOA estimation accuracy, resulting in an ideal allocation of eight BSs per square kilometer. 
When $M_t \cdot P$ power constraints are applied, both TOF-based and hybrid methods tend to favor a mixed configuration that combines centralized and distributed allocation. This approach better balances the beamforming gain of multiple antennas with the macro-diversity gain. Under optimal allocation conditions, our proposed hybrid localization method reduces the localization error to just 1.3\% and 28.8\% of that achieved by AOA-based and TOF-based localization methods, respectively, significantly enhancing localization accuracy.

\begin{figure}[t]
	\centering
	\includegraphics[width=7.7cm]{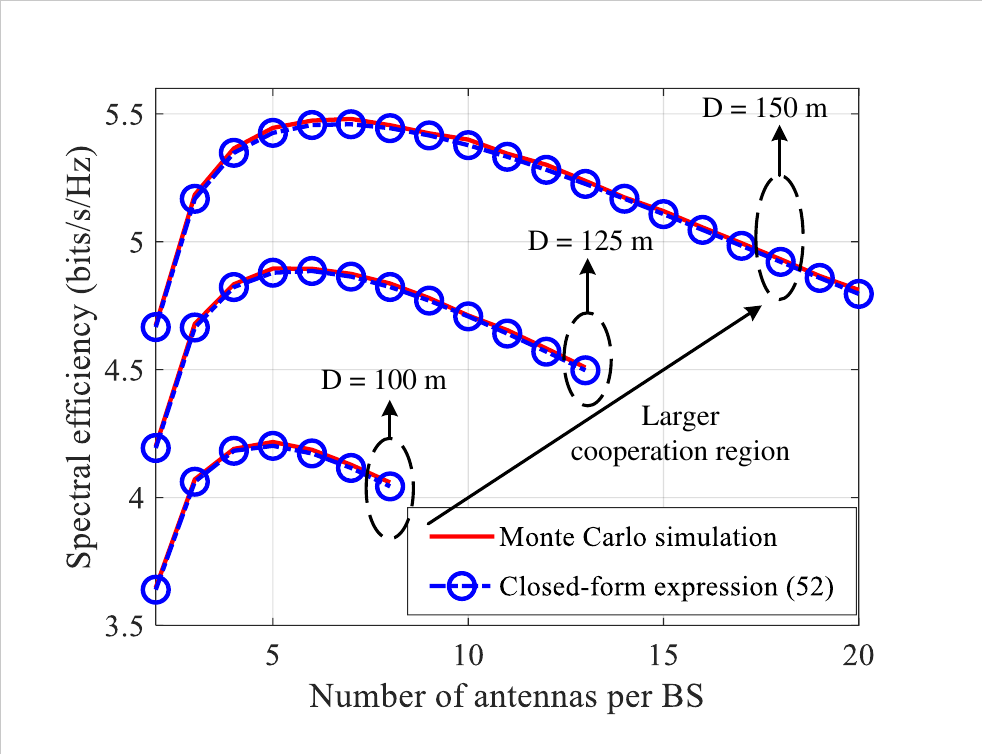}
	\vspace{0mm}
	\caption{Antenna allocation versus spectrum efficiency with different cooperative ranges $D$ = 100, 125, and 150 m.}
	\label{figure8}
\end{figure}

Fig. \ref{figure8} shows that our tractable expression derived for the communication rate closely aligns with the Monte Carlo simulations, given an antenna density of $\lambda_t = 300/\text{km}^2$. 
Fig. \ref{figure8} also shows that the spectral efficiency \(R_c\) initially increases with the number of antennas per BS but then decreases. This is because the initial improvement in communication performance attained by the beamforming gain is eventually outweighed by the performance erosion resulting from the increased average serving distance, which is due to the reduced BS density.
As the radius \(D\) of the cooperative area expands, the optimal communication rate increases, mainly due to the higher signal power and reduced interference power. Additionally, for a larger cooperative area, the optimal number of antennas per BS also increases to maximize communication rates. This is because a larger area provides more antenna resources, and adding antennas at each BS improves beamforming gain, which helps mitigate the path loss associated with the expanded cooperative area.

\begin{figure}[t]
	\centering
	\includegraphics[width=7.7cm]{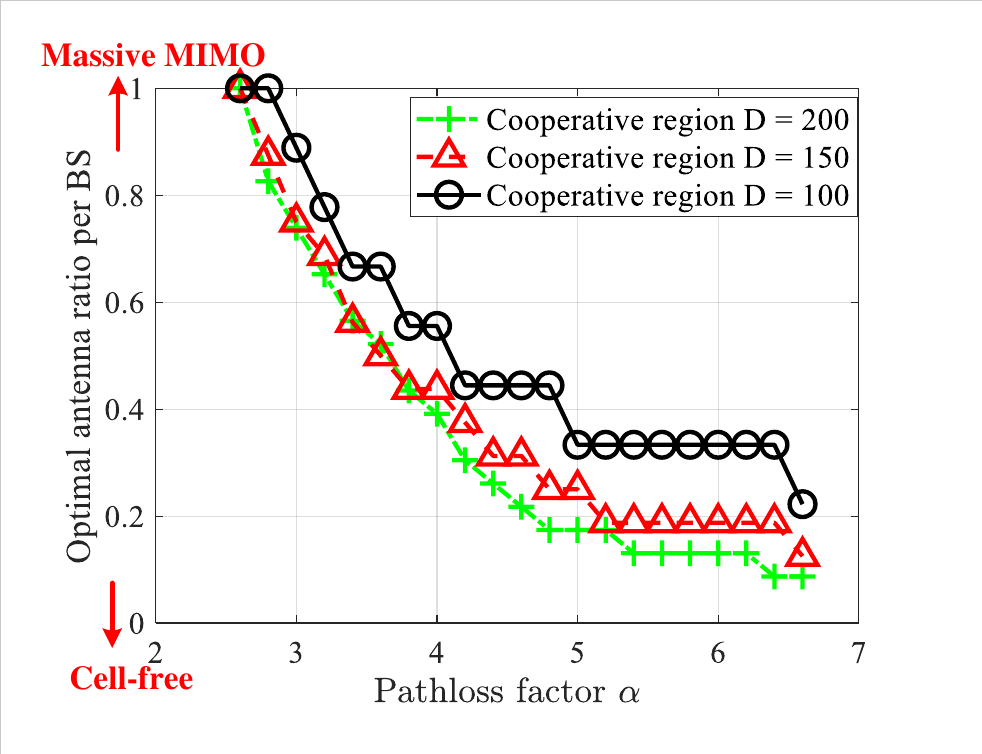}
	\vspace{0mm}
	\caption{Antenna allocation versus path loss factor $\alpha$ with various radius $D$ = 100, 150, 200 m.}
	\label{figure9}
\end{figure}

In Fig. \ref{figure9}, it is observed that as the attenuation coefficient \( \alpha \) increases, a distributed allocation becomes more favorable, which is consistent with our analysis in Propositions \ref{PropositionAlpha2} and \ref{PropositionAlphaLarge}. This is because distributing antennas reduce the average distance between the service BS and users, making it possible to ignore interference from distant sources, as both the effective signals and interference from far-off locations become negligible.
Conversely, when the attenuation coefficient \( \alpha \) decreases, a centralized allocation becomes more advantageous. This is because interference from distant BSs has a greater impact. Even though centralized allocation increases the average distance between the service BS and users, the reduced attenuation coefficient ensures that signals still reach the users with sufficient strength. It is observed from Fig. \ref{figure9} that as the cooperative area radius \(D\) increases, the optimal proportion of antennas deployed at each BS relative to the total number of antennas in the area gradually decreases, for example, from 55\% at \(D=100\) to 39\% at \(D=200\). The primary reason is that the expansion of the cooperative area involves more antennas in cooperation, requiring a more dense distribution of antennas closer to users to enhance communication rates.


\begin{figure}[t]
	\centering
	\includegraphics[width=7.7cm]{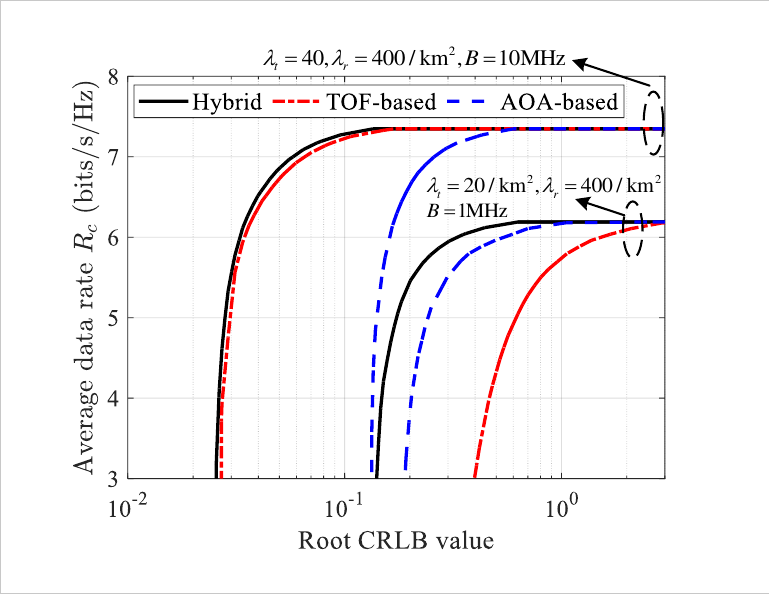}
	\vspace{0mm}
	\caption{Rate-CRLB performance boundary with different localization methods.}
	\label{figure10}
\end{figure}

Fig. \ref{figure10} illustrates the performance boundaries defined in (\ref{PerformanceBoundry}) for the optimal joint allocation and power allocation across three different localization schemes, of $D = 1000$ m. Notably, the hybrid localization scheme that combines both TOF and AOA information shows a significant enhancement in performance boundaries. Specifically, under ${\lambda _t} = 20/{\rm{k}}{{\rm{m}}^2}, {\lambda _r} = 400/{\rm{k}}{{\rm{m}}^2}$ and $B = 1{\rm{MHz}}$, the hybrid localization method reduces errors by a factor of 4.6 and 1.7 compared to TOF-based and AOA-based localization methods, respectively, while maintaining a communication spectral efficiency of 6 bits/Hz/s.
The overall S\&C performance boundaries improve as the transmit antenna density $\lambda_t$ increases. This is primarily due to the increased flexibility in resource allocation, which enables higher beamforming gains. As the bandwidth decreases, it can be observed from Fig. \ref{figure10} that AOA-based localization outperforms TOF-based localization. This is because the TOF measurement accuracy declines upon decreasing the bandwidth.


\section{Conclusions}
This work proposed an innovative cooperative ISAC network that combines multi-static sensing with CoMP data transmission, incorporating advanced localization methods that exploit both AOA and TOF measurements. Our study demonstrates that optimal antenna-to-BS allocation, through a balance of centralized and distributed configurations, significantly enhances the network performance by maximizing spatial diversity and coherent processing gains. Additionally, we provided analytical insights into the scaling laws of different localization techniques and establish a comprehensive framework for evaluating communication data rates across various antenna-to-BS allocation strategies. Our findings underscore the substantial benefits of hybrid localization approaches and cooperative resource allocation, offering a more flexible trade-off between sensing and communication performance. This work not only deepens the understanding of ISAC network dynamics but also lays a foundation for future designs that can better meet the dual demands of sensing and communication in complex wireless environments. The joint scheduling of antenna resources for multiple users and targets constitutes worthwhile future research. Additionally, network resource allocation schemes that incorporate target scanning or detection as well as antenna deployment costs will be investigated to optimize performance trade-offs, thereby enhancing both the theoretical framework and practical applicability of cooperative ISAC network designs.

\section*{Appendix A: \textsc{Proof of Proposition \ref{GDoPDerivation}}}
For ease of analysis, (\ref{GDoPExpression}) can be transformed as follows:
\begin{equation}\label{GDoPexpressionTransformation}
	\begin{aligned}
		& {{\rm{tr}}\left( {\tilde {\bf{F}}}_{\mathrm{A}}^{-1} \right)} 
		\!=\! \frac{{N\sum\nolimits_{i = 1}^N {{{\cos }^2}{\theta _i}} }}{{{N^2}\left( {\sum\nolimits_{i = 1}^N {a_{i}^2} } \right)\left( {\sum\nolimits_{i = 1}^N {d^2_{i}} } \right) \!-\! {N^2}{{\left( {\sum\nolimits_{i = 1}^N {a_{i}f_{i}} } \right)}^2}}}\\
		&= \frac{{\sum\nolimits_{i = 1}^N {{{\cos }^2}{\theta _i}} }}{{N\left( {\sum\nolimits_{i = 1}^N {\sum\nolimits_{j > i}^N {{{\left( {\sin {\theta _i}\cos {\theta _i}{{\cos }^2}{\theta _j} - \sin {\theta _j}\cos {\theta _j}{{\cos }^2}{\theta _i}} \right)}^2}} } } \right)}} ,
	\end{aligned}	
\end{equation}
where the expectation of the term ${\left( {\sin {\theta _i}\cos {\theta _i}{{\cos }^2}{\theta _j} - \sin {\theta _j}\cos {\theta _j}{{\cos }^2}{\theta _i}} \right)^2}$ of the denominator in (\ref{GDoPexpressionTransformation}) can be simplified to
\vspace{0mm}
\begin{equation}\label{GDoPExpressionTransformDenominator}
	\begin{aligned}
		&{\left( {\sin {\theta _i}\cos {\theta _i}{{\cos }^2}{\theta _j} - \sin {\theta _j}\cos {\theta _j}{{\cos }^2}{\theta _i}} \right)^2}\\
		=& \frac{1}{{16}}{\left( {\sin \left( {2{\theta _i} - 2{\theta _j}} \right) + \sin 2{\theta _i} - \sin 2{\theta _j}} \right)^2} 
		\approx \frac{3}{{32}}.
		\vspace{0mm}
	\end{aligned}
\end{equation}
By substituting (\ref{GDoPExpressionTransformDenominator}) into ${\rm{E}}_{\theta}\left[ {{\rm{tr}}\left( {\tilde {\bf{F}}}_{\mathrm{A}}^{-1} \right)} \right]$, we arrive at:
\vspace{0mm}
\begin{equation}
		{\rm{E}}_{\theta}\left[ {{\rm{tr}}\left( {\tilde {\bf{F}}}_{\mathrm{A}}^{-1} \right)} \right]
		\approx  \frac{{32}}{{3N\left( {N - 1} \right)}}.
		\vspace{0mm}
\end{equation}
This thus completes the proof.

\section*{Appendix B: \textsc{Proof of Proposition \ref{SimplifiedWithDis1}}}
To facilitate the analysis, we transform ${\rm{CRLB}}_{\rm{A}}$ of (\ref{CRLBexpression}) into
\vspace{0mm}
\begin{equation}
	\begin{aligned}
		& {\rm{E}}_{\theta, d} \left[ \frac{{\sum\nolimits_{i = 1}^N {\frac{{f_i^2}}{{d_i^4}}} }}{{\left( {\sum\nolimits_{j = 1}^N {\frac{1}{{d_j^2}}} } \right)\left( { {\sum\nolimits_{i = 1}^N {\frac{{e_i^2}}{{d_i^4}}} } {\sum\nolimits_{i = 1}^N {\frac{{f_i^2}}{{d_i^4}}} }  - {{\left( {\sum\nolimits_{i = 1}^N {\frac{1}{{d_i^4}}{e_i}{f_i}} } \right)}^2}} \right)}} \right] \\
		= & {\rm{E}}_{\theta, d} \left[ \frac{{\sum\nolimits_{i = 1}^N {\frac{{f_i^2}}{{d_i^4}}} }}{{\left( {\sum\nolimits_{j = 1}^N {\frac{1}{{d_j^2}}} } \right)\left( {\sum\nolimits_{i = 1}^N {\sum\nolimits_{i > j}^N {\frac{1}{{d_i^4d_j^4}}{{\left( {{X_{i,j}}} \right)}^2}} } } \right)}} \right],
		\vspace{0mm}
	\end{aligned}
\end{equation}
where $X_{i,j} = {\sin {\theta _i}\cos {\theta _i}{{\cos }^2}{\theta _j} - \sin {\theta _j}\cos {\theta _j}{{\cos }^2}{\theta _i}}$. When the number of nodes is large, the correlation between the numerator and the denominator becomes negligible, as the distances of different nodes are statistically independent under a PPP distribution. Moreover, when the number of nodes is large, the variability of the denominator is relatively small compared to its expected value.
Utilizing the approximation in (\ref{GDoPExpressionTransformDenominator}), 
the expected CRLB can be approximated as 
\vspace{0mm}
\begin{equation}\label{SimplifiedExpressionCRLBX}
	{\rm{CRLB}}_{\rm{A}} =	\frac{16{\sum\nolimits_{i = 1}^N {{{{{\rm{E}}[d_i]}^{-\beta-2}}}} }}{{3 {\sum\nolimits_{k = 1}^N {{{{{\rm{E}}[d_k]}^{-\beta}}}} }  {\sum\nolimits_{i = 1}^N {\sum\nolimits_{i > j}^N {{{{{\rm{E}}[d_i]}^{-\beta-2}{{\rm{E}}[d_j]}^{-\beta-2}}}} } } }}.
	\vspace{0mm}
\end{equation}
This thus completes the proof.

\section*{Appendix C: \textsc{Proof of Theorem \ref{SimplifiedWithDis3}}}
When $\beta = 2$, it follows that
\vspace{0mm}
\begin{equation}
	{\rm{CRLB}}_{\rm{A}} \! \approx \! \frac{32|\zeta_a |^{-2}{\sum\nolimits_{i = 1}^N {{i^{ - 2}}} }}{{{3}{{}}{\lambda_b ^3}{\pi ^3} {\sum\nolimits_{k = 1}^N {{k^{ - 1 }}} } \left( {{{\left( {\sum\nolimits_{i = 1}^N {{i^{ -  2}}} } \right)}^2}  \!- \! \sum\nolimits_{i = 1}^N {{i^{ -4}}} }  \!\right)}} \!.
	\vspace{0mm}
\end{equation}
We have $\mathop {\lim }\limits_{N \to \infty } \sum\nolimits_{n = 1}^N \frac{1}{n} \approx \ln N + \gamma  + \frac{1}{{2N}}$ and $\mathop {\lim }\limits_{N \to \infty } \sum\nolimits_{n = 1}^N \frac{1}{n^2} \approx \frac{\pi ^2}{6}$, where $\gamma \approx 0.577$ is Euler's constant.
\begin{equation}\label{AngleApproximationEquation}
	\begin{aligned}
		{\rm{CRLB}}_{\rm{A}} 
		\approx &  \frac{{\sum\nolimits_{i = 1}^N {{i^{ - 2}}} }}{{\frac{3}{{32}}{\lambda_b ^3}{\pi ^3}\left( {\sum\nolimits_{j = 1}^N {{i^{ - 1}}} } \right)\left( {{{\left( {\sum\nolimits_{i = 1}^N {{i^{ - 2}}} } \right)}^2} - \sum\nolimits_{i = 1}^N {{i^{ - 4}}} } \right)}} \\
		\approx & \frac{{320}}{{3{\lambda_b ^3}{\pi ^5}\left( {\ln N + \gamma  + \frac{1}{{2N}}} \right)}}.
	\end{aligned}
\end{equation}
This thus completes the proof.

\section*{Appendix E: \textsc{Proof of Lemma \ref{LaplaceTransform}}}
For the Laplace transform of the interference coming from the BSs outside the cooperative region, we have
\vspace{0mm}
\begin{align}\label{CommunicationEquationExpression}
	{{\cal L}_{{I}}}(z)  = & {\rm{E}}_{\Phi_b, g_i}\big[ {\exp \big( { - z{ }\sum\nolimits_{i \in {\Phi_b}}   {{{\left\| {{{\bf{d}}_i}} \right\|}^{ - \alpha }}} {{| {{\bf{h}}_{i}^H{{\bf{W}}_i}} |}^2}} \big)} \! \big] \nonumber  \\
	\overset{(a)}{=}&  {\rm{E}}_{\Phi_b}\!\left[ \!\left( \prod _{{{{\bf{d}}_i}} \in \Phi_b \textbackslash {\cal{C}}(0,D)} {   {  {{{{\left( {1 + z{{\left\| {{{\bf{d}}_i}} \right\|}^{ - \alpha }}} \right)}^{-1}}}} } dx} \right) \bigg| D \right]  \nonumber \\
	\overset{(b)}{=}& \exp \left( { - 2\pi \lambda_b \int_{{D}}^\infty  {\left( {1 - {{{{\left( {1 + z{x^{ - \alpha }}} \right)}^{-1}}}}} \right)} xdx} \right)  \nonumber \\
	\overset{(c)}{=}& \exp \bigg(  - \left. {\pi \lambda_b y\left( {1 - {{{{\left( {1 + z{y^{ - \frac{\alpha }{2}}}} \right)}^{-1}}}}} \right)} \right|_{{D^2}}^\infty   \nonumber \\
	& - \pi \lambda_b \int_{{D^2}}^\infty  {\frac{\alpha }{2}z {y^{ - \frac{\alpha }{2}}}{{\left( {1 + z {y^{ - \frac{\alpha }{2}}}} \right)}^{ - 2}}} dy \bigg).
\end{align}
In (\ref{CommunicationEquationExpression}), ($a$) follows from the fact that the small-scale channel fading is independent of the BS locations and that the interference power imposed by each interfering BS at the typical user is distributed as $\Gamma(1,1)$. To derive ($b$), we harness the probability generating functional of a PPP with a density of $\lambda_b$. To elaborate, ($c$) comes from the variable $y = x^2$ and the distribution integral strategies.
Then, we have
\begin{equation}
	\begin{aligned}
		& \int_0^{z{r^{ - \alpha }}} \!\! {{{\left( {1 + u} \right)}^{ - K - 1}}} {u^{ - \frac{2}{\alpha }}}du  	\overset{(d)}{=}  \int_0^{\frac{{z{r^{ - \alpha }}}}{{1 + z{r^{ - \alpha }}}}} \!\! {{{\left( {1 - x} \right)}^{K - 1 + \frac{2}{\alpha }}}} {x^{ - \frac{2}{\alpha }}}dx \\
		& =  B\left( {\frac{{z{r^{ - \alpha }}}}{{1 + z{r^{ - \alpha }}}},1 - \frac{2}{\alpha },K + \frac{2}{\alpha }} \right),
	\end{aligned}
\end{equation}
where ($d$) follows from the distribution integral strategies, ${u = \frac{x}{{1 - x}}}$, and by setting $K=1$ since the interference power gain is approximately Gamma distributed as $\Gamma(1,1)$.
Similarly, the Laplace transform of useful signals can be expressed by 
\begin{equation}
	{\rm{E}}\!\left[ {{e^{ - z U}}} \right] \!=\! \exp \!\bigg( \! - \pi \lambda_b {\rm{H}}_1\left( { z p^c,M_{\mathrm{t}}-1,\alpha , D } \right) \!\bigg),
\end{equation}
where ${\rm{H}}_1\left( {x,K,\alpha ,D } \right)  = {D^2}\left( {1 - {{{{\left( {1 + z{D^{ - \alpha }}} \right)}^{-K}}}}} \right) + K{z^{\frac{2}{\alpha }}}\bar B\left( {\frac{z}{{z + {D^\alpha }}},1 - \frac{2}{\alpha },K + \frac{2}{\alpha }} \right)$.
This completes the proof.

\vspace{0mm}
\footnotesize  	
\bibliography{mybibfile}
\bibliographystyle{IEEEtran}

\end{document}